\documentclass[12pt]{article}
\usepackage{setspace}
\usepackage{amsmath, amsfonts, amssymb, amsthm}
\usepackage{bm}
\usepackage{tikz}
\usepackage{float}
\usepackage{algorithm}
\usepackage{algpseudocode}
\usepackage{url}
\usepackage{subcaption}
\usepackage{natbib}
\usepackage[normalem]{ulem}
\usepackage[margin=1in]{geometry}
\usepackage{booktabs}
\usepackage{multirow}
\usepackage{calc}
\usepackage{titlesec}
\usepackage{mathtools}
\mathtoolsset{showonlyrefs}

\newcommand{\norm}[1]{\left\lVert#1\right\rVert}
\newcommand{\Diag}{\mathrm{Diag}}
\newcommand{\conv}{\mathrm{conv}}
\newcommand{\relint}{\mathrm{relint}}
\newcommand{\tr}{\mathrm{tr}}
\newcommand{\cone}{\mathrm{cone}}
\newcommand{\aff}{\mathrm{aff}}
\newcommand{\supp}{\mathrm{supp}}
\newcommand{\diag}{\mathrm{diag}}
\usepackage[normalem]{ulem}

\newcommand{\R}{\mathbb{R}}
\newcommand{\cD}{\mathcal{D}}
\newcommand{\cF}{\mathcal{F}}
\newcommand{\cV}{\mathcal{V}}
\newcommand{\cN}{\mathcal{N}}
\newcommand{\inner}[2]{\langle #1, #2\rangle}
\DeclareMathOperator*{\argmax}{argmax}
\DeclareMathOperator*{\argmin}{argmin}
\makeatletter
\newtheoremstyle{plaindot}{}{}{\itshape}{}{\bfseries}{}{ }{\thmname{#1}\thmnumber{ #2}.\thmnote{ (#3)}}
\makeatother
\theoremstyle{plaindot}

\newtheorem{definition}{Definition}
\newtheorem{lemma}{Lemma}
\newtheorem{proposition}{Proposition}

\newtheorem{assumption}{Assumption}

\titleformat{\section}{\normalfont\normalsize\bfseries}{\thesection.}{1em}{}
\titleformat{\subsection}{\normalfont\normalsize\bfseries}{\thesubsection.}{1em}{}
\titleformat{\subsubsection}{\normalfont\small\bfseries}{\thesubsubsection.}{1em}{}
\titlespacing*{\section}{0pt}{3.5ex plus 1ex minus .2ex}{2.3ex plus .2ex}
\titlespacing*{\subsection}{0pt}{3.25ex plus 1ex minus .2ex}{1.5ex plus .2ex}
\titlespacing*{\subsubsection}{0pt}{3.25ex plus 1ex minus .2ex}{1.5ex plus .2ex}

\newcommand{\blfootnote}[1]{\begingroup\renewcommand\thefootnote{}\footnotetext{#1}\endgroup}
\newcommand{\tabnotes}[1]{\par\vspace{4pt}\begin{minipage}{0.95\linewidth}\footnotesize\emph{Notes:} #1\end{minipage}}
\makeatletter
\renewcommand\fs@ruled{\def\@fs@cfont{\bfseries}\let\@fs@capt\floatc@ruled
  \def\@fs@pre{\hrule height.8pt depth0pt \kern2pt}%
  \def\@fs@post{}%
  \def\@fs@mid{\kern2pt\hrule\kern2pt}%
  \let\@fs@iftopcapt\iftrue}
\makeatother
\newcommand{\algnotes}[1]{\par\kern2pt\hrule\relax\tabnotes{#1}}

\usepackage{xcolor}

\newcommand{\sgn}{\mathrm{sgn}}

\title{Synthetic Control:~An Exact Geometric Formulation}
\author{Alberto Abadie\\ MIT
  \and Yucheng Shang\\  MIT}
\date{\today}

\begin{document}

\maketitle
\thispagestyle{empty}
\blfootnote{\hspace*{-0.298in} Alberto Abadie: Department of Economics, Massachusetts Institute of Technology, abadie@mit.edu. Yucheng Shang: Department of Economics, Massachusetts Institute of Technology, ycshang@mit.edu. We used Claude (Anthropic), ChatGPT (OpenAI), and Refine.ink to assist with the preparation of this manuscript. We gratefully acknowledge support from the Office of Naval Research under grant N00014-24-1-2687. Replication files  are available at \url{https://github.com/lychemon/synth-AS-replication}. A Stata command \texttt{synthx}, which implements the estimator for general use, is available at \url{https://github.com/lychemon/synthx_package}.}

\vspace{-2em}
\begin{abstract}
\noindent The synthetic control method matches a treated unit to a convex combination of donor units, with the relative importance of each predictor determined by a weight matrix $V$. In standard practice, $V$ is selected through a bi-level optimization problem that is nonconvex and lacks convergence guarantees. This article derives an exact geometric solution for synthetic control estimators. For any positive definite $V$, a synthetic control is given by the projection of the treated unit's predictors onto the convex hull of the donor predictors under a $V$-weighted norm. As $V$ varies over its candidate class, these projections trace out the set of attainable synthetic controls. We characterize this set. When the predictor vector of the treated unit lies outside the convex hull of the donor predictors, the attainable set under general positive definite $V$ is the union of the facets of the donor hull that are visible from the treated unit. For diagonal $V$, a sign-compatibility test identifies the attainable set.  When the treated unit lies inside the hull, exact matches exist. We select the one that minimizes the average of the pairwise matching discrepancies between the treated unit and each of the donors contributing to the synthetic control. 
These characterizations replace bi-level optimization with a fast enumeration method that is guaranteed to deliver the global optimum over the attainable set. Moreover, the Karush-Kuhn-Tucker conditions for the selection of donor weights characterize the set of  positive definite predictor weight matrices that rationalize any given set of donor weights. Applications to the California tobacco control program and to the economic effects of German reunification illustrate the advantages of the proposed procedure over existing practice.
\end{abstract}

%------------------------------------------------------------------
\section{Introduction}\label{sec:intro}
%------------------------------------------------------------------

The synthetic control method \citep{Abadie2003Basque,Abadie2010California,Abadie2015Germany} estimates a counterfactual for a treated unit as a weighted combination of untreated donor units. In the standard formulation, the donor weights $\bm{w}$ minimize the distance between the pre-treatment predictors of the synthetic control and those of the treated unit. A predictor weight matrix $V$, chosen either by the researcher or through a data-driven procedure, defines the metric that determines the relative importance of each predictor in constructing the donor weights. 

Standard implementations choose a diagonal $V$ through a heuristic bi-level procedure \citep{Abadie2003Basque}. For each candidate predictor weight matrix, an inner loop selects donor weights to minimize the distance between the predictors of the treated unit and those of the synthetic control. An outer loop then searches for predictor weights that minimize pre-treatment prediction error on a validation set.

Although this procedure produces reasonable solutions in many applications, the optimization can be fragile. The bi-level problem is nonconvex, and its objective function is only piecewise smooth. It is, moreover, typically discontinuous at the boundary of the parameter space, where zero predictor weights reduce the effective dimension of the predictors. Bi-level gradient-based solvers may therefore converge to local minima or to solutions that discard information on the user-provided predictors.

This article develops a geometric framework for synthetic control estimators. For any positive definite predictor weight matrix $V$, the donor weights are determined by the $V$-weighted projection of the treated unit’s predictor vector onto the convex hull of the donor predictor vectors. As the predictor weights $V$ vary over a prespecified candidate set, these projections trace out a subset of the convex hull.

From this geometric perspective, the search over $V$ regularizes the choice of donor weights. Instead of searching over the full simplex of donor weights, the procedure searches only over the weights that some candidate $V$ rationalizes, the \emph{attainable set}. The geometric framework developed here characterizes this set. 

When the candidate set for $V$ contains all positive definite matrices, the attainable projections are exactly the points on the {\em visible facets} of the donor convex hull, that is, the facets oriented toward the predictor vector of the treated
unit. 
This characterization allows us to replace the bi-level
optimization problem with a single-level problem that
enumerates the visible facets and minimizes the validation loss over the donor
weights supported on each one. The minimum over this finite set is the global
minimum. When the candidate set for $V$ consists instead of positive diagonal matrices, a sign-compatibility test determines which points on the visible faces are attainable. In an application to the effects of a tobacco control program in California \citep{Abadie2010California}, the donor hull has $26{,}282$ faces. Visibility narrows the candidate set to $3{,}839$ faces, and the sign-compatibility condition for diagonal $V$ leaves only the $226$ attainable faces, fewer than one percent of the total (see Table \ref{tab:faces} in Section \ref{sec:numerical}).

This article makes four contributions. First, it characterizes the set of attainable donor weights under both diagonal and general positive definite predictor weight matrices. Second, it replaces a computationally unstable nested optimization with a combinatorial enumeration procedure guaranteed to find the global optimum. Third, it clarifies how the set of attainable donor weights differs between diagonal and positive definite predictor weight matrices.
Fourth, the Karush-Kuhn-Tucker conditions of the inner optimization characterize the set of positive definite predictor weight matrices that rationalize any set of donor weights. Two applications illustrate the advantages of the methods proposed in this article: an evaluation of a tobacco control program in California \citep{Abadie2010California} and an analysis of the effects of German reunification \citep{Abadie2015Germany}.

This article relates to several strands in the synthetic control literature. \citet{BECKER20181} document the fragility of bi-level optimization procedures for synthetic controls and propose solutions to this problem (such as discarding the donors that cannot receive positive weight under any positive definite $V$, which corresponds to our visibility restriction). Their approach, however, remains within the framework of numerical search over values of diagonal $V$ and offers no guarantees of global optimality. 
Also within the bi-level formulation, \citet{Vives2022} adds a penalty on the number of predictors to stabilize the outer problem. \citet{Pouliot2026} derive the degrees of freedom of synthetic control estimators and propose information criteria for selecting $V$. \citet{Malo2023bilevel} study a bi-level program in which ties in the inner problem are resolved in favor of the validation criterion. As a result, their program favors extremely sparse choices for the diagonal values in $V$. Zeros in the diagonal of $V$ enlarge the set of solutions to the inner problem. The tie-breaking rule then selects donor weights that lower the validation loss. However, the resulting synthetic controls need not be projections of the treated unit onto the donor hull under a positive definite $V$ or limits of such projections. In contrast, this article optimizes only over attainable projections and their limits. It also characterizes the entire solution set for $V$, extending the methods beyond the diagonal class to general positive definite weights.

A second group of estimators avoids the bi-level structure by specifying the predictor weight matrix $V$ in advance, so the donor weights solve a single convex program. \citet{Doudchenko2016} fit pre-treatment outcomes directly and relax the constraints on the donor weights. In settings with poor pre-treatment fit, \citet{FermanPinto2021} demean the outcomes and study the asymptotic bias of the resulting estimator.  
\citet{BenMichael2021} correct the synthetic control fit with an outcome model. \citet{Kellogg2021} combine synthetic control weights with nearest-neighbor matching to trade off interpolation and extrapolation biases. \citet{Arkhangelsky2021} combine unit and time weights in a weighted difference-in-differences regression. \citet{abadie2021penalized} add a penalty on the pairwise discrepancies between the treated unit and each donor, which makes the weights unique and sparse.  \citet{Cattaneo2021} and \citet{chernozhukov2021conformal} develop uncertainty quantification for this class of estimators \citep[see also][]{chernozhukov2026ttest, Cattaneo2025}. \citet{Gunsilius2023} takes the predictors to be the entire distributions of pre-treatment outcomes and matches the quantile function of the treated unit to a weighted average of the donors' quantile functions. In all these cases, the metric on the predictors is not estimated as part of the donor weight selection. This article instead retains the search over $V$ and proposes a procedure with a global optimality guarantee.

The rest of the article is organized as follows. Section~\ref{sec:setup} reviews the synthetic control framework and formalizes the predictor weight problem. Section~\ref{sec:predictor_weights} examines the common practice of restricting the predictor weights to a diagonal matrix and describes the boundary problem that motivates the geometric analysis of this article. Section~\ref{sec:geometric} develops our geometric formulation of synthetic controls for the usual case when the treated unit is outside the convex hull of the available donor units in the space of the predictors.
Section~\ref{sec:interior_results} extends the analysis to the case in which the treated unit lies inside the donor hull. 
Section~\ref{sec:numerical} illustrates the approach with an application to the California tobacco control program. Section~\ref{sec:temporal_split} considers a leakage-free validation design that separates the predictor and validation windows in time, illustrated with the German reunification application. Section~\ref{sec:conclusion} concludes.

\paragraph{Notation.} For a vector $\bm{x} \in \mathbb{R}^K$, $x_k$ denotes its $k$-th coordinate, $\bm{x}'$ its transpose, $\norm{\bm{x}}$ its Euclidean norm, and $\sgn(\bm{x})$ its componentwise sign, with values in $\{-1, 0, +1\}^K$. For a symmetric matrix $V$, $V \succ 0$ ($V \succeq 0$) indicates that $V$ is positive (semi)definite, and $\norm{\bm{x}}_V = (\bm{x}'V\bm{x})^{1/2}$ for $V \succeq 0$. $\mathbb{S}^K_{++}$ is the set of symmetric positive definite $K \times K$ matrices, $\mathbb{S}^K_{+}$ the set of symmetric positive semidefinite $K \times K$ matrices, $\Diag(\bm{d})$ the diagonal matrix with the entries of $\bm{d}$ on its diagonal, $\norm{\bm{d}}_0$ the number of nonzero entries of $\bm{d}$, $\diag(M)$ the vector of diagonal entries of a square matrix $M$, and $\tr(M)$ its trace. $\Delta^m$ denotes the unit simplex, the set of $\bm{x} \in \mathbb{R}^m$ with non-negative coordinates summing to one. $\mathbf 1$ denotes a vector of ones. $\mathbb{R}^K_+$ and $\mathbb{R}^K_{++}$ denote the vectors in $\mathbb{R}^K$ with non-negative and with strictly positive coordinates. The coordinates of a donor weight vector $\bm{w}$ are indexed by $j = 2, \ldots, J+1$. $\supp(\bm{w}) = \{j : w_j \neq 0\}$ is the support of $\bm{w}$. For a set $A \subseteq \mathbb{R}^K$, $\conv(A)$, $\cone(A)$, and $\aff(A)$ denote its convex, conical, and affine hulls, and $\mathrm{Int}(A)$, $\relint(A)$, $\overline A$, and $\partial A$ its interior, relative interior, closure, and boundary. A \emph{face} of a polytope is its intersection with a supporting hyperplane. \emph{Facets}, \emph{ridges}, and \emph{vertices} are the faces of dimension one less than the polytope, two less, and zero, respectively.

%------------------------------------------------------------------
\section{Synthetic Control and Predictor Weights}\label{sec:setup}
%------------------------------------------------------------------

The synthetic control method estimates the causal effect of an intervention by constructing a weighted average of untreated units that approximates the characteristics of the treated unit. This section reviews the standard setup and introduces the predictor weight problem that motivates our geometric formulation.

\subsection{Setup}\label{sec:setup_notation}

Consider a panel of $J+1$ units observed over $T$ periods. Unit 1 is exposed to some intervention beginning in period $T_0 + 1$, while the other $J$ units (the \emph{donor pool}) remain untreated throughout. 

For each unit $j$ and time $t$, the potential outcome without the intervention is $Y_{jt}^N$. For the treated unit and $t > T_0$, the potential outcome under the intervention is $Y_{1t}^I$. The observed outcome data take the form
\begin{align*}
 \begin{pmatrix}
        Y_{11}^N & Y_{12}^N & \cdots & \textcolor{blue}{Y_{1(T_0+1)}^{I}} & \cdots & \textcolor{blue}{Y_{1T}^{I}}\\
        Y_{21}^N & Y_{22}^N & \cdots & Y_{2(T_0+1)}^N & \cdots & Y_{2T}^N \\
        \vdots & \vdots & \ddots & \vdots & \ddots & \vdots \\
        Y_{(J+1)1}^N & Y_{(J+1)2}^N & \cdots & Y_{(J+1)(T_0+1)}^N & \cdots & Y_{(J+1)T}^N
    \end{pmatrix}.
\end{align*}
The parameters of interest are the treatment effects on the treated unit at times $t> T_0$, 
\begin{equation}\label{eq:estimand}
    \tau_{1t} = Y_{1t}^I - Y_{1t}^N, \quad t = T_0+1, \ldots, T.
\end{equation}
Because $Y_{1t}^I$ is observed for $t > T_0$, estimation of $\tau_{1t}$ comes down to estimation of the counterfactual outcome $Y_{1t}^N$ for $t > T_0$.

A $K$-vector of predictors $X_j = (X_{j1}, \ldots, X_{jK})'$ summarizes unit $j$'s characteristics relevant for forecasting untreated outcomes. These predictors may include pre-treatment outcome values, covariates, or transformations thereof. The $K \times J$ matrix $X_0 = (X_2, \ldots, X_{J+1})$ collects the predictor vectors of all donor units. We make the following assumption about the predictor vectors.

\begin{assumption}[General position]\label{ass:gp}
(i) $X_1 \notin \conv(\{X_2, \ldots, X_{J+1}\})$.  (ii) For every nonempty subset $S$ of predictors, any $|S|+1$ or fewer donor predictor vectors, restricted to the predictors in $S$, are affinely independent. (iii) $J\geq K+1$.
\end{assumption}

Part (i) states that the predictor vector of the treated unit cannot be reproduced exactly by a convex combination of donor predictor vectors, which is the case of primary interest in applications \citep{Abadie2021using}. Section \ref{sec:interior_results} relaxes this assumption. 
Part (ii) implies that every face of the convex hull of the donor predictors 
is a simplex containing no donor points other than its vertices. As a result, each boundary point of the hull has a unique representation as a convex combination of units in the donor pool. 
Parts (ii) and (iii) imply that the convex hull of the donor predictor vectors is full-dimensional in $\R^K$.

A selected subset of pre-treatment outcomes $Z_j = (Y_{jt_1}, \ldots, Y_{jt_L})'$, where $\{t_1, \ldots, t_L\} \subseteq \{1, \ldots, T_0\}$, serves as a validation sample. The $L \times J$ matrix $Z_0 = (Z_2, \ldots, Z_{J+1})$ stacks these validation outcomes across donors, and $Z_1 = (Y_{1t_1}, \ldots, Y_{1t_L})'$ collects the same outcomes for the treated unit.

\paragraph{The Synthetic Control Estimator}

The synthetic control estimator approximates the missing counterfactual $Y_{1t}^N$ by a convex combination of donor outcomes, 
\begin{equation}\label{eq:estimator}
    \widehat{\tau}_{1t} = Y_{1t} - \sum_{j=2}^{J+1} w_j Y_{jt}, \quad t = T_0+1, \ldots, T,
\end{equation}
where the donor weights belong to the unit simplex
\begin{equation}\label{eq:simplex}
    \Delta^J = \left\{ \bm{w} \in \mathbb{R}^J : w_j \geq 0 \text{ for all } j = 2, \ldots, J+1, \;\; \sum_{j=2}^{J+1} w_j = 1 \right\}.
\end{equation}
The nonnegativity and summing-to-one constraints prevent extrapolation and ensure the synthetic control is a weighted average of untreated units.

\subsection{The Predictor Weight}\label{sec:predictor_weight}

Because the donor weights $\bm{w}$ are selected to minimize the discrepancy between the predictor vectors of the treated unit and those of the synthetic control, synthetic control estimation requires specifying the relative importance assigned to each predictor. Given a $K \times K$ positive definite matrix $V$, a synthetic control estimator selects donor weights to solve:
\begin{align}\label{eq:inner}
    \mathcal{W}(V) &=\argmin_{\bm{w} \in \Delta^J} \norm{X_1 - X_0 \bm{w}}_V^2\nonumber\\ &= \argmin_{\bm{w} \in \Delta^J} (X_1 - X_0\bm{w})' V (X_1 - X_0\bm{w}).
\end{align}
Under Assumption~\ref{ass:gp}, $\mathcal{W}(V)$ is a singleton for every positive definite $V$, and we then write $\bm{w}(V)$ for its unique element. For diagonal $V$, a large entry $V_{kk}$ favors close matching on the $k$-th predictor while a small entry allows for a greater discrepancy. 

The choice of $V$ can reflect a subjective assessment of the predictors' relative importance, but often $V$ is selected from the data through the bi-level procedure,
\begin{equation}\label{eq:outer}
 V \in \argmin_{V \in \mathcal{V}}\min_{\bm{w} \in \mathcal{W}(V)} \norm{Z_1 - Z_0 \bm{w}}^2,
\end{equation}
where $\mathcal{V}$ is a prespecified set of candidate weighting matrices (commonly, diagonal with non-negative entries that sum to one), and $\mathcal{W}(V)$ is the set of solutions to the inner optimization problem in equation \eqref{eq:inner}.  
The validation error $\norm{Z_1 - Z_0 \bm{w}}^2$ measures the quality of the synthetic control fit on the validation set.

The inner optimization~\eqref{eq:inner} projects the treated unit's predictor vector $X_1$ onto the convex hull $\conv(\{X_2, \ldots, X_{J+1}\})$ using the norm induced by $V$. Figure~\ref{fig:scm} depicts the geometry of the problem in two dimensions. The shaded region represents the convex hull of the donor predictor values. The treated unit's predictor $X_1$ lies outside this hull, and its projection onto the hull is $X_0\bm{w}$. Different choices of $V$ project $X_1$ onto different points of the boundary of the convex hull. From this standpoint, the predictor weight problem reduces to selecting, among all projections attainable through candidate $V$, the one that best predicts the validation outcomes.

\begin{figure}[t]
    \centering
    \caption{Geometry of the synthetic control problem}
    \label{fig:scm}
    \begin{tikzpicture}[scale=1.5, thick]

    \coordinate (X_6) at (-0.2,0.1);
    \coordinate (X_2) at (1.5,-1.5);
    \coordinate (X_3) at (1,0);
    \coordinate (X_4) at (-0.5,1);
    \coordinate (X_5) at (0.8,-0.3);

    \filldraw[fill=blue!10, draw=blue!60, thick]
        (X_6) -- (X_2) -- (X_3) -- (X_4) -- cycle;

    \draw[black] (-1,-2) -- (-1, 1.5);
    \draw[black] (-1,-2) -- (2,-2);

    \filldraw[blue] (X_6) circle (1.2pt);
    \filldraw[blue] (X_2) circle (1.2pt);
    \filldraw[blue] (X_3) circle (1.2pt);
    \filldraw[blue] (X_4) circle (1.2pt);
    \filldraw[blue] (X_5) circle (1.2pt);

    \foreach \p/\pos in {X_6/below left, X_2/below right, X_3/below right, X_4/above left}
      \node[\pos] at (\p) {$\p$};

    \coordinate (P) at (1.2,0.9);
    \filldraw[red] (P) ++(0,2.2pt) -- ++(2.2pt,-2.2pt) -- ++(-2.2pt,-2.2pt) -- ++(-2.2pt,2.2pt) -- cycle;
    \node[right] at (P) {$X_1$};

    \coordinate (Y_1) at (0.72, 0.18);
    \node[below left] at (Y_1) {\footnotesize $X_0\bm{w}$};
    \filldraw[orange] (Y_1) ++(-1.5pt,-1.5pt) rectangle ++(3pt,3pt);

    \draw[dashed, gray] (P) -- (Y_1);

    \filldraw[blue] (X_5) circle (1.2pt);
    \node[below] at (X_5) {$X_5$};

    \end{tikzpicture}
    \tabnotes{The shaded region is the convex hull of the donor predictor values $X_2, \ldots, X_6$. $X_0\bm{w}$ is the $V$-weighted projection of $X_1$ onto the hull.}
\end{figure}

The next section examines the case of diagonal $V$ and discusses the difficulties that arise when the outer optimization drives $V$ to the boundary of the parameter space.

%------------------------------------------------------------------
\section{Bi-level Optimization with Diagonal $\boldsymbol{V}$}\label{sec:predictor_weights}
%------------------------------------------------------------------

\subsection{Diagonal $\boldsymbol{V}$ as Axis Rescaling}\label{sec:diagonal}

A common candidate set for predictor weights consists of all diagonal matrices with non-negative entries that sum to one:
\begin{equation}\label{eq:diagonal_feasible}
    \mathcal{V}_\Delta = \left\{  \Diag(\bm{d}) : \bm{d} \in \Delta^K \right\},
\end{equation}
with $\Delta^K = \{\bm{d} \in \mathbb{R}^K : d_k \geq 0, \, \sum_{k=1}^K d_k = 1\}$.
For diagonal $V$, the $V$-weighted norm is
\begin{align}\label{eq:diagonal_norm}
    \norm{X_1 - X_0\bm{w}}_V^2 &= (X_1 - X_0\bm{w})' V (X_1 - X_0\bm{w})\nonumber\\&= \sum_{k=1}^K d_k (X_{1k} - X_{0k}\bm{w})^2\nonumber\\&= \norm{\sqrt{V}(X_1 - X_0\bm{w})}^2,
\end{align}
where $\sqrt{V} = \Diag(\sqrt{d_1}, \ldots, \sqrt{d_K})$. The transformation $\sqrt{V}$ rescales the coordinate axes. A large $d_k$ stretches the $k$-th axis, magnifying discrepancies in the $k$-th predictor and forcing the inner optimization to match that predictor closely. A small $d_k$ compresses the $k$-th axis, diminishing its influence on the donor weights. 

The outer problem~\eqref{eq:outer} searches over the $(K-1)$-simplex of diagonal weights to minimize validation error.

\subsection{The Boundary Problem}\label{sec:boundary}

The solution to the inner problem can change discontinuously when the outer optimization sets one or more diagonal entries of $V$ to zero. A zero $d_k$ eliminates the $k$-th predictor from the inner optimization, reducing the dimension of the predictors. This dimension reduction removes an entire coordinate axis, projecting the problem into a lower-dimensional subspace where the geometry of the convex hull can change substantially.

Figure~\ref{fig:projection} illustrates this problem. Both panels show the same donor predictors $\{X_2,\dots,X_6\}$ and their convex hull, along with the treated unit $X_1$ and its $V$-weighted projection $X_0\bm{w}$ onto that hull. The open markers along the bottom and left axes represent the one-dimensional collapses of $X_1, \ldots, X_6$ obtained by sending a diagonal entry to zero. The choice $d_2 = 0$ retains only the horizontal coordinate and projects every point onto the bottom axis, while $d_1 = 0$ retains only the vertical coordinate and projects onto the left axis. The two panels differ only in the location of $X_1$. In panel~\ref{fig:projection-a}, $X_1$ lies beyond the donors in both coordinates, so its one-dimensional projections fall onto the extreme donors $X_2$ and $X_4$. The donor weights $\bm{w}(V)$ vary continuously with $V$ as one of its diagonal entries reaches zero. By contrast, in panel~\ref{fig:projection-b} $X_1$ lies within the donors' one-dimensional ranges. For every $V \succ 0$ the projection falls on the edge $\overline{X_3 X_4}$ or $\overline{X_2 X_3}$, leaving other units with zero weights. However, when one diagonal entry of $V$ is set to zero, $X_1$ lands in the interior of the donors' range, where synthetic controls can fit it exactly (though not uniquely). Hence, $\mathcal{W}(V)$ can expand discontinuously as a diagonal entry reaches zero. Figure \ref{fig:feasible} depicts the set of attainable projections that result from the predictor configurations in Figures~\ref{fig:projection-a} and~\ref{fig:projection-b}.

\begin{figure}[t]
    \centering
    \caption{Dimension collapse under boundary predictor weights}
    \label{fig:projection}
    \begin{subfigure}[b]{0.48\textwidth}
    \centering
    \begin{tikzpicture}[scale=1.5, thick]

    \coordinate (X_6) at (-0.2,0.1);
    \coordinate (X_2) at (1.5,-1.5);
    \coordinate (X_3) at (1,0);
    \coordinate (X_4) at (-0.5,1);
    \coordinate (X_5) at (0.8,-0.3);

    \coordinate (XX_6) at (-0.2,-2);
    \coordinate (XX_2) at (1.5,-2);
    \coordinate (XX_3) at (1,-2);
    \coordinate (XX_4) at (-0.5,-2);
    \coordinate (XX_5) at (0.8,-2);

    \coordinate (XY_6) at (-1,0.1);
    \coordinate (XY_2) at (-1,-1.5);
    \coordinate (XY_3) at (-1,0);
    \coordinate (XY_4) at (-1,1);
    \coordinate (XY_5) at (-1,-0.3);

    \filldraw[fill=blue!10, draw=blue!60, thick]
        (X_6) -- (X_2) -- (X_3) -- (X_4) -- cycle;

    \filldraw[blue] (X_6) circle (1.2pt);
    \filldraw[blue] (X_2) circle (1.2pt);
    \filldraw[blue] (X_3) circle (1.2pt);
    \filldraw[blue] (X_4) circle (1.2pt);
    \filldraw[blue] (X_5) circle (1.2pt);

    \draw[blue] (XX_6) circle (1.2pt);
    \draw[blue] (XX_2) circle (1.2pt);
    \draw[blue] (XX_3) circle (1.2pt);
    \draw[blue] (XX_4) circle (1.2pt);
    \draw[blue] (XX_5) circle (1.2pt);

    \draw[blue] (XY_6) circle (1.2pt);
    \draw[blue] (XY_2) circle (1.2pt);
    \draw[blue] (XY_3) circle (1.2pt);
    \draw[blue] (XY_4) circle (1.2pt);
    \draw[blue] (XY_5) circle (1.2pt);

    \draw[black] (-1,-2) -- (-1, 1.9);
    \draw[black] (-1,-2) -- (2,-2);

    \foreach \p/\pos in {X_6/below left, X_2/below right, X_3/below right, X_4/above left}
      \node[\pos] at (\p) {$\p$};

    % X_1 sits on the Euclidean normal to the edge X_3X_4 through X_0 w = (0.72,0.18): (0.72,0.18) + (1,1.5)
    \coordinate (P) at (1.72,1.68);
    \coordinate (PX) at (1.72,-2);
    \coordinate (PY) at (-1,1.68);

    % X_1 drawn as a filled diamond, its axis projections as open diamonds (shape-coded for black-and-white legibility)
    \filldraw[red] (P) ++(0,2.2pt) -- ++(2.2pt,-2.2pt) -- ++(-2.2pt,-2.2pt) -- ++(-2.2pt,2.2pt) -- cycle;
    \draw[red] (PX) ++(0,2.2pt) -- ++(2.2pt,-2.2pt) -- ++(-2.2pt,-2.2pt) -- ++(-2.2pt,2.2pt) -- cycle;
    \draw[red] (PY) ++(0,2.2pt) -- ++(2.2pt,-2.2pt) -- ++(-2.2pt,-2.2pt) -- ++(-2.2pt,2.2pt) -- cycle;
    \node[right] at (P) {$X_1$};

    \coordinate (Y_1) at (0.72, 0.18);
    \node[below left] at (Y_1) {\footnotesize $X_0\bm{w}$};
    % projection X_0 w drawn as a filled square
    \filldraw[orange] (Y_1) ++(-1.5pt,-1.5pt) rectangle ++(3pt,3pt);

    \draw[dashed, gray] (P) -- (Y_1);
    \draw[dashed, gray] (P) -- (PX);
    \draw[dashed, gray] (P) -- (PY);

    \filldraw[blue] (X_5) circle (1.2pt);
    \node[below] at (X_5) {$X_5$};

    \end{tikzpicture}
    \caption{}
    \label{fig:projection-a}
    \end{subfigure}
    \hspace*{-1cm}
    \begin{subfigure}[b]{0.48\textwidth}
    \centering
    \begin{tikzpicture}[scale=1.5, thick]

    \coordinate (X_6) at (-0.2,0.1);
    \coordinate (X_2) at (1.5,-1.5);
    \coordinate (X_3) at (1,0);
    \coordinate (X_4) at (-0.5,1);
    \coordinate (X_5) at (0.8,-0.3);

    \coordinate (XX_6) at (-0.2,-2);
    \coordinate (XX_2) at (1.5,-2);
    \coordinate (XX_3) at (1,-2);
    \coordinate (XX_4) at (-0.5,-2);
    \coordinate (XX_5) at (0.8,-2);

    \coordinate (XY_6) at (-1,0.1);
    \coordinate (XY_2) at (-1,-1.5);
    \coordinate (XY_3) at (-1,0);
    \coordinate (XY_4) at (-1,1);
    \coordinate (XY_5) at (-1,-0.3);

    \filldraw[fill=blue!10, draw=blue!60, thick]
        (X_6) -- (X_2) -- (X_3) -- (X_4) -- cycle;

    \filldraw[blue] (X_6) circle (1.2pt);
    \filldraw[blue] (X_2) circle (1.2pt);
    \filldraw[blue] (X_3) circle (1.2pt);
    \filldraw[blue] (X_4) circle (1.2pt);
    \filldraw[blue] (X_5) circle (1.2pt);

    \draw[blue] (XX_6) circle (1.2pt);
    \draw[blue] (XX_2) circle (1.2pt);
    \draw[blue] (XX_3) circle (1.2pt);
    \draw[blue] (XX_4) circle (1.2pt);
    \draw[blue] (XX_5) circle (1.2pt);

    \draw[blue] (XY_6) circle (1.2pt);
    \draw[blue] (XY_2) circle (1.2pt);
    \draw[blue] (XY_3) circle (1.2pt);
    \draw[blue] (XY_4) circle (1.2pt);
    \draw[blue] (XY_5) circle (1.2pt);

    \draw[black] (-1,-2) -- (-1, 1.9);
    \draw[black] (-1,-2) -- (2,-2);

    \foreach \p/\pos in {X_6/below left, X_2/below right, X_3/below right, X_4/above left}
      \node[\pos] at (\p) {$\p$};

    \coordinate (P) at (1.2,0.9);
    \coordinate (PX) at (1.2,-2);
    \coordinate (PY) at (-1,0.9);

    % X_1 drawn as a filled diamond, its axis projections as open diamonds (shape-coded for black-and-white legibility)
    \filldraw[red] (P) ++(0,2.2pt) -- ++(2.2pt,-2.2pt) -- ++(-2.2pt,-2.2pt) -- ++(-2.2pt,2.2pt) -- cycle;
    \draw[red] (PX) ++(0,2.2pt) -- ++(2.2pt,-2.2pt) -- ++(-2.2pt,-2.2pt) -- ++(-2.2pt,2.2pt) -- cycle;
    \draw[red] (PY) ++(0,2.2pt) -- ++(2.2pt,-2.2pt) -- ++(-2.2pt,-2.2pt) -- ++(-2.2pt,2.2pt) -- cycle;
    \node[right] at (P) {$X_1$};

    \coordinate (Y_1) at (0.72, 0.18);
    \node[below left] at (Y_1) {\footnotesize $X_0\bm{w}$};
    % projection X_0 w drawn as a filled square
    \filldraw[orange] (Y_1) ++(-1.5pt,-1.5pt) rectangle ++(3pt,3pt);

    \draw[dashed, gray] (P) -- (Y_1);
    \draw[dashed, gray] (P) -- (PX);
    \draw[dashed, gray] (P) -- (PY);

    \filldraw[blue] (X_5) circle (1.2pt);
    \node[below] at (X_5) {$X_5$};

    \end{tikzpicture}
    \caption{}
    \label{fig:projection-b}
    \end{subfigure}
    \tabnotes{Both panels show the same donor predictor vectors, their convex hull, the treated unit, and its projection onto the hull. The open markers on the axes are the one-dimensional projections of the donors and the treated unit on the surviving axis when one of the diagonal entries of $V$ is set to zero. 
    }
\end{figure}

\begin{figure}[t]
    \centering
    \caption{Attainable projections under diagonal predictor weights}
    \label{fig:feasible}
    \begin{subfigure}[b]{0.48\textwidth}
    \centering
    \begin{tikzpicture}[font=\fontfamily{qag}\selectfont,scale=1.5,thick]

    % Draw the polygon
    \coordinate (X_6) at (-0.2,0.1);
    \coordinate (X_2) at (1.5,-1.5);
    \coordinate (X_3) at (1,0);
    \coordinate (X_4) at (-0.5,1);
    \coordinate (X_5) at (0.8,-0.3);

    \filldraw[fill=blue!10, draw=blue!60, thick]
        (X_6) -- (X_2) -- (X_3) -- (X_4) -- cycle;

    \draw[black] (-1,-2) -- (-1, 1.9);
    \draw[black] (-1,-2) -- (2,-2);

    % Highlight the visible facets X_2 X_3 and X_3 X_4
    \draw[purple, ultra thick] (X_2) -- (X_3);
    \draw[purple, ultra thick] (X_3) -- (X_4);

    % Draw and label vertices
    \foreach \p/\pos in {X_6/below left, X_2/below right, X_3/below right, X_4/above left}
    \node[\pos] at (\p) {$\p$};
    \node[below] at (X_5) {$X_5$};

    % Draw an external point (same X_1 as Figure 2(a): on the Euclidean normal through (0.72,0.18))
    \coordinate (P) at (1.72,1.68);
    \node[right] at (P) {$X_1$};

    % Projections of the donors and X_1 onto the two axes
    \coordinate (XX_6) at (-0.2,-2);
    \coordinate (XX_2) at (1.5,-2);
    \coordinate (XX_3) at (1,-2);
    \coordinate (XX_4) at (-0.5,-2);
    \coordinate (XX_5) at (0.8,-2);

    \coordinate (XY_6) at (-1,0.1);
    \coordinate (XY_2) at (-1,-1.5);
    \coordinate (XY_3) at (-1,0);
    \coordinate (XY_4) at (-1,1);
    \coordinate (XY_5) at (-1,-0.3);

    \coordinate (PX) at (1.72,-2);
    \coordinate (PY) at (-1,1.68);

    \draw[dashed, gray] (P) -- (PX);
    \draw[dashed, gray] (P) -- (PY);

    \draw[blue] (XX_6) circle (1.2pt);
    \draw[blue] (XX_2) circle (1.2pt);
    \draw[blue] (XX_3) circle (1.2pt);
    \draw[blue] (XX_4) circle (1.2pt);
    \draw[blue] (XX_5) circle (1.2pt);

    \draw[blue] (XY_6) circle (1.2pt);
    \draw[blue] (XY_2) circle (1.2pt);
    \draw[blue] (XY_3) circle (1.2pt);
    \draw[blue] (XY_4) circle (1.2pt);
    \draw[blue] (XY_5) circle (1.2pt);

    % X_1 axis projections drawn as open diamonds
    \draw[red] (PX) ++(0,2.2pt) -- ++(2.2pt,-2.2pt) -- ++(-2.2pt,-2.2pt) -- ++(-2.2pt,2.2pt) -- cycle;
    \draw[red] (PY) ++(0,2.2pt) -- ++(2.2pt,-2.2pt) -- ++(-2.2pt,-2.2pt) -- ++(-2.2pt,2.2pt) -- cycle;

    \filldraw[blue] (X_6) circle (1.2pt);
    \filldraw[blue] (X_2) circle (1.2pt);
    \filldraw[blue] (X_3) circle (1.2pt);
    \filldraw[blue] (X_4) circle (1.2pt);
    \filldraw[blue] (X_5) circle (1.2pt);
    % X_1 drawn as a filled diamond (shape-coded for black-and-white legibility)
    \filldraw[red] (P) ++(0,2.2pt) -- ++(2.2pt,-2.2pt) -- ++(-2.2pt,-2.2pt) -- ++(-2.2pt,2.2pt) -- cycle;

    \end{tikzpicture}
    \caption{}
    \label{fig:feasible-a}
    \end{subfigure}
    \hspace*{-1cm}
    \begin{subfigure}[b]{0.48\textwidth}
    \centering
    \begin{tikzpicture}[font=\fontfamily{qag}\selectfont,scale=1.5,thick]

    % Draw the polygon
    \coordinate (X_6) at (-0.2,0.1);
    \coordinate (X_2) at (1.5,-1.5);
    \coordinate (X_3) at (1,0);
    \coordinate (X_4) at (-0.5,1);
    \coordinate (X_5) at (0.8,-0.3);
    
    \filldraw[fill=blue!10, draw=blue!60, thick] 
        (X_6) -- (X_2) -- (X_3) -- (X_4) -- cycle;
    
    % Draw and label vertices
    \foreach \p/\pos in {X_6/below left, X_2/below right, X_3/below right, X_4/above left}
    \node[\pos] at (\p) {$\p$};
    \node[below] at (X_5) {$X_5$};
    
    % Draw an external point
    \coordinate (P) at (1.2,0.9);
    \node[right] at (P) {$X_1$};

    \draw[black] (-1,-2) -- (-1, 1.9);
    \draw[black] (-1,-2) -- (2,-2);

    \coordinate (Z_2) at (1.2,-0.6);
    \coordinate (Z_4) at (-0.36,0.9);

    \draw[dashed, gray] (P) -- (Z_2);
    \draw[dashed, gray] (P) -- (Z_4);

    \draw[purple, ultra thick] (X_3) -- (Z_4);
    \draw[purple, ultra thick] (X_3) -- (Z_2);
    \draw[purple, ultra thick] (-0.46,0.9) -- (Z_4);
    \draw[purple, ultra thick] (1.2,-1.23) -- (Z_2);

    % Projections of the donors and X_1 onto the two axes
    \coordinate (XX_6) at (-0.2,-2);
    \coordinate (XX_2) at (1.5,-2);
    \coordinate (XX_3) at (1,-2);
    \coordinate (XX_4) at (-0.5,-2);
    \coordinate (XX_5) at (0.8,-2);

    \coordinate (XY_6) at (-1,0.1);
    \coordinate (XY_2) at (-1,-1.5);
    \coordinate (XY_3) at (-1,0);
    \coordinate (XY_4) at (-1,1);
    \coordinate (XY_5) at (-1,-0.3);

    \coordinate (PX) at (1.2,-2);
    \coordinate (PY) at (-1,0.9);

    \draw[dashed, gray] (P) -- (PX);
    \draw[dashed, gray] (P) -- (PY);

    \draw[blue] (XX_6) circle (1.2pt);
    \draw[blue] (XX_2) circle (1.2pt);
    \draw[blue] (XX_3) circle (1.2pt);
    \draw[blue] (XX_4) circle (1.2pt);
    \draw[blue] (XX_5) circle (1.2pt);

    \draw[blue] (XY_6) circle (1.2pt);
    \draw[blue] (XY_2) circle (1.2pt);
    \draw[blue] (XY_3) circle (1.2pt);
    \draw[blue] (XY_4) circle (1.2pt);
    \draw[blue] (XY_5) circle (1.2pt);

    % X_1 axis projections drawn as open diamonds
    \draw[red] (PX) ++(0,2.2pt) -- ++(2.2pt,-2.2pt) -- ++(-2.2pt,-2.2pt) -- ++(-2.2pt,2.2pt) -- cycle;
    \draw[red] (PY) ++(0,2.2pt) -- ++(2.2pt,-2.2pt) -- ++(-2.2pt,-2.2pt) -- ++(-2.2pt,2.2pt) -- cycle;

    \filldraw[blue] (X_6) circle (1.2pt);
    \filldraw[blue] (X_2) circle (1.2pt);
    \filldraw[blue] (X_3) circle (1.2pt);
    \filldraw[blue] (X_4) circle (1.2pt);
    \filldraw[blue] (X_5) circle (1.2pt);
    % X_1 drawn as a filled diamond (shape-coded for black-and-white legibility)
    \filldraw[red] (P) ++(0,2.2pt) -- ++(2.2pt,-2.2pt) -- ++(-2.2pt,-2.2pt) -- ++(-2.2pt,2.2pt) -- cycle;
    \filldraw[blue] (X_5) circle (1.2pt);

    \end{tikzpicture}
    \caption{}
    \label{fig:feasible-b}
    \end{subfigure}
    \tabnotes{The thick segments indicate the projections of the treated unit attainable under diagonal $V$, for the two configurations of Figure~\ref{fig:projection}. In panel (b), the short thick segments inside the hull are the solutions with one of the diagonal entries of $V$ set to zero.}
\end{figure}

This discontinuity creates a problem for the outer optimization. Zero entries in the diagonal of $V$ can expand the set of synthetic controls that minimize the inner objective. The outer optimization selects among them the one with the smallest validation error, which need not be the projection of the treated unit onto the donor hull under a positive definite $V$ or a limit of such projections. Moreover, the resulting synthetic control ignores predictors with zero weight entirely.

%------------------------------------------------------------------
\section{A Geometric Reformulation}\label{sec:geometric}
%------------------------------------------------------------------

This section recasts the predictor weight problem, shifting the focus to the set of donor weights attainable under a class of $V$ matrices. Section~\ref{sec:attainable} defines this attainable set and expresses the resulting optimization problem as regularized selection of  donor weights. Section~\ref{sec:characterization} develops a general criterion for membership in the attainable set, drawing on the convex-geometric notions of visibility and normal cones. Sections~\ref{sec:diag_application}-\ref{sec:spd_application} apply the criterion to the two canonical candidate sets: diagonal predictor weights and general symmetric positive definite predictor weights. Section~\ref{sec:kkt} shows how to recover a rationalizing $V$ for any attainable set of donor weights.

\subsection{Single-Level Optimization Over the Attainable Set of Donor Weights}\label{sec:attainable}

Fix a candidate set $\mathcal{V}$ of predictor weight matrices (e.g.,~diagonal or positive definite matrices) and define
\begin{equation}\label{eq:W_full}
    \mathcal{W}(\mathcal{V}) = \bigcup_{V \in \mathcal{V}} \mathcal{W}(V),
\end{equation}
which is the set of donor weights that arise as inner-optimizers for some $V \in \mathcal{V}$. We call $\mathcal{W}(\mathcal{V})$ the \emph{attainable set} of donor weights under $\mathcal{V}$. The bi-level problem~\eqref{eq:outer} is then equivalent to the single-level optimization
\begin{equation}\label{eq:single_level}
    \min_{\bm{w} \in \mathcal{W}(\mathcal{V})} \norm{Z_1 - Z_0\bm{w}}^2.
\end{equation}
This equivalence helps interpret and address problems with current optimizers for synthetic controls. The outer search over $V$ encodes a restriction of the donor weight vector to the subset $\mathcal{W}(\mathcal{V}) \subseteq \Delta^J$.

Minimization of the validation loss $\norm{Z_1 - Z_0\bm{w}}^2$ over the full simplex $\Delta^J$ risks overfitting the pre-treatment validation outcomes. The estimator in \eqref{eq:single_level} addresses the risk of overfitting by constraining $\bm{w}$ to lie in $\mathcal{W}(\mathcal{V})$, the smaller subset of donor weights that can be rationalized by some candidate predictor weight matrix. Differences across candidate sets $\mathcal{V}$ translate into differences in the geometry of the regularized set $\mathcal{W}(\mathcal{V})$. In this sense, $\mathcal V$ specifies a form of regularization.

We focus on three natural candidate sets in what follows. The set of diagonal matrices $\mathcal{V}_\Delta$
includes boundary points where one or more diagonal entries vanish. This is the closed class over which the standard bi-level procedure of Section~\ref{sec:predictor_weights} optimizes. The open analog,
\begin{equation}\label{eq:V_plus}
    \mathcal{V}_+ = \{\Diag(\bm{d}) : \bm{d} \in \Delta^K_+\}, \qquad \Delta^K_+ = \{\bm{d} \in \mathbb{R}^K : d_k > 0, \, \textstyle\sum_k d_k = 1\},
\end{equation}
restricts to strictly positive diagonal entries. Every $V \in \mathcal{V}_+$ is positive definite, so under Assumption~\ref{ass:gp} the inner problem has a unique solution. $\mathcal{V}_\Delta$ is the closure of $\mathcal{V}_+$, that is, $\mathcal{V}_\Delta = \overline{\mathcal{V}_+}$. Finally, the set of symmetric positive definite matrices
\begin{equation}\label{eq:V_spd}
    \mathbb{S}^K_{++} = \{V \in \mathbb{R}^{K \times K} : V = V', \; V \succ 0\}
\end{equation}
extends the candidate set from diagonal rescalings to arbitrary full-rank linear transformations of the predictor space.

With $\mathcal V = \mathcal V_\Delta$, the bi-level program~\eqref{eq:outer} selects, among the donor weights attainable for some $V$ in the class, those that best fit the validation outcomes. The correspondence $V \mapsto \mathcal{W}(V)$ can jump and become set-valued at the boundary of $\mathcal{V}_\Delta$, where a vanishing diagonal entry deletes a predictor and enlarges the set of donor weights available to the validation criterion. The search over $\mathcal{V}_\Delta$ therefore rewards rank deficiency and can return donor weights that match the treated unit on a single predictor and ignore the others \citep{Malo2023bilevel}.

A minimum over the open class $\mathcal{V}_+$ need not exist. The bi-level procedure closes the class of metrics, optimizing over $\mathcal V_\Delta$, at the cost of the boundary problem. We instead compute
\begin{equation}\label{eq:W0}
    \overline{\mathcal{W}(\mathcal{V}_+)} = \overline{\bigcup_{V \in \mathcal{V}_+} \mathcal{W}(V)},
\end{equation}
the closure of the attainable sets under strictly positive diagonal matrices, and solve the single-level problem
\begin{equation}\label{eq:single_level_W0}
    \min_{\bm{w} \in \overline{\mathcal{W}(\mathcal{V}_+)}} \norm{Z_1 - Z_0\bm{w}}^2.
\end{equation}
A solution to \eqref{eq:single_level_W0} that ignores a predictor is a limit of inner solutions under a sequence of positive diagonal matrices $V$ whose weight on that predictor converges to zero.

The standard bi-level procedure searches over the closed class $\overline{\mathcal{V}_+}$ in the outer problem, so it selects the donor weights from $\mathcal{W}(\overline{\mathcal{V}_+}) = \bigcup_{V \in \overline{\mathcal{V}_+}} \mathcal{W}(V)$. For diagonal $V$, the procedure proposed in this article instead selects the donor weights from $\overline{\mathcal{W}(\mathcal{V}_+)}$, the closure of $\mathcal{W}(\mathcal{V}_+)$. The closure of the metrics and the closure of the weights agree when $V \mapsto \mathcal{W}(V)$ is continuous, but the boundary problem of Section~\ref{sec:boundary} is a failure of continuity. For a set $\mathcal{W}$ of donor weights, let $X_0\mathcal{W} = \{X_0 \bm{w} : \bm{w} \in \mathcal{W}\}$. The set $X_0\overline{\mathcal{W}(\mathcal{V}_+)}$ retains only the limits of projections under strictly positive diagonal $V$. In contrast, $X_0\mathcal{W}(\overline{\mathcal{V}_+})$ may include synthetic controls that are not such limits, as in Figure~\ref{fig:feasible-b}.

 The next proposition shows that the closed class $\mathcal{V}_\Delta$ admits extra donor weights relative to $\overline{\mathcal{W}(\mathcal{V}_{+})}$. The additional weights come from the matrices with a single positive diagonal entry, $\mathcal{V}_{1} = \{\Diag(\bm{d}) : \bm{d} \in \Delta^K,\; \|\bm{d}\|_0 = 1\}$. The increment, $X_0\mathcal{W}(\mathcal{V}_1)$, is the set of synthetic
controls that either match $X_1$ exactly on a single predictor $k$, leaving
the remaining predictors unrestricted, or, when $X_{1k}$ falls outside the
range of the donor values of predictor $k$, assign all weight to the donor
with the most extreme value of that predictor on the side of $X_{1k}$.

\begin{proposition}\label{prop:general_diag}
$X_0\mathcal{W}(\mathcal{V}_{\Delta}) = X_{0}\overline{\mathcal{W}(\mathcal{V}_{+})} \cup X_{0}\mathcal{W}(\mathcal{V}_{1})$.
\end{proposition}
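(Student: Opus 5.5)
The plan is to prove the two inclusions separately, using only the continuity structure of the inner problem and the general position assumption applied to coordinate subsets.

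For the inclusion $\supseteq$, note first that $\mathcal{V}_1 \subseteq \mathcal{V}_\Delta$, so $X_0\mathcal{W}(\mathcal{V}_1) \subseteq X_0\mathcal{W}(\mathcal{V}_\Delta)$ is immediate. For the closure term, take $\bm{w} \in \overline{\mathcal{W}(\mathcal{V}_+)}$. Then $\bm{w} = \lim_n \bm{w}_n$ with $\bm{w}_n = \bm{w}(\Diag(\bm{d}_n))$ and $\bm{d}_n \in \Delta^K_+$. Since $\Delta^K$ is compact, we can pass to a subsequence with $\bm{d}_n \to \bm{d} \in \Delta^K$. The objective $(\bm{w},\bm{d}) \mapsto \sum_k d_k (X_{1k}-X_{0k}\bm{w})^2$ is jointly continuous, and the feasible set $\Delta^J$ is fixed and compact. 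Berge's maximum theorem therefore makes $\bm{d} \mapsto \mathcal{W}(\Diag(\bm{d}))$ upper hemicontinuous with closed graph, so $\bm{w} \in \mathcal{W}(\Diag(\bm{d})) \subseteq \mathcal{W}(\mathcal{V}_\Delta)$.

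For the inclusion $\subseteq$, fix $V = \Diag(\bm{d})$ with $\bm{d} \in \Delta^K$ and let $S = \supp(\bm{d})$. If $|S| = K$, then $V \in \mathcal{V}_+$; if $|S| = 1$, then $V \in \mathcal{V}_1$. The substantive case is $2 \le |S| \le K-1$. Write $X^S_j$ for the predictors in $S$. The inner problem is then the projection of $X_1^S$ onto $P_S = \conv(\{X^S_2,\dots,X^S_{J+1}\})$ under the positive definite metric $\Diag(\bm{d}_S)$. Assumption~\ref{ass:gp}(ii)--(iii) applies to $S$, so $P_S$ is full-dimensional in $\R^{|S|}$, its faces are simplices containing no other donor points, and every boundary point of $P_S$ has a unique representation as a convex combination of donors. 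I would split into two subcases.

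\emph{Subcase (a): $X_1^S \notin P_S$.} The projection $p_S$ is unique and lies on $\partial P_S$, so by the uniqueness of representation $\mathcal{W}(V) = \{\bm{w}^*\}$ is a singleton. Now set $\bm{d}_\varepsilon \propto \bm{d} + \varepsilon \mathbf{1}_{S^c}$, renormalized to lie in $\Delta^K_+$. Every limit point of $\bm{w}(\Diag(\bm{d}_\varepsilon))$ as $\varepsilon \downarrow 0$ lies in $\mathcal{W}(V)$ by the closed-graph property above, so $\bm{w}(\Diag(\bm{d}_\varepsilon)) \to \bm{w}^*$. Hence $\bm{w}^* \in \overline{\mathcal{W}(\mathcal{V}_+)}$.

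\emph{Subcase (b): $X_1^S \in P_S$.} Here $\mathcal{W}(V) = \{\bm{w} \in \Delta^J : X_{0k}\bm{w} = X_{1k} \ \forall k \in S\}$, which may be a large set, and limits of positive-diagonal solutions recover only part of it. This is where the closure term alone fails, and the plan is to absorb the whole set into $\mathcal{W}(\mathcal{V}_1)$ instead. Pick any $k \in S$. Since $X_1^S \in P_S$, the value $X_{1k}$ lies in the range of donor values of predictor $k$. Hence $\mathcal{W}(\Diag(\bm{e}_k))$ is exactly the set of $\bm{w} \in \Delta^J$ with $X_{0k}\bm{w} = X_{1k}$, and this contains $\mathcal{W}(V)$. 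Therefore $X_0\mathcal{W}(V) \subseteq X_0\mathcal{W}(\mathcal{V}_1)$.

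I expect the main obstacle to be subcase (a): the limiting argument requires that $\mathcal{W}(V)$ be a singleton, which in turn requires that the restricted projection land on the boundary of a full-dimensional hull where representations are unique. I would check this carefully using Assumption~\ref{ass:gp}(ii) applied to $S$, together with $J \ge K+1 \ge |S|+1$.
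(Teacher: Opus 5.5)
Your proposal is correct and follows the paper's proof essentially step by step. The $\supseteq$ direction uses Berge's maximum theorem, giving a closed graph of $V\mapsto\mathcal{W}(V)$ on the compact set $\mathcal{V}_\Delta$. The $\subseteq$ direction splits on the support $S$ of $\bm d$: when $X_1^S\notin P_S$, uniqueness of the representation of boundary points (Assumption~\ref{ass:gp}(ii) restricted to $S$) makes $\mathcal{W}(V)$ a singleton, hence a limit of solutions under strictly positive diagonal metrics; when $X_1^S\in P_S$, $\mathcal{W}(V)$ is absorbed into $\mathcal{W}(\Diag(\bm e_k))$ for any $k\in S$. The only difference is cosmetic: the paper splits on whether $\mathcal{W}(V)$ is a singleton and obtains $P_SX_1\in\mathcal{P}_S$ by contradiction, which is the contrapositive of your subcase (a).
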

All proofs are in the appendix.

A formulation analogous to \eqref{eq:single_level_W0} for general positive definite matrices, $V$, is
\begin{equation}\label{eq:single_level_Wplus}
    \min_{\bm{w} \in \mathcal{W}(\mathbb{S}^K_{++})} \norm{Z_1 - Z_0\bm{w}}^2, 
\end{equation}
where 
\begin{equation}\label{eq:W_plus}
    \mathcal{W}(\mathbb{S}^K_{++}) = \bigcup_{V \in \mathbb{S}^K_{++}} \mathcal{W}(V). 
\end{equation}
Proposition ~\ref{prop:visible_facets} in Section \ref{sec:spd_application} shows that $\mathcal{W}(\mathbb{S}^K_{++})$ is a closed set.

Because $\mathcal{V}_+ \subset \mathbb{S}^K_{++}$, the inclusion $\overline{\mathcal{W}(\mathcal{V}_+)} \subseteq \mathcal{W}(\mathbb{S}^K_{++})$ holds, and the general positive definite formulation can only enlarge the regularized set of attainable donor weights. The remainder of Section~\ref{sec:geometric} characterizes both $\overline{\mathcal{W}(\mathcal{V}_+)}$ and $\mathcal{W}(\mathbb{S}^K_{++})$ in terms of the facial structure of the donor convex hull.

\subsection{Characterization of the Attainable Set}\label{sec:characterization}

A donor weight vector $\bm{w}$ belongs to the attainable set if and only if $X_0\bm{w}$ can be obtained as a $V$-weighted projection of $X_1$ on $\mathcal P =\conv(\{X_2, \ldots, X_{J+1}\})$ for some $V \in \mathcal{V}$. 
Geometric visibility restricts which portions of the boundary of the donor hull can receive a projection of the treated point $X_1$ under a positive definite matrix $V$.

\begin{definition}[Visibility]\label{def:visibility}
    For a convex set $\mathcal{P} \subset \mathbb{R}^K$ and a point $X_1 \notin \mathcal{P}$, a point $Y \in \mathcal{P}$ is \emph{visible from $X_1$} if the open line segment connecting $X_1$ and $Y$ does not intersect $\mathcal{P}$:
    \[
        \left\{ \lambda X_1 + (1-\lambda) Y : \lambda \in (0,1) \right\} \cap \mathcal{P} = \varnothing.
    \]
    A face $F$ of a polytope $\mathcal{P}$ is \emph{visible from $X_1$} if every point in $F$ is visible from $X_1$.
\end{definition}

A facet $F$ is visible from $X_1$ when $X_1$ belongs to the outer side of the supporting hyperplane of $F$.

\begin{lemma}\label{lem:visibility_facet}
    Let $F$ be a facet of $\mathcal{P}$ contained in the hyperplane $\{Y : \bm{a}'Y + b = 0\}$, where $\bm{a}$ is an outward normal vector of $\mathcal{P}$ at $F$. Then $F$ is visible from $X_1$ if and only if $\bm{a}'X_1 + b > 0$.
\end{lemma}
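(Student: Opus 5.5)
The plan is to work directly with the affine function $h(Y) = \bm{a}'Y + b$. Because $\bm{a}$ is an outward normal at $F$ and $F$ is a facet, we have $h(Y)\le 0$ on all of $\mathcal{P}$, with $h = 0$ exactly on the supporting hyperplane, whose intersection with $\mathcal P$ is $F$. The key identity is that for $Y \in F$ and $\lambda\in(0,1)$,
\[
h\bigl(\lambda X_1 + (1-\lambda)Y\bigr) = \lambda h(X_1) + (1-\lambda)h(Y) = \lambda h(X_1).
\]
So the sign of $h(X_1)$ controls which side of the hyperplane the whole open segment lies on.

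\emph{Sufficiency.} Suppose $h(X_1)>0$. Then every point of the open segment from $X_1$ to any $Y\in F$ has $h>0$, so it lies outside $\mathcal P$. Hence every $Y\in F$ is visible, and $F$ is visible from $X_1$.

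\emph{Necessity.} We argue by contraposition. There are two cases.

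First, suppose $h(X_1)=0$. Then $X_1$ lies in $H=\aff(F)$, which is $(K-1)$-dimensional because $F$ is a facet of the full-dimensional hull. Since $X_1\notin\mathcal P$, $X_1\notin F$. Take $Y\in\relint(F)$, where $F$ is relatively open within $H$. The points $\lambda X_1+(1-\lambda)Y$ lie in $H$ and tend to $Y$ as $\lambda\downarrow 0$. So for small $\lambda>0$ they lie in $F\subseteq\mathcal P$, and $Y$ is not visible.

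Second, suppose $h(X_1)<0$. Again pick $Y\in\relint(F)$. We need a point of the open segment near $Y$ to lie in $\mathcal P$. Use the facet description $\mathcal P=\{Z: h_i(Z)\le 0\}$. Let $Y_\lambda=\lambda X_1+(1-\lambda)Y$. Since $Y\in\relint(F)$, the constraint $h$ is the only facet inequality active at $Y$. This uses that $F$ is a facet: a point in the relative interior of a facet lies on no other facet hyperplane. All other inequalities are therefore strict at $Y$, and by continuity they remain strict at $Y_\lambda$ for small $\lambda$. Meanwhile $h(Y_\lambda)=\lambda h(X_1)<0$. So $Y_\lambda\in\mathcal P$ for small $\lambda\in(0,1)$, and $Y$ is not visible. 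In either case $F$ is not visible.

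The main obstacle is the second case. It needs the fact that at a relative-interior point of a facet only that facet's inequality is active. This follows from the polytope's H-representation, which is available since $\mathcal P$ is full-dimensional under Assumption~\ref{ass:gp}. An alternative that avoids the H-representation is to take $Y\in\relint(F)$, pick an interior point $C$ of $\mathcal P$, and show that $Y_\lambda$ lies in the convex hull of $C$ and a small relative neighborhood of $Y$ in $F$. The H-representation route is cleaner, so I would use it.
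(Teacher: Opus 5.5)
Your proof is correct and follows the paper's argument. Sufficiency comes from the identity $h(\lambda X_1+(1-\lambda)Y)=\lambda h(X_1)$, and necessity comes from taking $Y\in\relint(F)$, where only the inequality of $F$ is active, and stepping a short distance toward $X_1$. The paper treats $h(X_1)\le 0$ in one case, and your second-case argument already covers $h(X_1)=0$: then $h(Y_\lambda)=0\le 0$ while the other inequalities stay strict. Your separate first case is therefore valid but unnecessary.
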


Visibility restricts where attainable projections can land. A complete characterization of attainability requires the notion of normal cone, which describes projections onto convex sets.

\begin{definition}[Normal Cone]\label{def:normal_cone}
    For a closed convex set $\mathcal{C} \subset \mathbb{R}^K$ and a point $Y \in  \mathcal{C}$, the \emph{normal cone} of $\mathcal{C}$ at $Y$ is
    \[
        N_{\mathcal{C}}(Y) = \left\{ \bm{v} \in \mathbb{R}^K : \langle \bm{v}, Z - Y \rangle \leq 0 \text{ for all } Z \in \mathcal{C} \right\}.
    \]
\end{definition}

For a full-dimensional polytope, the normal cone at any point is generated by the outward
normal vectors of the facets whose supporting hyperplanes pass through that
point. Concretely, let
\[
    \mathcal{P} = \{Y : \bm{a}_i'Y + b_i \le 0,\ i = 1,\dots,m\},
\]
where $\bm{a}_i$ is an
outward-pointing normal. For $Y \in \mathcal{P}$, let
\(
    I(Y) = \{i : \bm{a}_i'Y + b_i = 0\}
\)
denote the set of constraints active at $Y$. Then
\begin{equation}\label{eq:normal_cone_face}
    N_{\mathcal{P}}(Y) = \cone\!\left(\{\bm{a}_i\}_{i \in I(Y)}\right)
    = \Big\{ \textstyle\sum_{i \in I(Y)} \lambda_i \bm{a}_i : \lambda_i \ge 0 \Big\}.
\end{equation}
In particular, $N_{\mathcal{P}}(Y) = \{0\}$ when $Y \in \mathrm{Int}(\mathcal{P})$, and $N_{\mathcal{P}}(Y)$ is constant for all $Y$ on the relative interior of a face $F$.

The following standard projection criterion links the projection map to the normal cone \citep[Proposition 6.47]{BauschkeCombettes2017}.

\begin{lemma}\label{lem:projection_normal}
    Let $\pi_{\mathcal{C}}(X)$ denote the Euclidean projection of $X$ onto a closed convex set $\mathcal{C}$. Then $Y = \pi_{\mathcal{C}}(X)$ if and only if $X - Y \in N_{\mathcal{C}}(Y)$.
\end{lemma}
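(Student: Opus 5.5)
The statement is the standard characterization of the Euclidean projection onto a closed convex set $\mathcal{C}$: $Y = \pi_{\mathcal{C}}(X)$ if and only if $X - Y \in N_{\mathcal{C}}(Y)$. Spelled out through Definition~\ref{def:normal_cone}, this is the variational inequality $\langle X - Y, Z - Y\rangle \le 0$ for all $Z \in \mathcal{C}$, with $Y \in \mathcal{C}$. The plan is to first record that the projection exists and is unique, and then prove the two implications separately by comparing squared distances $\norm{X - Z}^2$ along segments in $\mathcal{C}$. Existence follows because $Z \mapsto \norm{X-Z}^2$ is continuous and coercive on the closed set $\mathcal{C}$, so a minimizer exists. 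Uniqueness follows from strict convexity of the squared norm together with convexity of $\mathcal{C}$: if $Y_1 \neq Y_2$ were both minimizers, their midpoint would be in $\mathcal{C}$ and strictly closer to $X$.

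For necessity, assume $Y = \pi_{\mathcal{C}}(X)$ and fix any $Z \in \mathcal{C}$. By convexity, $Y_t = Y + t(Z - Y)$ lies in $\mathcal{C}$ for $t \in [0,1]$, so $\phi(t) = \norm{X - Y_t}^2$ attains its minimum on $[0,1]$ at $t = 0$. Expanding gives
\[
\phi(t) = \norm{X-Y}^2 - 2t\langle X - Y, Z - Y\rangle + t^2 \norm{Z-Y}^2 .
\]
Hence $\phi(t) - \phi(0) \ge 0$ for small $t>0$. Dividing by $t$ and letting $t \downarrow 0$ yields $\langle X - Y, Z - Y\rangle \le 0$. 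Since $Z$ was arbitrary, $X - Y \in N_{\mathcal{C}}(Y)$.

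For sufficiency, assume $Y \in \mathcal{C}$ and $\langle X - Y, Z - Y\rangle \le 0$ for every $Z \in \mathcal{C}$. Writing $X - Z = (X - Y) + (Y - Z)$ gives
\[
\norm{X - Z}^2 = \norm{X-Y}^2 - 2\langle X-Y, Z-Y\rangle + \norm{Z-Y}^2 \ge \norm{X-Y}^2 + \norm{Z-Y}^2 .
\]
So $Y$ minimizes the distance to $X$ over $\mathcal{C}$, with strict inequality whenever $Z \neq Y$, which again confirms uniqueness. Therefore $Y = \pi_{\mathcal{C}}(X)$.

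No step is a genuine obstacle. The only delicate point is the one-sided limit in the necessity direction, where one must restrict to $t \in (0,1]$ so that $Y_t$ stays in $\mathcal{C}$. The lemma will be used with the $V$-weighted norm. To transfer it, I would apply it to the image set $V^{1/2}\mathcal{C}$ and undo the change of variables, which turns the condition into $V(X - Y) \in N_{\mathcal{C}}(Y)$.
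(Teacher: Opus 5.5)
Your proof is correct. The paper gives no proof of its own; it cites Bauschke and Combettes (2017, Proposition 6.47), and the argument behind that result (the variational-inequality characterization of the projection) is the same segment-expansion argument you give, so your approach matches the standard one the paper relies on.
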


A full-rank matrix $T$ of $\mathbb{R}^K$ transforms normal cones as follows. Every point in $T\mathcal{P}$ is $TZ$ for a unique $Z \in \mathcal{P}$, so $\bm{v} \in N_{T\mathcal{P}}(TY)$ if and only if $\inner{\bm{v}}{TZ - TY} = \inner{T'\bm{v}}{Z - Y} \leq 0$ for all $Z \in \mathcal{P}$, that is, if and only if $T'\bm{v} \in N_{\mathcal{P}}(Y)$. Therefore,
\begin{equation}\label{eq:normal_cone_transform}
    N_{T\mathcal{P}}(TY) = (T')^{-1} N_{\mathcal{P}}(Y).
\end{equation}
Let $V^{1/2}$ be the positive definite square root of $V$. A point $Y \in \mathcal{P}$ is the $V$-weighted projection of $X_1$ onto $\mathcal{P}$ (that is, $Y = X_0\bm{w}(V)$) if and only if $V^{1/2}Y$ is the Euclidean projection of $V^{1/2}X_1$ onto $V^{1/2}\mathcal{P}$. By Lemma~\ref{lem:projection_normal} and~\eqref{eq:normal_cone_transform},  this holds if and only if $V^{1/2}(X_1 - Y) \in V^{-1/2}N_{\mathcal{P}}(Y)$, or equivalently, if and only if
\begin{equation}\label{eq:master}
    V(X_1 - Y) \in N_{\mathcal{P}}(Y).
\end{equation}

The next lemma shows that, for any positive definite matrix $V$, the projection of $X_1$ onto $\mathcal{P}$ with respect to $\norm{\cdot}_{V}$ lands on a facet that is visible from $X_1$.

\begin{lemma}\label{lem:visible_projection}
    For any $V \in \mathbb{S}^K_{++}$, the point $X_0\bm{w}(V)$ lies on a facet of $\mathcal{P}$ that is visible from $X_1$.
\end{lemma}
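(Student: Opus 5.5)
Let $Y = X_0\bm{w}(V)$ and $\bm{v} = V(X_1 - Y)$. The plan is to use the optimality condition \eqref{eq:master} to locate a facet through $Y$ and then verify its visibility with Lemma~\ref{lem:visibility_facet}.

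First, $Y$ lies on the boundary of $\mathcal{P}$ and $\bm{v} \neq 0$. By Assumption~\ref{ass:gp}(i), $X_1 \notin \mathcal{P}$, so $X_1 - Y \neq 0$, and since $V$ is nonsingular, $\bm{v} \neq 0$. By \eqref{eq:master}, $\bm{v} \in N_{\mathcal{P}}(Y)$. Because $N_{\mathcal{P}}(Y) = \{0\}$ at interior points, $Y \in \partial\mathcal{P}$.

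Second, write $\bm{v}$ in terms of facet normals. By Assumption~\ref{ass:gp}(ii)--(iii), $\mathcal{P}$ is a full-dimensional polytope, so it has the representation $\{Y : \bm{a}_i'Y + b_i \le 0\}$ and \eqref{eq:normal_cone_face} applies. This gives
\[
\bm{v} = \sum_{i \in I(Y)} \lambda_i \bm{a}_i, \qquad \lambda_i \ge 0,
\]
with at least one $\lambda_i > 0$, since $\bm{v} \neq 0$. Each $i \in I(Y)$ corresponds to a facet $F_i$ that contains $Y$, because $\bm{a}_i'Y + b_i = 0$.

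Third, show that some such facet is visible. For each active $i$ we have $\bm{a}_i'X_1 + b_i = \bm{a}_i'(X_1 - Y)$. Summing with the weights $\lambda_i$ gives
\[
\sum_{i \in I(Y)} \lambda_i(\bm{a}_i'X_1 + b_i) = \bm{v}'(X_1 - Y) = (X_1 - Y)'V(X_1 - Y) > 0,
\]
where the strict inequality holds because $V \succ 0$ and $X_1 \neq Y$. Since all $\lambda_i \ge 0$, at least one index $i$ with $\lambda_i > 0$ must satisfy $\bm{a}_i'X_1 + b_i > 0$. By Lemma~\ref{lem:visibility_facet}, the corresponding facet $F_i$ is visible from $X_1$, and it contains $Y$. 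This completes the argument.

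The main obstacle is the sign step in the third paragraph. A given active facet need not have $X_1$ on its outer side, and the positive-definiteness of $V$ enters only through the aggregate inner product. That is why the argument produces existence of a visible facet rather than visibility of every facet through $Y$. A secondary point to state carefully is that \eqref{eq:normal_cone_face} requires an irredundant facet description, which holds because $\mathcal{P}$ is full-dimensional.
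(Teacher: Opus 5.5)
Your argument is correct and matches the paper's proof: you write the normal vector at $Y$ as a nonnegative combination of active facet normals, pair it with the residual $X_1-Y$ to get a strictly positive sum, and conclude via Lemma~\ref{lem:visibility_facet} that some active facet through $Y$ has $X_1$ on its outer side. The only difference is that you treat general $V$ directly, applying \eqref{eq:master} to $V(X_1-Y)$ and using $(X_1-Y)'V(X_1-Y)>0$, whereas the paper proves the case $V=I_K$ and transfers it through the change of coordinates $V^{1/2}$, so your version does not need the fact that linear bijections preserve visibility.
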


Lemma~\ref{lem:visible_projection} restricts attainable projections to visible facets. Every vertex of a visible facet is visible from $X_1$, which allows us to identify the visible donors and discard the rest, as in \citet{BECKER20181}. In Lemma~\ref{lem:visible_projection}, however, visibility is only a necessary condition for attainability. The next two subsections determine which points of the visible facets are attainable under diagonal and under general positive definite candidate sets.

\subsection{Application to Positive Diagonal $\boldsymbol{V}$}\label{sec:diag_application}

We next show that, for the diagonal class $V = \Diag(\bm{d})$ with $\bm{d} > 0$, \eqref{eq:master} yields a sign-compatibility test for attainability. Let $F$ be a visible face of $\mathcal{P}$, let $Y \in \relint(F)$, and $I = I(Y)$ be the set of facets containing $F$. By~\eqref{eq:master} and~\eqref{eq:normal_cone_face}, $Y = X_0\bm{w}(V)$ if and only if
\begin{equation}\label{eq:diagonal_criterion}
    \Diag(\bm{d})(X_1 - Y) \in \cone\left(\{\bm{a}_i\}_{i \in I}\right).
\end{equation}
Each $d_k > 0$ rescales the $k$-th component of $X_1 - Y$ and preserves its sign. Therefore, there exists a positive diagonal $V$ that attains $Y$ if and only if the normal cone contains a vector with the same signs as $X_1 - Y$. This amounts to finding non-negative coefficients $\{\alpha_i\}_{i \in I}$ and positive scalars $\{d_k\}_{k=1}^K$ such that
\begin{equation}\label{eq:diag_lp}
    d_k(X_{1k} - Y_k) = \sum_{i \in I} \alpha_i a_{ik}, \qquad k = 1, \ldots, K.
\end{equation}

Because the signs of $X_1 - Y$ can change as $Y$ moves within $\relint(F)$, a face may contain attainable projections on only part of its relative interior. For a facet, the signs of $X_1 - Y$ must match those of its normal. For a face of lower dimension, the normal cone may contain vectors with several sign patterns, and each defines a separate attainable region. The donor weights of these regions form finitely many pieces of $\mathcal{W}(\mathcal{V}_+)$, so the closure in~\eqref{eq:W0} equals the union of the closures of the pieces.

Minimization of validation loss over the closure of each piece is a convex quadratic program. The global minimum of~\eqref{eq:single_level_W0} is the smallest of these piece minima. 

\subsection{Application to General Positive Definite $\boldsymbol{V}$}\label{sec:spd_application}

So far, we have restricted attention to diagonal weighting matrices $V \in \mathcal{V}_+$, which rescale the predictors but leave the coordinate axes fixed. A natural extension is to allow arbitrary positive definite weights $V \in \mathbb{S}^K_{++}$. General positive definite $V$ allows the inner optimization to match linear combinations of predictors, not just individual predictors along their original axes. The attainable set under $\mathbb{S}^K_{++}$ has a simple geometric characterization.

\begin{proposition}\label{prop:visible_facets}
    Under Assumption~\ref{ass:gp}, $\mathcal{W}(\mathbb{S}^K_{++})$ is closed and equal to the set of $\bm{w} \in \Delta^J$ such that $X_0\bm{w}$ lies on a facet of $\mathcal{P}$ visible from $X_1$. 
\end{proposition}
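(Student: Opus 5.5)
The proof shows two inclusions and then obtains closedness from the explicit description.

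For the inclusion $\subseteq$, Lemma~\ref{lem:visible_projection} applies directly. For any $V \in \mathbb{S}^K_{++}$, the unique inner solution $\bm{w}(V)$ (unique by Assumption~\ref{ass:gp}) satisfies that $X_0\bm{w}(V)$ lies on a facet of $\mathcal{P}$ visible from $X_1$. So every attainable $\bm{w}$ belongs to the set $\mathcal{A}$ of $\bm{w}\in\Delta^J$ with $X_0\bm{w}$ on a visible facet.

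For the inclusion $\supseteq$, fix $\bm{w} \in \mathcal{A}$. Let $Y = X_0\bm{w}$, and let $F$ be a visible facet containing $Y$, with outward normal $\bm{a}$ and offset $b$. By Lemma~\ref{lem:visibility_facet}, $\bm{a}'X_1 + b > 0$. Since $\bm{a}'Y + b = 0$, this gives $\bm{a}'(X_1 - Y) > 0$. In particular $\bm{u} := X_1 - Y \neq 0$ and $\inner{\bm{a}}{\bm{u}} > 0$. By~\eqref{eq:normal_cone_face}, $\bm{a} \in N_{\mathcal{P}}(Y)$, because the constraint of $F$ is active at $Y$. By the criterion~\eqref{eq:master}, it then suffices to construct $V \in \mathbb{S}^K_{++}$ with $V\bm{u} = \bm{a}$. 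A standard choice is
\[
V = \frac{\bm{a}\bm{a}'}{\bm{a}'\bm{u}} + \Big(I - \frac{\bm{u}\bm{u}'}{\bm{u}'\bm{u}}\Big)\Big(I - \frac{\bm{u}\bm{u}'}{\bm{u}'\bm{u}}\Big)'.
\]
This gives $V\bm{u} = \bm{a}$. The matrix is symmetric and positive semidefinite. It is also definite: if $\bm{x}'V\bm{x}=0$, then $\bm{x}$ is parallel to $\bm{u}$ and $\bm{a}'\bm{x}=0$, which forces $\bm{x}=0$ since $\bm{a}'\bm{u}>0$. Hence $Y = X_0\bm{w}(V)$.

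One subtlety remains: the statement is about weights $\bm{w}$, not points $Y$. Assumption~\ref{ass:gp}(ii) gives that every boundary point of $\mathcal{P}$ has a unique representation as a convex combination of donors. Therefore $\bm{w}$ is the unique preimage of $Y$, and $\bm{w} = \bm{w}(V)$, so $\bm{w} \in \mathcal{W}(\mathbb{S}^K_{++})$. I expect this uniqueness point, together with making sure a point lying on several facets is handled (one visible facet containing it suffices), to be the only delicate step.

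Closedness follows from the characterization. $\mathcal{A}$ is a finite union, over the visible facets $F$, of the sets $\{\bm{w}\in\Delta^J : X_0\bm{w}\in F\}$. Each such set is the preimage of a closed polytope under the continuous linear map $X_0$, intersected with the compact set $\Delta^J$, so it is closed. A finite union of closed sets is closed.
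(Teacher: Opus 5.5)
Your proof is correct and follows the paper's argument essentially step for step. You use Lemma~\ref{lem:visible_projection} for the forward inclusion. For the reverse inclusion you construct $V$ with $V\bm u=\bm a$ and invoke the uniqueness of boundary representations under Assumption~\ref{ass:gp}(ii). Closedness comes as a finite union of closed preimages. Your matrix is the same as the paper's, because $I-\bm u\bm u'/\bm u'\bm u$ is symmetric and idempotent, so its product with its transpose is the projector itself.
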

Recall that, under diagonal $V$, every projection lies on a visible facet, but not every point of a visible facet is attainable. Proposition~\ref{prop:visible_facets} says that, when $V$ ranges over $\mathbb{S}^K_{++}$, every point on a visible facet is an attainable projection. 

The global minimum of~\eqref{eq:single_level_Wplus} is therefore the smallest of the minima of the validation loss over the visible facets. 

Because the union of the visible facets is closed, $\overline{\mathcal{W}(\mathcal{V}_+)} \subseteq \mathcal{W}(\mathbb{S}^K_{++})$. The enlarged attainable set allows a closer match to the validation outcomes but increases the risk of overfitting.

\subsection{Recovering $\boldsymbol{V}$ from Optimal Weights}\label{sec:kkt}

The procedures of Section~\ref{sec:diag_application} and \ref{sec:spd_application} identify optimal donor weights $\bm{w}^*$ over classes of predictor weight matrices without choosing a $V$ in the class. Applied researchers may also wish to report $V$ alongside the donor weights. We say that a matrix $V$  rationalizes $\bm{w}^*$ if $\bm{w}^* = \bm{w}(V)$. This subsection describes the Karush-Kuhn-Tucker (KKT) constraints any rationalizing $V$ must satisfy, establishes the convexity of the resulting feasible set, and discusses selection criteria.

\subsubsection{KKT Constraints for the Inner Optimization}\label{sec:kkt_constraints}

The inner optimization~\eqref{eq:inner} is a convex quadratic program over the simplex $\Delta^J$. Its Karush-Kuhn-Tucker conditions provide necessary and sufficient conditions for optimality. For $\bm{w}^* \in \Delta^J$ to minimize $\norm{X_1 - X_0\bm{w}}_V^2$ over $\Delta^J$, there must exist $\bm{\lambda} \in \mathbb{R}^J_+$ and $\mu \in \mathbb{R}$ satisfying
\begin{align}
    X_0' V (X_0\bm{w}^* - X_1) - \bm{\lambda} + \mu \mathbf{1} &= 0, \label{eq:kkt_stationarity}\\
    \bm{\lambda} \odot \bm{w}^* &= 0, \label{eq:kkt_complementarity}\\
    \bm{\lambda} &\geq 0, \label{eq:kkt_feasibility}
\end{align}
where $\odot$ denotes the Hadamard product. Divide the donors into the \emph{active} donors in the support $\supp(\bm{w}^*) = \{j : w_j^* > 0\}$ and the \emph{inactive} donors outside it,
and fix a reference active donor $j_0 \in \supp(\bm{w}^*)$. For any other active donor, $j \in \supp(\bm{w}^*) \setminus \{j_0\}$, we have
\begin{equation}\label{eq:kkt_active_diff}
    (X_j - X_{j_0})' V (X_0\bm{w}^* - X_1) = 0.
\end{equation}
Equation \eqref{eq:kkt_active_diff} imposes $|\supp(\bm{w}^*)| - 1$ independent constraints. (The functionals $V \mapsto (X_j - X_{j_0})'V(X_0\bm{w}^* - X_1)$ are linearly independent because the vectors $\{X_j - X_{j_0}\}_{j \in \supp(\bm{w}^*)\setminus\{j_0\}}$ are linearly independent by Assumption~\ref{ass:gp}(ii).) For each inactive donor $j \notin \supp(\bm{w}^*)$, we have
\begin{equation}\label{eq:kkt_inactive_reduced}
    (X_j - X_{j_0})' V (X_0\bm{w}^* - X_1) \geq 0.
\end{equation}

Each constraint in \eqref{eq:kkt_active_diff} and \eqref{eq:kkt_inactive_reduced} is linear in the entries of $V$. The set of positive semidefinite matrices, $\mathbb{S}^K_+$, is closed and convex. Therefore, the feasible set
\begin{equation}\label{eq:V_feasible}
    \mathcal{F}(\bm{w}^*) = \left\{ V \in \mathbb{S}^K_{+} : \eqref{eq:kkt_active_diff} \text{ and } \eqref{eq:kkt_inactive_reduced} \text{ hold} \right\}
\end{equation}
is a convex set. As a result, any convex objective can be minimized over $\mathcal{F}(\bm{w}^*)$ by standard optimization methods. For every $V$ in $\mathcal{F}(\bm{w}^*)$, $\bm w^*$ belongs to the set of solutions of the inner problem. Under Assumption~\ref{ass:gp}, $\bm w^*$ is the unique solution for every positive definite member of $\mathcal{F}(\bm{w}^*)$, so $\mathcal{F}(\bm{w}^*)\cap\mathbb{S}^K_{++}$ is the set of matrices that rationalize $\bm w^*$.

\subsubsection{Selection Criteria}\label{sec:selection}

The feasible set $\mathcal{F}(\bm{w}^*)$ generically contains infinitely many
positive definite matrices, all of which rationalize the same donor weights. After trace
normalization, the equalities~\eqref{eq:kkt_active_diff} leave an identified
set of dimension $K(K+1)/2 - |\supp(\bm{w}^*)|$ in the full positive
definite class, and $K - |\supp(\bm{w}^*)|$ in the diagonal class. As a result, a unique
normalized rationalizer exists generically only for diagonal $V$ when the
donor weights are supported on every vertex of a facet.

The selection of a particular $V$ requires an additional criterion that expresses a preference over the set of predictor weightings. When the goal is to save
the predictor weights for future use, as in \citet{Abadie2015Germany}, it is
usually preferable to choose a $V$ that gives positive weight to every predictor. We consider two criteria: log-determinant maximization and the analytic center.

\paragraph{Log-determinant maximization.}
The first criterion maximizes the log-determinant of $V$:
\begin{equation}\label{eq:V_logdet}
    \max_{V \in \mathcal{F}(\bm{w}^*)} \log\det(V) \quad \text{s.t.} \quad \tr(V) = 1.
\end{equation}

The function $\log\det$ is concave on $\mathbb{S}^K_{++}$ \citep[Section 3.1.5]{BoydVandenberghe2004}, so the maximization in~\eqref{eq:V_logdet} is a convex problem. As any eigenvalue approaches zero, $\log\det(V) \to -\infty$, so the maximizer keeps $V$ positive definite whenever feasible. The unconstrained maximizer is $V = I/K$.

\begin{proposition}\label{prop:dense_V}
If $\bm{w}^* \in \mathcal{W}(\mathbb{S}^K_{++})$, then the log-determinant problem~\eqref{eq:V_logdet} has a unique and positive definite solution.
\end{proposition}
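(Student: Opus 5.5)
The argument combines existence of a positive definite feasible point, compactness of the trace-normalized feasible set, and strict concavity of $\log\det$.

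\emph{Step 1: feasibility.} Since $\bm{w}^* \in \mathcal{W}(\mathbb{S}^K_{++})$, there is some $V_0 \succ 0$ with $\bm{w}^* = \bm{w}(V_0)$. The KKT conditions \eqref{eq:kkt_stationarity}--\eqref{eq:kkt_feasibility} are necessary for optimality, so $V_0$ satisfies \eqref{eq:kkt_active_diff} and \eqref{eq:kkt_inactive_reduced}, and therefore $V_0 \in \mathcal{F}(\bm{w}^*)$. The constraints \eqref{eq:kkt_active_diff}--\eqref{eq:kkt_inactive_reduced} are homogeneous in $V$, so $\mathcal{F}(\bm{w}^*)$ is a cone. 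Hence $V_0/\tr(V_0)$ is feasible for \eqref{eq:V_logdet} and has finite objective value $\log\det(V_0/\tr(V_0)) > -\infty$.

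\emph{Step 2: existence.} The set $\mathcal{G} = \{V \in \mathcal{F}(\bm{w}^*) : \tr(V)=1\}$ is closed: it is cut out of the closed cone $\mathbb{S}^K_+$ by finitely many linear equalities and weak inequalities. It is also bounded, since $V \succeq 0$ with $\tr V = 1$ forces every eigenvalue into $[0,1]$, so $\norm{V}_F \le 1$. Hence $\mathcal{G}$ is compact. We extend $\log\det$ by $-\infty$ on singular matrices. This extension is upper semicontinuous on $\mathbb{S}^K_+$: $\det$ is continuous, and $\log$ of a quantity tending to $0$ tends to $-\infty$. An upper semicontinuous function attains its supremum on a compact set, so a maximizer exists. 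By Step 1 the maximum value is finite, so any maximizer has $\det V>0$, which means it lies in $\mathbb{S}^K_{++}$.

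\emph{Step 3: uniqueness.} The set $\mathcal{G}$ is convex, being the intersection of a convex set with a hyperplane. We show $\log\det$ is strictly concave on $\mathbb{S}^K_{++}$. Take $V_1 \neq V_2$ in $\mathbb{S}^K_{++}$ and set $H = V_2 - V_1 \neq 0$. Along $V_1 + tH$, the second derivative is
\[
-\tr\big((V_1+tH)^{-1}H(V_1+tH)^{-1}H\big) = -\norm{(V_1+tH)^{-1/2}H(V_1+tH)^{-1/2}}_F^2 < 0.
\]
Now suppose two distinct maximizers existed. Both are positive definite by Step 2, and their midpoint lies in $\mathcal{G}$, so strict concavity would give the midpoint a strictly larger objective value. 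This contradicts maximality, so the maximizer is unique.

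The only step needing care is Step 2's treatment of the boundary, namely making the upper semicontinuity of the extended $\log\det$ precise so that the supremum cannot be approached only along singular matrices. Once that holds, Step 1's finite feasible value rules out the boundary.
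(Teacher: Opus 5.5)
Your proof is correct and follows essentially the same route as the paper's. Both arguments show that $V_0/\tr(V_0)$ is feasible by homogeneity of the KKT constraints, obtain a maximizer from compactness of the trace-normalized feasible set and upper semicontinuity of $\log\det$ on $\mathbb{S}^K_+$ (the finite value at $V_0/\tr(V_0)$ rules out singular maximizers), and get uniqueness from strict concavity; your second-derivative computation and midpoint argument just spell out details the paper leaves implicit.
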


\paragraph{Analytic center.} The second criterion selects the \emph{analytic center} of $\mathcal{F}(\bm{w}^*)$, the matrix that maximizes the sum of the logarithmic barriers of the inequality constraints that define $\mathcal{F}(\bm{w}^*)$, subject to the trace normalization. 

Let $s_{j}(V) = (X_j - X_{j_0})'V(X_0\bm w^* - X_1)$ be the slack of the inequality \eqref{eq:kkt_inactive_reduced}, $j \notin \supp(\bm w^*)$. The analytic center is pushed away from every boundary of $\mathcal{F}(\bm{w}^*)$, including $s_{j}(V) = 0$ for $j \notin \supp(\bm w^*)$, where a small perturbation of $V$ makes $\bm w^*$ suboptimal. The next proposition establishes existence and uniqueness of the maximizer.

\begin{proposition}
\label{prop:extcenter}
Under Assumption \ref{ass:gp}, for every $\bm{w}^* \in \mathcal{W}(\mathbb{S}^K_{++})$, the program
\[
V_c = \argmax \left\{ \log\det V + \sum_{j \notin \supp(\bm{w}^*)} \log s_j(V) \ : \ V \in \mathcal{F}(\bm{w}^*), \ \tr (V) = 1 \right\}
\]
has a unique solution $V_c \succ 0$ satisfying
$s_j(V_c)>0$ for every $j \notin \supp(\bm{w}^*)$.
\end{proposition}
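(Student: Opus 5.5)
The plan is to treat this as maximizing a strictly concave function over a convex set, and to get existence from compactness plus blow-up of the barrier at the boundary. Let $\Omega = \{V \in \mathcal{F}(\bm{w}^*) : \tr(V) = 1\}$. Since $\mathcal{F}(\bm{w}^*)$ is closed and convex and $\tr(V)=1$ together with $V \succeq 0$ bounds every entry of $V$, $\Omega$ is compact and convex. Define $\phi(V) = \log\det V + \sum_{j\notin\supp(\bm{w}^*)} \log s_j(V)$ on $\Omega$, with the value $-\infty$ wherever $V$ is singular or some $s_j(V)\le 0$. Each $s_j$ is linear in $V$, so $\log s_j$ is concave where positive. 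Also $\log\det$ is strictly concave on $\mathbb{S}^K_{++}$. Hence $\phi$ is strictly concave on the set $\Omega^\circ = \{V\in\Omega : V\succ 0,\ s_j(V)>0 \ \forall j\notin\supp(\bm{w}^*)\}$, which is convex.

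The main step, and the expected obstacle, is showing that $\Omega^\circ$ is nonempty. Since $\bm{w}^*\in\mathcal{W}(\mathbb{S}^K_{++})$, some $V_0\succ 0$ rationalizes $\bm{w}^*$, and by rescaling we may take $\tr(V_0)=1$. For this $V_0$ we know only $s_j(V_0)\ge 0$, so we need a strictly slack rationalizer. Set $\bm{u} = X_0\bm{w}^*-X_1$ and $Y = X_0\bm{w}^*$. By Proposition~\ref{prop:visible_facets}, $Y$ lies on a facet $F$ visible from $X_1$. Under Assumption~\ref{ass:gp}(ii), $F$ is a simplex whose vertices are exactly the donors of $F$, and $\supp(\bm{w}^*)\subseteq \mathrm{vert}(F)$.

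To construct the strictly slack rationalizer, let $\bm{a}$ be the outward normal of $F$, and set $\bm{n} = -\bm{a}$, so that $\bm{n}$ points inward. The condition $s_j(V)>0$ reads $(X_j-X_{j_0})'V\bm{u}>0$, and the equalities read $(X_j-X_{j_0})'V\bm{u}=0$ on the support. I would look for $V$ with $V\bm{u} = c\,\bm{n}$ for some $c>0$. Then the equalities hold for all $j\in\mathrm{vert}(F)$, because those $X_j - X_{j_0}$ lie in the hyperplane parallel to $F$. The inequalities hold strictly for every donor off $F$, since $\bm{n}'(X_j-X_{j_0})>0$ for $X_j\notin F$ by general position. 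The remaining trouble is donors that are vertices of $F$ but outside $\supp(\bm{w}^*)$, where this choice gives slack exactly $0$.

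To fix those, I would tilt the normal. Replace $\bm{n}$ by $\bm{n}+\varepsilon\bm{t}$, where $\bm{t}$ is chosen so that $\bm{t}'(X_j-X_{j_0}) = 0$ for $j\in\supp(\bm{w}^*)$ and $\bm{t}'(X_j-X_{j_0})=1$ for the other vertices of $F$. Such a $\bm{t}$ exists because the vertices of $F$ are affinely independent. For small $\varepsilon>0$, the off-$F$ slacks stay positive. We also need $\bm{u}'(\bm{n}+\varepsilon \bm{t}) < 0$, and here visibility gives $\bm{a}'X_1+b>0$, hence $\bm{n}'\bm{u} = \bm{a}'(X_1-Y)>0$; I would flip sign conventions if needed so that the required strict sign holds. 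Whenever a vector $\bm{g}$ satisfies $\bm{g}'\bm{u}>0$, there is a $V\succ 0$ with $V\bm{u}=\bm{g}$, for instance $V = \bm{g}\bm{g}'/(\bm{g}'\bm{u}) + (I - \bm{u}\bm{u}'/\|\bm{u}\|^2)$ adjusted, or more simply by the standard fact that $\{V\bm{u}: V\succ 0\}=\{\bm{g}: \bm{g}'\bm{u}>0\}$. After normalizing the trace, this yields a point of $\Omega^\circ$.

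For existence, take a maximizing sequence in $\Omega^\circ$. The value $\phi$ is bounded above on compact $\Omega$ and tends to $-\infty$ as $\det V\to 0$ or any $s_j\to 0$. The superlevel set $\{\phi\ge\phi(V_1)\}$ for some $V_1\in\Omega^\circ$ is therefore a closed subset of $\Omega$ bounded away from the boundary, so it is compact, and by continuity a maximizer $V_c$ exists there. Uniqueness follows from strict concavity of $\phi$ on the convex set $\Omega^\circ$. By construction $V_c\succ 0$ and $s_j(V_c)>0$ for every $j\notin\supp(\bm{w}^*)$.
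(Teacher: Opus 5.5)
Your proof is correct, and its overall structure matches the paper's. You first produce a strictly feasible point. You then get existence from compactness of the trace-one slice of $\mathcal{F}(\bm w^*)$ together with blow-up of the barrier at its boundary, and uniqueness from strict concavity of $\log\det$.

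The real difference is in how you obtain the strictly slack positive definite rationalizer, which is the paper's Lemma~\ref{lem:strict_slack}. The paper starts from a given rationalizer $V_1$ and takes $\bm c_1=V_1\bm r\in N_{\mathcal P}(Y)$. It mixes $\bm c_1$ with an arbitrary point in the relative interior of $N_{\mathcal P}(Y)=N(\sigma)$, where $\sigma=\supp(\bm w^*)$, keeping $\bm\nu'\bm r>0$. It then uses the fact that relative-interior points satisfy every non-implicit inequality of $N(\sigma)$ strictly.

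You never use $V_0$. Instead you build the normal explicitly from the visible facet $F$ containing $Y$:
\begin{itemize}
\item The inward normal $-\bm a$ gives zero slack on all vertices of $F$ and strictly positive slack off $F$.
\item The tilt $\varepsilon\bm t$ exists because the vertices of $F$ are affinely independent. It makes the slack strictly positive on the vertices of $F$ outside the support and keeps the equalities on the support.
\end{itemize}
What this buys you: the construction is concrete and checks each inequality by hand. It also shows that every point of a visible facet admits a strictly slack rationalizer, using only Lemma~\ref{lem:visible_projection}. The paper's argument is shorter but relies on the relative-interior characterization of $N(\sigma)$, which it justifies only tersely.

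One slip to fix. The condition you need is $\bm g'\bm u=\bm u'V\bm u>0$, not $<0$. Visibility gives exactly $\bm n'\bm u=\bm a'X_1+b>0$, so no sign flip is needed, and flipping $\bm g$ would in fact destroy the slack inequalities. Likewise, the matrix $\bm g\bm g'/(\bm g'\bm u)+I-\bm u\bm u'/\norm{\bm u}^2$ works as written, with no adjustment; it is the paper's construction \eqref{eq:V_construction}.
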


For $\bm w^* \in \mathcal W(\mathbb{S}^K_{++})$, a positive definite rationalizer $V$ exists, and the log-determinant criterion and the analytic center each select one. We recommend reporting the analytic center or the log-determinant maximizer $V^*$ as the default rationalizing matrix, summarized by its trace-normalized diagonal $V^*_{kk}/\tr(V^*)$. For diagonal $V^*$, the share $V^*_{kk}/\tr(V^*)$ equals $d_k$. For general $V^*$, the diagonal entry $V^*_{kk}$ is the penalty for a discrepancy of one unit in predictor $k$ when the other predictors are exactly matched.

For $\bm w^* \in \overline{\mathcal W(\mathcal V_+)}$, a strictly positive diagonal rationalizer may fail to exist, because $\bm{w}^*$ may be attainable only as a limit of solutions under strictly positive diagonal $V$. In that case, sparsity in the diagonal of $V$ is forced by the geometry, in contrast with the incidental sparsity induced by the optimizer, as in Figure~\ref{fig:feasible-b}.

When sparsity is forced, we apply the log-determinant and analytic center criteria over the retained predictors only. If the retained predictors are matched exactly, every diagonal matrix supported on them satisfies the KKT conditions, both criteria reduce to maximizing the sum of $\log d_k$ over the retained predictors, and the solution gives equal weight to every retained predictor.

Alternatively, a strictly positive diagonal matrix can be selected at an arbitrarily small cost in validation fit. Because $\bm{w}^*$ is a limit of solutions under strictly positive diagonal $V$, for any $\varepsilon > 0$ the predictor weights that maximize $\sum_{k} \log d_k$ subject to a validation loss within $\varepsilon$ of its minimum are strictly positive. A floor $d_k \geq \delta$ for a small $\delta > 0$ has the same effect, and it also removes the boundary discontinuity of Section~\ref{sec:boundary}.

%------------------------------------------------------------------
\section{The Interior Case}\label{sec:interior_results}
%------------------------------------------------------------------

The preceding sections assume that $X_1$ lies outside the donor hull
$\mathcal P$. When $X_1$ lies in the interior of the donor hull, multiple sets of donor weights may achieve an exact match in the predictors. For every $V\succ0$, the inner solution set
\begin{equation}
\mathcal{W}(V)=\cF
:=\{\bm w\in\Delta^J:X_0\bm w=X_1\}
\label{eq:int_exact}
\end{equation}
does not depend on $V$, and the bi-level problem~\eqref{eq:outer} reduces to minimizing validation loss over $\cF$. 
As in \citet{abadie2021penalized}, we add a vanishing penalty
on the pairwise matching discrepancies between $X_1$ and the donor predictor values, $X_j$. 
As in the exterior case, the outer optimization reduces to a search over a finite set of candidate simplices.

\subsection{Selection by a Vanishing Penalty}
\label{sec:int_penalty}

The selection penalty in \citet{abadie2021penalized} is the weighted average
of the squared distances between the treated unit and its contributing
donors. Among exact matches, this penalty favors combinations of donors
that are individually close to the treated unit. For the interior case, as the penalty parameter
goes to zero, the estimator selects the exact match that
minimizes the penalty.

Our formulation uses the same metric $V$ to measure both the aggregate
matching discrepancy and the pairwise penalty:
\begin{equation}
\bm w(\lambda,V)\in\argmin_{\bm w\in\Delta^J}
\left\{\norm{X_1-X_0\bm w}_V^2+\lambda g_V(\bm w)\right\},
\quad
 g_V(\bm w)=\sum_{j=2}^{J+1}w_j\norm{X_1-X_j}_V^2,
\label{eq:penalized}
\end{equation}
where $\lambda>0$. At $V=I/K$, the objective is proportional to that of
\citet{abadie2021penalized} and yields the same minimizers. For any fixed
$V\succ0$, a limit of $\bm w(\lambda,V)$ as
$\lambda\downarrow 0$ solves
\begin{equation}
\min_{\bm w\in\cF} g_V(\bm w).
\label{eq:int_limit}
\end{equation}
Under the general position conditions of Section~\ref{sec:int_cells}, the minimizer of~\eqref{eq:int_limit} is unique for all $V$ outside a measure-zero subset of $\mathbb{S}^K_{++}$ (Proposition~\ref{prop:coverage}), and $\bm w(\lambda,V)$ then converges to it as $\lambda\downarrow0$. We write $\bm w(0^+,V)$ for the limit. 

In this limit, the set of exact matches does not depend on $V$, but the penalty gives predictor weights a role in choosing among exact matches.

To characterize this choice, let $\ell_V(x)=x'Vx$. Because $X_0\bm w=X_1$ for $\bm w\in\cF$, the expansion of the squared distances $\norm{X_1-X_j}_V^2$ in $g_V$ gives
\begin{equation}
g_V(\bm w)=\sum_{j=2}^{J+1}w_j\ell_V(X_j)-\ell_V(X_1),
\label{eq:int_height}
\end{equation}
 so~\eqref{eq:int_limit} is a linear program in
the donor weights. To interpret this program geometrically, lift each
donor $X_j$ to the point 
$(X_j,\ell_V(X_j))\in\R^{K+1}$. Define the \emph{lifted donor hull} as the convex hull of the lifted points. For any $\bm w\in \mathcal F$,  $(X_1,\ \sum_j w_j\ell_V(X_j))$ lies in the lifted donor hull directly above $X_1$ at height $\sum_j w_j\ell_V(X_j)$. The penalty $g_V(\bm w)$ is
the vertical distance between this point and the lifted treated point
$(X_1,\ell_V(X_1))$. 
Minimization of $g_V(\bm{w})$ over $\mathcal F$ amounts to finding the lowest point of the lifted donor hull directly above $X_1$ \citep{rajan1994optimality, abadie2021penalized}. As $V$ varies, the heights $\ell_V(X_j)$ change, which may change the facet of the lifted donor hull directly above $X_1$ and the selected donors for the synthetic control.  

\subsection{Candidate Simplices and Selecting Metrics}
\label{sec:int_cells}

This section applies the lifting procedure of Section~\ref{sec:int_penalty} to characterize the synthetic control weights in the interior case as a function of $V$. The candidate donor sets are the sets of $K+1$ donors whose simplex contains $X_1$. Each such simplex yields one weight vector equal to the barycentric coordinates of $X_1$ in the simplex. 

Throughout this section, we replace Assumption~\ref{ass:gp} by the following
conditions, with $\cD=\{2,\ldots,J+1\}$ denoting the donor pool.
\begin{assumption}[Interior general position]
\label{ass:gp_int}
(i) $X_1\in\mathcal P$;
(ii) $X_1\notin\aff(\{X_j:j\in B\})$ for every $B\subseteq\cD$ with
$|B|\leq K$;
(iii) every $K+1$ donor predictor vectors are affinely independent;
(iv) for every $K+2$ donors, their lifts $(X_j,\ell_V(X_j))$ are affinely
independent for at least one positive definite $V$ and for at least one
strictly positive diagonal $V$.
\end{assumption}
Together, parts (i)-(ii) place $X_1$ in the strict interior of the donor
hull and require every exact match to use at least $K+1$ donors, so $J\geq K+1$. Part (iii)
ensures that each simplex is nondegenerate. Part (iv) rules out
affine dependencies among lifted donors that persist across all metrics
in either class. 
When $X_1$ belongs to the convex hull of the donor predictors, parts (ii)-(iv) hold for almost every
predictor configuration. 

Under Assumption~\ref{ass:gp_int}, for every $V$ outside a measure-zero subset of $\mathbb{S}^K_{++}$, the solution of~\eqref{eq:int_limit} is unique and assigns positive weight to exactly $K+1$ donors (Proposition~\ref{prop:coverage}). It is therefore enough to examine the subsets $A\subseteq\cD$ of $K+1$ donors whose simplex $\sigma_A=\conv\{X_i:i\in A\}$ contains $X_1$. We call such a subset $A$ an \emph{admissible cell}, because it allows exact interpolation of $X_1$. Each admissible cell $A$ determines a unique candidate weight vector $\bm w_A$, which does not depend on $V$.

\begin{figure}[t]
    \centering
    \caption{Lifted donors and the selection gap ($K=2$)}
    \label{fig:lift}
        \begin{tikzpicture}[scale=0.9, thick]
    \filldraw[fill=gray!8, draw=gray!50] (-5.200,-1.260) -- (2.400,-1.260) -- (5.500,0.910) -- (-2.100,0.910) -- cycle;
    \draw[gray!70, ->] (-5.200,-1.260) -- (-5.200,7.540) node[above] {\footnotesize $\ell_V$};
    \filldraw[fill=blue!10, draw=blue!60] (-2.600,-0.420) -- (0.900,-0.490) -- (-0.125,0.647) -- cycle;
    \draw[gray!70, semithick, dashed] plot[smooth] coordinates {(2.166,7.177) (1.986,7.084) (1.550,6.990) (1.114,7.084) (0.933,7.177)};
    \draw[gray!70, semithick] plot[smooth] coordinates {(0.933,7.177) (0.794,7.268) (0.677,7.357) (0.573,7.444) (0.479,7.530) (0.392,7.614) (0.311,7.696) (0.235,7.776) (0.163,7.854) (0.094,7.931) (0.028,8.005) (-0.035,8.078) (-0.096,8.148) (-0.155,8.217) (-0.213,8.283) (-0.269,8.348) (-0.323,8.411) (-0.376,8.471) (-0.427,8.529) (-0.478,8.586) (-0.527,8.640) (-0.576,8.692) (-0.623,8.741) (-0.670,8.789) (-0.715,8.834) (-0.760,8.877) (-0.805,8.918) (-0.848,8.957) (-0.891,8.993) (-0.933,9.028) (-0.975,9.059) (-1.016,9.089) (-1.056,9.116) (-1.096,9.141) (-1.136,9.164) (-1.175,9.184) (-1.213,9.202) (-1.251,9.217) (-1.289,9.230) (-1.326,9.241) (-1.363,9.250) (-1.400,9.256) (-1.436,9.259) (-1.472,9.260) (-1.507,9.259) (-1.542,9.255) (-1.577,9.249) (-1.611,9.241) (-1.646,9.230) (-1.680,9.216) (-1.713,9.200) (-1.747,9.182) (-1.780,9.161) (-1.813,9.138) (-1.846,9.112) (-1.878,9.084) (-1.911,9.054) (-1.943,9.021) (-1.975,8.985) (-2.007,8.947) (-2.039,8.907) (-2.071,8.864) (-2.102,8.819) (-2.134,8.771) (-2.166,8.721) (-2.197,8.668) (-2.229,8.613) (-2.261,8.555) (-2.292,8.495) (-2.324,8.432) (-2.357,8.366) (-2.389,8.299) (-2.422,8.228) (-2.455,8.155) (-2.488,8.079) (-2.522,8.001) (-2.557,7.920) (-2.592,7.836) (-2.629,7.749) (-2.667,7.659) (-2.706,7.566) (-2.747,7.469) (-2.790,7.368) (-2.836,7.263) (-2.887,7.153) (-2.943,7.037) (-3.009,6.912) (-3.095,6.771) (-3.300,6.545) (-3.505,6.485)};
    \draw[gray!70, semithick, dashed] plot[smooth] coordinates {(-3.505,6.485) (-3.589,6.507) (-3.652,6.541) (-3.705,6.580) (-3.751,6.623) (-3.792,6.667) (-3.829,6.711) (-3.862,6.756) (-3.893,6.800) (-3.921,6.845) (-3.947,6.888) (-3.971,6.930) (-3.993,6.972) (-4.013,7.012) (-4.032,7.051) (-4.049,7.089) (-4.065,7.125) (-4.080,7.160) (-4.093,7.194) (-4.105,7.225) (-4.116,7.256) (-4.126,7.284) (-4.135,7.311)};
    \draw[gray!70, semithick] plot[smooth] coordinates {(-4.135,7.311) (-4.143,7.337) (-4.149,7.360) (-4.155,7.382) (-4.159,7.402) (-4.163,7.420) (-4.165,7.436) (-4.167,7.451) (-4.168,7.463) (-4.167,7.474) (-4.166,7.483) (-4.164,7.490) (-4.161,7.495) (-4.157,7.498) (-4.152,7.499) (-4.146,7.498) (-4.139,7.495) (-4.131,7.490) (-4.122,7.483) (-4.113,7.474) (-4.102,7.463) (-4.091,7.450) (-4.078,7.435) (-4.065,7.419) (-4.051,7.400) (-4.036,7.379) (-4.019,7.356) (-4.002,7.331) (-3.984,7.304) (-3.965,7.276) (-3.945,7.245) (-3.924,7.212) (-3.902,7.177) (-3.879,7.140) (-3.855,7.102) (-3.830,7.061) (-3.803,7.019) (-3.776,6.974) (-3.747,6.928) (-3.718,6.879) (-3.687,6.829) (-3.654,6.777) (-3.621,6.723) (-3.586,6.667) (-3.550,6.610) (-3.512,6.550) (-3.473,6.489) (-3.432,6.426) (-3.390,6.361) (-3.346,6.294) (-3.300,6.226) (-3.252,6.156) (-3.202,6.084) (-3.150,6.011) (-3.095,5.936) (-3.038,5.859) (-2.977,5.781) (-2.913,5.701) (-2.846,5.619) (-2.774,5.536) (-2.696,5.452) (-2.613,5.366) (-2.521,5.278) (-2.420,5.189) (-2.304,5.099) (-2.166,5.007) (-1.986,4.914) (-1.550,4.820) (-1.114,4.914) (-0.933,5.007) (-0.794,5.099) (-0.677,5.189) (-0.573,5.278) (-0.479,5.366) (-0.392,5.452) (-0.311,5.536) (-0.235,5.619) (-0.163,5.701) (-0.094,5.781) (-0.028,5.859) (0.035,5.936) (0.096,6.011) (0.155,6.084) (0.213,6.156) (0.269,6.226) (0.323,6.294) (0.376,6.361) (0.427,6.426) (0.478,6.489) (0.527,6.550) (0.576,6.610) (0.623,6.667) (0.670,6.723) (0.715,6.777) (0.760,6.829) (0.805,6.879) (0.848,6.928) (0.891,6.974) (0.933,7.019) (0.975,7.061) (1.016,7.102) (1.056,7.140) (1.096,7.177) (1.136,7.212) (1.175,7.245) (1.213,7.276) (1.251,7.304) (1.289,7.331) (1.326,7.356) (1.363,7.379) (1.400,7.400) (1.436,7.419) (1.472,7.435) (1.507,7.450) (1.542,7.463) (1.577,7.474) (1.611,7.483) (1.646,7.490) (1.680,7.495) (1.713,7.498) (1.747,7.499) (1.780,7.498) (1.813,7.495) (1.846,7.490) (1.878,7.483) (1.911,7.474) (1.943,7.463) (1.975,7.451) (2.007,7.436) (2.039,7.420) (2.071,7.402) (2.102,7.382) (2.134,7.360) (2.166,7.337) (2.197,7.311) (2.229,7.284) (2.261,7.256) (2.292,7.225) (2.324,7.194) (2.357,7.160) (2.389,7.125) (2.422,7.089) (2.455,7.051) (2.488,7.012) (2.522,6.972) (2.557,6.930) (2.592,6.888) (2.629,6.845) (2.667,6.800) (2.706,6.756) (2.747,6.711) (2.790,6.667) (2.836,6.623) (2.887,6.580) (2.943,6.541) (3.009,6.507) (3.095,6.485) (3.300,6.545) (3.505,6.771) (3.589,6.912) (3.652,7.037) (3.705,7.153) (3.751,7.263) (3.792,7.368) (3.829,7.469) (3.862,7.566) (3.893,7.659) (3.921,7.749) (3.947,7.836) (3.971,7.920) (3.993,8.001) (4.013,8.079) (4.032,8.155) (4.049,8.228) (4.065,8.299) (4.080,8.366) (4.093,8.432) (4.105,8.495) (4.116,8.555) (4.126,8.613) (4.135,8.668) (4.143,8.721) (4.149,8.771) (4.155,8.819) (4.159,8.864) (4.163,8.907) (4.165,8.947) (4.167,8.985) (4.168,9.021) (4.167,9.054) (4.166,9.084) (4.164,9.112) (4.161,9.138) (4.157,9.161) (4.152,9.182) (4.146,9.200) (4.139,9.216) (4.131,9.230) (4.122,9.241) (4.113,9.249) (4.102,9.255) (4.091,9.259) (4.078,9.260) (4.065,9.259) (4.051,9.256) (4.036,9.250) (4.019,9.241) (4.002,9.230) (3.984,9.217) (3.965,9.202) (3.945,9.184) (3.924,9.164) (3.902,9.141) (3.879,9.116) (3.855,9.089) (3.830,9.059) (3.803,9.028) (3.776,8.993) (3.747,8.957) (3.718,8.918) (3.687,8.877) (3.654,8.834) (3.621,8.789) (3.586,8.741) (3.550,8.692) (3.512,8.640) (3.473,8.586) (3.432,8.529) (3.390,8.471) (3.346,8.411) (3.300,8.348) (3.252,8.283) (3.202,8.217) (3.150,8.148) (3.095,8.078) (3.038,8.005) (2.977,7.931) (2.913,7.854) (2.846,7.776) (2.774,7.696) (2.696,7.614) (2.613,7.530) (2.521,7.444) (2.420,7.357) (2.304,7.268) (2.166,7.177)};
    \draw[gray!85, semithick] plot[smooth] coordinates {(-4.121,7.268) (-4.107,7.225) (-4.094,7.181) (-4.080,7.138) (-4.066,7.095) (-4.053,7.053) (-4.039,7.010) (-4.025,6.968) (-4.011,6.925) (-3.997,6.883) (-3.984,6.841) (-3.970,6.800) (-3.956,6.758) (-3.942,6.716) (-3.929,6.675) (-3.915,6.634) (-3.901,6.593) (-3.887,6.552) (-3.874,6.511) (-3.860,6.470) (-3.846,6.430) (-3.832,6.390) (-3.819,6.350) (-3.805,6.310) (-3.791,6.270) (-3.777,6.230) (-3.764,6.190) (-3.750,6.151) (-3.736,6.112) (-3.723,6.073) (-3.709,6.034) (-3.695,5.995) (-3.681,5.956) (-3.667,5.918) (-3.654,5.879) (-3.640,5.841) (-3.626,5.803) (-3.612,5.765) (-3.599,5.728) (-3.585,5.690) (-3.571,5.653) (-3.558,5.615) (-3.544,5.578) (-3.530,5.541) (-3.516,5.504) (-3.502,5.468) (-3.489,5.431) (-3.475,5.395) (-3.461,5.359) (-3.447,5.322) (-3.434,5.287) (-3.420,5.251) (-3.406,5.215) (-3.393,5.180) (-3.379,5.144) (-3.365,5.109) (-3.351,5.074) (-3.337,5.039) (-3.324,5.005) (-3.310,4.970) (-3.296,4.936) (-3.282,4.901) (-3.269,4.867) (-3.255,4.833) (-3.241,4.800) (-3.227,4.766) (-3.214,4.732) (-3.200,4.699) (-3.186,4.666) (-3.172,4.633) (-3.159,4.600) (-3.145,4.567) (-3.131,4.535) (-3.118,4.502) (-3.104,4.470) (-3.090,4.438) (-3.076,4.406) (-3.062,4.374) (-3.049,4.342) (-3.035,4.311) (-3.021,4.279) (-3.008,4.248) (-2.994,4.217) (-2.980,4.186) (-2.966,4.155) (-2.952,4.125) (-2.939,4.094) (-2.925,4.064) (-2.911,4.034) (-2.897,4.004) (-2.884,3.974) (-2.870,3.944) (-2.856,3.914) (-2.842,3.885) (-2.829,3.856) (-2.815,3.827) (-2.801,3.798) (-2.787,3.769) (-2.774,3.740) (-2.760,3.712) (-2.746,3.683) (-2.732,3.655) (-2.719,3.627) (-2.705,3.599) (-2.691,3.571) (-2.677,3.544) (-2.664,3.516) (-2.650,3.489) (-2.636,3.462) (-2.623,3.435) (-2.609,3.408) (-2.595,3.381) (-2.581,3.355) (-2.567,3.328) (-2.554,3.302) (-2.540,3.276) (-2.526,3.250) (-2.512,3.224) (-2.499,3.199) (-2.485,3.173) (-2.471,3.148) (-2.457,3.123) (-2.444,3.098) (-2.430,3.073) (-2.416,3.048) (-2.402,3.023) (-2.389,2.999) (-2.375,2.975) (-2.361,2.951) (-2.348,2.927) (-2.334,2.903) (-2.320,2.879) (-2.306,2.856) (-2.292,2.832) (-2.279,2.809) (-2.265,2.786) (-2.251,2.763) (-2.237,2.740) (-2.224,2.718) (-2.210,2.695) (-2.196,2.673) (-2.183,2.651) (-2.169,2.629) (-2.155,2.607) (-2.141,2.585) (-2.127,2.564) (-2.114,2.542) (-2.100,2.521) (-2.086,2.500) (-2.072,2.479) (-2.059,2.458) (-2.045,2.438) (-2.031,2.417) (-2.018,2.397) (-2.004,2.376) (-1.990,2.356) (-1.976,2.337) (-1.962,2.317) (-1.949,2.297) (-1.935,2.278) (-1.921,2.259) (-1.907,2.239) (-1.894,2.220) (-1.880,2.202) (-1.866,2.183) (-1.852,2.164) (-1.839,2.146) (-1.825,2.128) (-1.811,2.110) (-1.798,2.092) (-1.784,2.074) (-1.770,2.056) (-1.756,2.039) (-1.742,2.022) (-1.729,2.004) (-1.715,1.987) (-1.701,1.971) (-1.688,1.954) (-1.674,1.937) (-1.660,1.921) (-1.646,1.905) (-1.632,1.888) (-1.619,1.873) (-1.605,1.857) (-1.591,1.841) (-1.577,1.826) (-1.564,1.810) (-1.550,1.795) (-1.536,1.780) (-1.522,1.765) (-1.509,1.750) (-1.495,1.736) (-1.481,1.721) (-1.467,1.707) (-1.454,1.693) (-1.440,1.679) (-1.426,1.665) (-1.412,1.651) (-1.399,1.638) (-1.385,1.624) (-1.371,1.611) (-1.358,1.598) (-1.344,1.585) (-1.330,1.572) (-1.316,1.560) (-1.303,1.547) (-1.289,1.535) (-1.275,1.523) (-1.261,1.511) (-1.248,1.499) (-1.234,1.487) (-1.220,1.476) (-1.206,1.464) (-1.193,1.453) (-1.179,1.442) (-1.165,1.431) (-1.151,1.420) (-1.137,1.409) (-1.124,1.399) (-1.110,1.388) (-1.096,1.378) (-1.083,1.368) (-1.069,1.358) (-1.055,1.349) (-1.041,1.339) (-1.027,1.329) (-1.014,1.320) (-1.000,1.311) (-0.986,1.302) (-0.972,1.293) (-0.959,1.284) (-0.945,1.276) (-0.931,1.268) (-0.917,1.259) (-0.904,1.251) (-0.890,1.243) (-0.876,1.235) (-0.862,1.228) (-0.849,1.220) (-0.835,1.213) (-0.821,1.206) (-0.807,1.199) (-0.794,1.192) (-0.780,1.185) (-0.766,1.179) (-0.753,1.172) (-0.739,1.166) (-0.725,1.160) (-0.711,1.154) (-0.698,1.148) (-0.684,1.142) (-0.670,1.137) (-0.656,1.131) (-0.643,1.126) (-0.629,1.121) (-0.615,1.116) (-0.601,1.111) (-0.588,1.107) (-0.574,1.102) (-0.560,1.098) (-0.546,1.094) (-0.532,1.090) (-0.519,1.086) (-0.505,1.082) (-0.491,1.079) (-0.477,1.075) (-0.464,1.072) (-0.450,1.069) (-0.436,1.066) (-0.422,1.063) (-0.409,1.061) (-0.395,1.058) (-0.381,1.056) (-0.367,1.054) (-0.354,1.052) (-0.340,1.050) (-0.326,1.048) (-0.312,1.046) (-0.299,1.045) (-0.285,1.044) (-0.271,1.042) (-0.257,1.041) (-0.244,1.041) (-0.230,1.040) (-0.216,1.039) (-0.202,1.039) (-0.189,1.039) (-0.175,1.039) (-0.161,1.039) (-0.148,1.039) (-0.134,1.039) (-0.120,1.040) (-0.106,1.041) (-0.093,1.041) (-0.079,1.042) (-0.065,1.044) (-0.051,1.045) (-0.038,1.046) (-0.024,1.048) (-0.010,1.050) (0.004,1.052) (0.017,1.054) (0.031,1.056) (0.045,1.058) (0.059,1.061) (0.072,1.063) (0.086,1.066) (0.100,1.069) (0.114,1.072) (0.128,1.075) (0.141,1.079) (0.155,1.082) (0.169,1.086) (0.183,1.090) (0.196,1.094) (0.210,1.098) (0.224,1.102) (0.238,1.107) (0.251,1.111) (0.265,1.116) (0.279,1.121) (0.293,1.126) (0.306,1.131) (0.320,1.137) (0.334,1.142) (0.348,1.148) (0.361,1.154) (0.375,1.160) (0.389,1.166) (0.402,1.172) (0.416,1.179) (0.430,1.185) (0.444,1.192) (0.457,1.199) (0.471,1.206) (0.485,1.213) (0.499,1.220) (0.512,1.228) (0.526,1.235) (0.540,1.243) (0.554,1.251) (0.567,1.259) (0.581,1.268) (0.595,1.276) (0.609,1.284) (0.622,1.293) (0.636,1.302) (0.650,1.311) (0.664,1.320) (0.677,1.329) (0.691,1.339) (0.705,1.349) (0.719,1.358) (0.732,1.368) (0.746,1.378) (0.760,1.388) (0.774,1.399) (0.787,1.409) (0.801,1.420) (0.815,1.431) (0.829,1.442) (0.843,1.453) (0.856,1.464) (0.870,1.476) (0.884,1.487) (0.897,1.499) (0.911,1.511) (0.925,1.523) (0.939,1.535) (0.953,1.547) (0.966,1.560) (0.980,1.572) (0.994,1.585) (1.007,1.598) (1.021,1.611) (1.035,1.624) (1.049,1.638) (1.063,1.651) (1.076,1.665) (1.090,1.679) (1.104,1.693) (1.117,1.707) (1.131,1.721) (1.145,1.736) (1.159,1.750) (1.173,1.765) (1.186,1.780) (1.200,1.795) (1.214,1.810) (1.227,1.826) (1.241,1.841) (1.255,1.857) (1.269,1.873) (1.282,1.888) (1.296,1.905) (1.310,1.921) (1.324,1.937) (1.337,1.954) (1.351,1.971) (1.365,1.987) (1.379,2.004) (1.392,2.022) (1.406,2.039) (1.420,2.056) (1.434,2.074) (1.448,2.092) (1.461,2.110) (1.475,2.128) (1.489,2.146) (1.502,2.164) (1.516,2.183) (1.530,2.202) (1.544,2.220) (1.558,2.239) (1.571,2.259) (1.585,2.278) (1.599,2.297) (1.612,2.317) (1.626,2.337) (1.640,2.356) (1.654,2.376) (1.668,2.397) (1.681,2.417) (1.695,2.438) (1.709,2.458) (1.722,2.479) (1.736,2.500) (1.750,2.521) (1.764,2.542) (1.778,2.564) (1.791,2.585) (1.805,2.607) (1.819,2.629) (1.832,2.651) (1.846,2.673) (1.860,2.695) (1.874,2.718) (1.887,2.740) (1.901,2.763) (1.915,2.786) (1.929,2.809) (1.942,2.832) (1.956,2.856) (1.970,2.879) (1.984,2.903) (1.997,2.927) (2.011,2.951) (2.025,2.975) (2.039,2.999) (2.053,3.023) (2.066,3.048) (2.080,3.073) (2.094,3.098) (2.107,3.123) (2.121,3.148) (2.135,3.173) (2.149,3.199) (2.163,3.224) (2.176,3.250) (2.190,3.276) (2.204,3.302) (2.217,3.328) (2.231,3.355) (2.245,3.381) (2.259,3.408) (2.273,3.435) (2.286,3.462) (2.300,3.489) (2.314,3.516) (2.327,3.544) (2.341,3.571) (2.355,3.599) (2.369,3.627) (2.383,3.655) (2.396,3.683) (2.410,3.712) (2.424,3.740) (2.438,3.769) (2.451,3.798) (2.465,3.827) (2.479,3.856) (2.492,3.885) (2.506,3.914) (2.520,3.944) (2.534,3.974) (2.547,4.004) (2.561,4.034) (2.575,4.064) (2.589,4.094) (2.602,4.125) (2.616,4.155) (2.630,4.186) (2.644,4.217) (2.658,4.248) (2.671,4.279) (2.685,4.311) (2.699,4.342) (2.712,4.374) (2.726,4.406) (2.740,4.438) (2.754,4.470) (2.768,4.502) (2.781,4.535) (2.795,4.567) (2.809,4.600) (2.822,4.633) (2.836,4.666) (2.850,4.699) (2.864,4.732) (2.878,4.766) (2.891,4.800) (2.905,4.833) (2.919,4.867) (2.932,4.901) (2.946,4.936) (2.960,4.970) (2.974,5.005) (2.988,5.039) (3.001,5.074) (3.015,5.109) (3.029,5.144) (3.042,5.180) (3.056,5.215) (3.070,5.251) (3.084,5.287) (3.097,5.322) (3.111,5.359) (3.125,5.395) (3.139,5.431) (3.152,5.468) (3.166,5.504) (3.180,5.541) (3.194,5.578) (3.207,5.615) (3.221,5.653) (3.235,5.690) (3.249,5.728) (3.263,5.765) (3.276,5.803) (3.290,5.841) (3.304,5.879) (3.317,5.918) (3.331,5.956) (3.345,5.995) (3.359,6.034) (3.373,6.073) (3.386,6.112) (3.400,6.151) (3.414,6.190) (3.427,6.230) (3.441,6.270) (3.455,6.310) (3.469,6.350) (3.482,6.390) (3.496,6.430) (3.510,6.470) (3.524,6.511) (3.537,6.552) (3.551,6.593) (3.565,6.634) (3.579,6.675) (3.593,6.716) (3.606,6.758) (3.620,6.800) (3.634,6.841) (3.647,6.883) (3.661,6.925) (3.675,6.968) (3.689,7.010) (3.702,7.053) (3.716,7.095) (3.730,7.138) (3.744,7.181) (3.757,7.225) (3.771,7.268) (3.785,7.311) (3.799,7.355) (3.812,7.399) (3.826,7.443) (3.840,7.487) (3.854,7.531) (3.867,7.575) (3.881,7.620) (3.895,7.665) (3.909,7.710)};
    \filldraw[fill=blue!6, draw=blue!40, dashed] (0.900,2.870) -- (2.750,3.394) -- (-0.125,4.010) -- cycle;
    \filldraw[fill=blue!20, draw=blue!70, very thick, fill opacity=0.85] (-2.600,3.400) -- (0.900,2.870) -- (-0.125,4.010) -- cycle;
    \draw[black, dashed, thin] (-2.600,-0.420) -- (-2.600,3.400);
    \draw[black, dashed, thin] (0.900,-0.490) -- (0.900,2.870);
    \draw[black, dashed, thin] (-0.125,0.647) -- (-0.125,1.040);
    \draw[gray, dashed, thin] (-0.125,1.040) -- (-0.125,4.010);
    \draw[black, dashed, thin] (-0.850,0.105) -- (-0.850,1.750);
    \draw[red, very thick, dotted] (-0.850,1.750) -- (-0.850,3.629);
    \node[right, red, xshift=1pt, yshift=-6pt] at (-0.850,2.689) {\footnotesize $g_V(\bm w_A)$};
    \draw[black, dashed, thin] (2.750,0.385) -- (2.750,3.394);
    \draw[red, very thick] (2.750,3.394) -- (2.750,4.510);
    \node[right, red] at (2.750,3.952) {\footnotesize $q_5^A(V)$};
    \filldraw[blue] (-2.600,-0.420) circle (1.4pt);
    \filldraw[blue!70!black] (-2.600,3.400) circle (1.4pt);
    \filldraw[blue] (0.900,-0.490) circle (1.4pt);
    \filldraw[blue!70!black] (0.900,2.870) circle (1.4pt);
    \filldraw[blue] (-0.125,0.647) circle (1.4pt);
    \filldraw[blue!70!black] (-0.125,4.010) circle (1.4pt);
    \filldraw[blue] (2.750,0.385) circle (1.4pt);
    \filldraw[blue!70!black] (2.750,4.510) circle (1.4pt);
    \filldraw[red] (-0.850,0.105) ++(0,2.4pt) -- ++(2.4pt,-2.4pt) -- ++(-2.4pt,-2.4pt) -- ++(-2.4pt,2.4pt) -- cycle;
    \draw[red, thick] (-0.850,1.750) ++(0,2.4pt) -- ++(2.4pt,-2.4pt) -- ++(-2.4pt,-2.4pt) -- ++(-2.4pt,2.4pt) -- cycle;
    \filldraw[orange] (-0.850,3.629) ++(-1.6pt,-1.6pt) rectangle ++(3.2pt,3.2pt);
    \draw[blue!70!black] (2.750,3.394) circle (1.4pt);
    \node[below] at (-2.600,-0.420) {$X_2$};
    \node[below] at (0.900,-0.490) {$X_3$};
    \node[right] at (-0.125,0.647) {$X_4$};
    \node[right] at (-0.850,0.105) {$X_1$};
    \node[below right] at (2.750,0.385) {$X_5$};
    \node[right, xshift=1pt, yshift=2pt] at (2.750,4.510) {\footnotesize $(X_5,\ell_V(X_5))$};
    \end{tikzpicture}
    \vspace{6pt}
    \tabnotes{The shaded triangle in the predictor plane is the admissible cell $A=\{2,3,4\}$, which contains the treated unit $X_1$. Each point $x$ is lifted to the height $\ell_V(x)$ above it. The gray outline is the graph of $\ell_V$, and every lifted point lies on it. The lifted donors of $A$ span the plane $H_{A,V}$, and the exact match $\bm w_A$ places the treated unit at the point $(X_1,H_{A,V}(X_1))$ on this plane. The lifted treated point lies below the plane, and the dotted red segment between the two is the penalty $g_V(\bm w_A)$ of Section~\ref{sec:int_penalty}. For the outside donor $X_5$, the plane is extended (dashed) to the point above $X_5$. The solid red segment is the gap $q_5^A(V)$ between the lifted donor and the plane. It is positive, so $(X_5, \ell_V(X_5))$ lies above the plane.}
\end{figure}

Figure~\ref{fig:lift} illustrates the geometry behind the selection among admissible cells. Fix an admissible cell $A$. Its $K+1$ lifted donors $(X_i,\ell_V(X_i))$, $i\in A$, span a unique hyperplane in $\R^{K+1}$, the plane that includes the two upper triangles in Figure~\ref{fig:lift}. Let $H_{A,V}(x)$ be the height of this hyperplane above any point $x$, so that $H_{A,V}(X_i)=\ell_V(X_i)$ for $i\in A$. For a donor $j\notin A$, the signed vertical distance between its lifted point and the hyperplane $q_j^A(V)=\ell_V(X_j)-H_{A,V}(X_j)$ is positive when the lifted donor lies above the hyperplane.

To express $q_j^A(V)$ as a function of $V$, write $X_j$ as an affine combination of the donors in $A$, $X_j=\sum_{i\in A}\beta_i^jX_i$ with $\sum_{i\in A}\beta_i^j=1$. These coefficients are unique by part (iii) of Assumption \ref{ass:gp_int} and they need not be positive. Because $H_{A,V}$ is affine, $H_{A,V}(X_j)=\sum_{i\in A}\beta_i^j\ell_V(X_i)$. 
Let $\inner{Q}{V}=\tr(Q'V)$ and 
\[
Q_j^A=X_jX_j'-\sum_{i\in A}\beta_i^jX_iX_i'.
\]
Then,
\begin{align}
q_j^A(V)&=\ell_V(X_j)-\sum_{i\in A}\beta_i^j\ell_V(X_i)\notag\\
&=X_j'VX_j-\sum_{i\in A}\beta_i^jX_i'VX_i\notag\\
&=\tr(X_jX_j'V)-\sum_{i\in A}\beta_i^j\tr(X_iX_i'V)\notag\\
&=\inner{Q_j^A}{V}.
\label{eq:Qmatrix}
\end{align}
As a result, $q_j^A(V)$ is linear in $V$.

Recall from Section~\ref{sec:int_penalty} that the penalty minimizer is the lowest point of the lifted donor hull directly above $X_1$. The weights
$\bm w_A$ produce the point $(X_1,H_{A,V}(X_1))$ in the simplex formed by
the lifted donors in $A$. If every other lifted donor lies strictly above
the plane $H_{A,V}$, so does every convex combination of lifted donors that puts weight outside $A$, and $(X_1,H_{A,V}(X_1))$ is the lowest point in the lifted
donor hull above $X_1$.
Thus $\bm w_A$ is uniquely selected exactly when $q_j^A(V)>0$ for every
$j\notin A$. The predictor weight matrices $V$
for which $\bm w_A$ is the unique solution of~\eqref{eq:int_limit} form
the set
\begin{equation}
\cV(A)=\{V\in\mathbb S^K_{++}:\inner{Q_j^A}{V}>0
\text{ for every }j\notin A\}.
\label{eq:VA}
\end{equation}
Lemma \ref{lem:selection} in
Appendix~\ref{app:geom} gives the formal proof.

Condition~\eqref{eq:VA} has a Delaunay interpretation. Let $\widetilde X_j=V^{1/2}X_j$, so that $\ell_V(X_j)=\|\widetilde X_j\|^2$, and let $\widetilde H_A(x)$ be the height above $x$ of the hyperplane through the points $(\widetilde X_i,\|\widetilde X_i\|^2)$, $i\in A$, so that $H_{A,V}(X_j)=\widetilde H_A(\widetilde X_j)$. The function $\norm{x}^2-\widetilde H_A(x)$ is zero on the unique sphere through $\widetilde X_i$, $i\in A$, negative inside it, and positive outside it. Condition~\eqref{eq:VA} thus says that no other transformed donor $\widetilde X_j$ lies inside or on this sphere, which is the defining property of a Delaunay simplex. That is, $\bm w_A$ is uniquely selected exactly when $A$ is the Delaunay simplex of the transformed donors that contains $\widetilde X_1$. For $V=I$ this is Theorem~2 of \citet{abadie2021penalized}.

\subsection{Attainability Tests and Finite Enumeration}
\label{sec:int_computation}

An admissible cell $A$ is \emph{attainable} for a given class of matrices if $V\in \mathcal V(A)$ for some $V$ in the class.
To test attainability in the positive definite class, we  solve the program
\begin{equation}
\begin{aligned}
\theta^*(A)=\max_{V,\theta}\quad &\theta\\
\text{subject to}\quad
&\inner{Q_j^A}{V}\geq\theta,\quad j\notin A,\\
&V\succeq0,\quad \tr(V)=1.
\end{aligned}
\label{eq:thetastar}
\end{equation}
The objective maximizes the smallest selection slack. The cell $A$ is attainable if and only if $\theta^*(A)$ is positive. The optimal $V$ may be singular but, for small enough $\varepsilon>0$, $V+\varepsilon I$ satisfies every inequality. Rescaling it to unit trace gives a positive definite matrix in $\cV(A)$.

For a fixed cell $A$, let $\bm\rho_j=\diag(Q_j^A)$. Because $\inner{Q}{\Diag(\bm d)}=\sum_{k=1}^K Q_{kk}d_k$, under a diagonal
metric $V=\Diag(\bm d)$, the selection inequalities reduce to
$\bm\rho_j'\bm d>0$, and the attainability test becomes 
\begin{equation}
\begin{aligned}
\theta_+^*(A)=\max_{\bm d,\theta}\quad &\theta\\
\text{subject to}\quad
&\bm\rho_j'\bm d\geq\theta,\quad j\notin A,\\
&\bm d\geq0,\quad\mathbf{1}'\bm d=1.
\end{aligned}
\label{eq:int_diagonal_margin}
\end{equation}
The cell is attainable by a positive diagonal metric if and only
if $\theta_+^*(A)>0$. As in the full class, the optimizer may have zero entries, but $\bm d+\varepsilon\mathbf 1$ satisfies every inequality for small $\varepsilon>0$, and rescaling it to sum to one gives a strictly positive diagonal matrix in $\cV(A)$.

Only the admissible cells need to be examined for attainability. They are the vertices of the polytope $\cF$ in~\eqref{eq:int_exact}. 
The optimum is the attainable vector $\bm w_A$ with the smallest validation loss.

Let $\cN$ be the set of matrices $V\in\mathbb{S}^K_{++}$ for which some $K+2$ transformed donors $\widetilde X_j$ lie on a common sphere. For such $V$, the Delaunay simplex of the transformed donors that contains $\widetilde X_1$ need not be unique. This set (and its intersection with the diagonal class) has measure zero (Proposition~\ref{prop:coverage}). 

\begin{proposition}
\label{prop:enumeration}
Let Assumption~\ref{ass:gp_int} hold, and let $\mathcal{C}$ be either $\mathbb{S}^K_{++}$ or the class of strictly positive diagonal matrices. Let $\mathcal{A}_{\mathcal{C}}$ be the set of admissible cells attainable by a metric in $\mathcal{C}$. Outside $\cN$,  the set $\{\bm w(0^+,V) : V \in \mathcal{C}\setminus\cN\}$ equals $\{\bm w_A : A \in \mathcal{A}_{\mathcal{C}}\}$. As a result,
\[
\inf_{V \in \mathcal{C} \setminus \cN} \norm{Z_1-Z_0\bm{w}(0^+, V)}^2 = \min_{A \in \mathcal{A}_{\mathcal{C}}} \norm{Z_1-Z_0\bm{w}_A}^2.
\]
\end{proposition}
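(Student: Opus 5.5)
The plan is to prove the set equality by a double inclusion. The displayed identity between the infimum and the minimum then follows immediately, since $\mathcal{A}_{\mathcal{C}}$ is finite. For both inclusions the main tool is the selection characterization of Lemma~\ref{lem:selection}: for $V\succ0$, $\bm w_A$ is the unique solution of \eqref{eq:int_limit} if and only if $V\in\cV(A)$, that is, $\inner{Q_j^A}{V}>0$ for all $j\notin A$. I will also use the fact, implicit in the definition of $\cN$ and recorded with Proposition~\ref{prop:coverage}, that for $V\notin\cN$ the minimizer of \eqref{eq:int_limit} is unique and $\bm w(\lambda,V)\to\bm w(0^+,V)$ equals that minimizer.

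\emph{Inclusion $\subseteq$.} Fix $V\in\mathcal{C}\setminus\cN$. The linear program \eqref{eq:int_limit} over the polytope $\cF$ attains its minimum at a vertex, and the vertices of $\cF$ are the $\bm w_A$ for admissible cells $A$. Under Assumption~\ref{ass:gp_int}(ii), every exact match uses at least $K+1$ donors. Parts (ii)--(iii) then give that basic feasible solutions have support of size exactly $K+1$ with affinely independent donors, so each vertex is some $\bm w_A$. By uniqueness the minimizer is a single vertex $\bm w_A$.

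It remains to show $V\in\cV(A)$. Optimality of the vertex $\bm w_A$ in the LP means the reduced costs are nonnegative: for every $j\notin A$, the edge direction that brings in $j$ does not decrease $\sum_i w_i\ell_V(X_i)$. Computing that reduced cost gives a positive multiple of $q_j^A(V)=\inner{Q_j^A}{V}$. To see this, move weight toward $j$ while keeping $X_0\bm w=X_1$, using $X_j=\sum_{i\in A}\beta_i^jX_i$; the cost changes at rate $\ell_V(X_j)-\sum_i\beta_i^j\ell_V(X_i)$. Hence $q_j^A(V)\ge0$. Strict inequality is where $V\notin\cN$ enters. If $q_j^A(V)=0$, then the $K+2$ lifted points $(X_i,\ell_V(X_i))$, $i\in A\cup\{j\}$, are affinely dependent. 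Equivalently, the transformed donors $\widetilde X_i$, $i\in A\cup\{j\}$, lie on a common sphere, using the Delaunay reformulation in the text. That places $V$ in $\cN$, a contradiction. So $V\in\cV(A)\cap\mathcal{C}$ and $A\in\mathcal{A}_{\mathcal{C}}$.

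\emph{Inclusion $\supseteq$.} Let $A\in\mathcal{A}_{\mathcal{C}}$, so $\cV(A)\cap\mathcal{C}\neq\varnothing$. The set $\cV(A)\cap\mathcal{C}$ is relatively open in $\mathcal{C}$, being cut out by finitely many strict linear inequalities inside the open cone $\mathcal{C}$, and it is nonempty. It is therefore a nonempty open subset of a space on which $\cN\cap\mathcal{C}$ has measure zero (Proposition~\ref{prop:coverage}, in the relevant class, using Assumption~\ref{ass:gp_int}(iv)). Hence it contains some $V\notin\cN$. For that $V$, Lemma~\ref{lem:selection} gives that $\bm w_A$ is the unique solution of \eqref{eq:int_limit}, so $\bm w(0^+,V)=\bm w_A$.

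The main obstacle is the strictness step in the first inclusion. I must check carefully that $q_j^A(V)=0$ for some admissible $A$ and outside $j$ is exactly a cospherical configuration of $K+2$ transformed donors. This amounts to showing that affine dependence of $K+2$ lifted points on the paraboloid corresponds to cosphericity (or cohyperplanarity, which part (iii) excludes when the $K+1$ points of $A$ are affinely independent). The reduced-cost computation and the Delaunay correspondence are routine once this identification is made precise. Finally, the infimum equals the minimum because the set of attained values is the finite set $\{\norm{Z_1-Z_0\bm w_A}^2:A\in\mathcal{A}_{\mathcal{C}}\}$.
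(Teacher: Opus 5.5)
Your proposal is correct and follows the paper's proof. The $\supseteq$ direction is the same: a nonempty, relatively open $\cV(A)\cap\mathcal{C}$ cannot lie inside the measure-zero set $\cN$, and Lemma~\ref{lem:selection} then applies. For $\subseteq$, the paper simply cites Proposition~\ref{prop:coverage}, whose proof is your argument with Carath\'eodory on the lifted hull in place of your LP-vertex and reduced-cost phrasing.

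The step you flag as the main obstacle is immediate, and only one direction of it is needed. If $q_j^A(V)=0$, the lifted points of $A\cup\{j\}$ lie on the graph of the affine function $H_{A,V}$, so $\varphi_{A\cup\{j\}}(V)=0$ and $V\in\cN$ by definition.
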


Once a cell $A^*$ is selected,  $\cV(A^*)$
characterizes the metrics that rationalize the chosen donor weights. 
A representative metric (e.g., the analytic center) can be selected as in Section~\ref{sec:selection}.

\section{Empirical Application}\label{sec:numerical}
%------------------------------------------------------------------

Proposition 99, enacted in California in 1988, was a comprehensive tobacco-control intervention that combined cigarette tax increases with antismoking media campaigns, clean indoor-air restrictions, and earmarked funding for tobacco-related education and research. \citet{Abadie2010California} estimate the effect of Proposition 99 on per-capita cigarette sales in California using the synthetic control method, with other U.S.~states serving as the donor pool. This section applies the estimators proposed in this article to the Proposition 99 data, compares the results with those from the standard bi-level optimization, and discusses practical implications.

The data are a state-level panel covering 1970-2000, with 1970-1988 as the pre-treatment period and 1989-2000 as the post-treatment period. The outcome is annual per-capita cigarette sales (in packs). California is the only treated unit, and the donor pool comprises 38 states after excluding those that adopted large-scale tobacco-control programs or substantially raised cigarette taxes during the sample period. The predictors used to construct synthetic California are log per-capita state income, the retail cigarette price, and the share of the population aged 15-24, each averaged over 1980-1988; per-capita beer consumption, averaged over 1984-1988; and, as lagged outcomes, per-capita cigarette sales in 1975, 1980, and 1988. The validation sample includes all pre-treatment periods.

Table~\ref{tab:donor-weights} compares the donor weights of \citet{Abadie2010California} with those from our exact procedure. The two are nearly identical, confirming that the common bi-level optimization works well in this application. Figure~\ref{fig:exact-cali} displays the estimated treatment effect of Proposition 99 on per-capita cigarette sales in California, which is almost identical to the result in \citet{Abadie2010California}. The corresponding predictor weights appear in Table \ref{tab:predictor_weights}, and the solution coincides with the one delivered by the method of \citet{BECKER20181}.
Table~\ref{tab:faces} reports the number of faces of the donor hull that allow attainable projections. Under diagonal $V$, $226$ of the $3{,}839$ visible faces are attainable. 
A branch-and-bound algorithm reaches the optimum in under one second. Existing packages may take several minutes on the same data and do not guarantee a global optimum. 
Appendix~\ref{app:pd} compares these results with those over the full positive definite class. The optimum over the full positive definite class lowers the validation loss but keeps the same five donors and leaves the estimated
effect essentially unchanged.

\begin{table}[t]
    \centering
    \caption{Donor weights for synthetic California}
    \label{tab:donor-weights}
    \begin{tabular}{lcc}
  \hline
  State & \citet{Abadie2010California} & Exact (this article) \\
  \hline
  Colorado    & 0.164 & 0.160 \\
  Connecticut & 0.069 & 0.068 \\
  Montana     & 0.199 & 0.202 \\
  Nevada      & 0.234 & 0.236 \\
  Utah        & 0.334 & 0.335 \\
  \hline
\end{tabular}
    \tabnotes{The first column reports the donor weights of \citet{Abadie2010California}. The second column reports the exact global solution. The validation sample includes all pre-treatment years (1970-1988). The remaining 33 donor states receive zero weight in both columns and are omitted.}
\end{table}

\begin{figure}[t]
    \centering
    \caption{Estimated effect of California's Proposition 99}
    \label{fig:exact-cali}
    \includegraphics[width=0.95\linewidth]{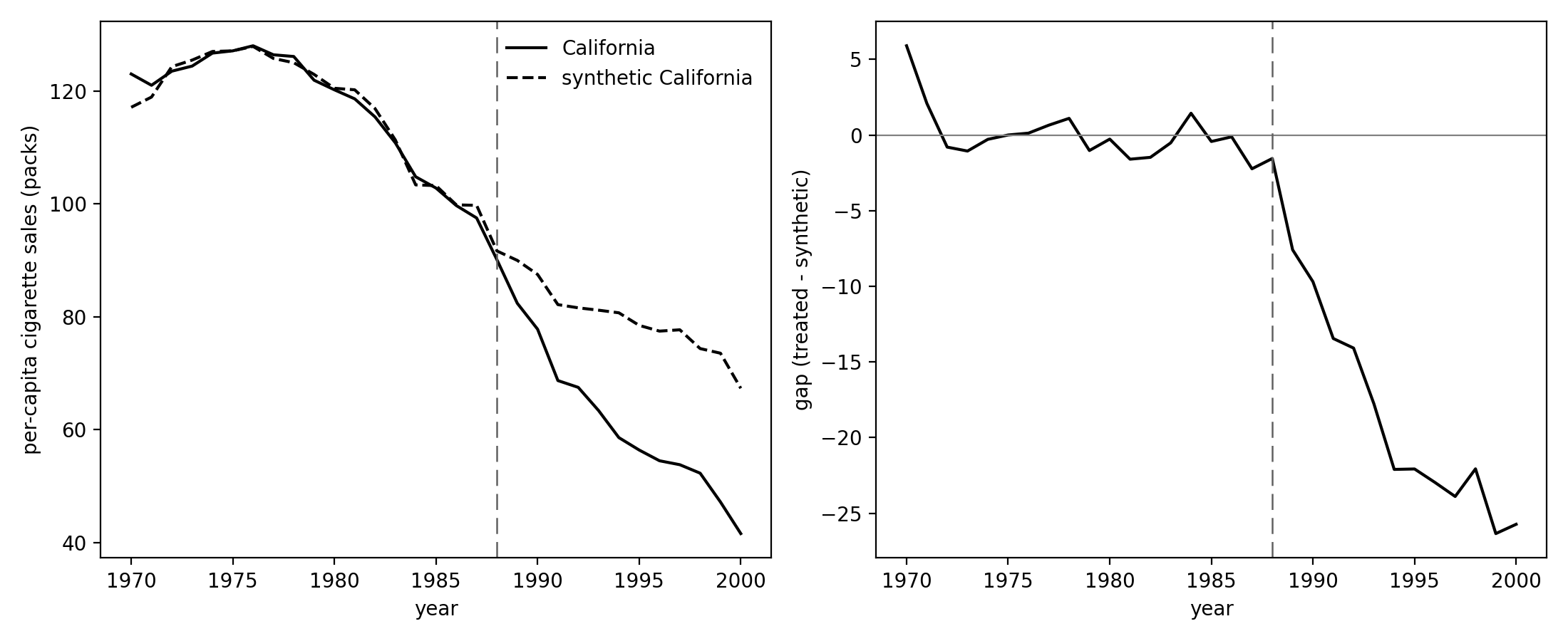}
    \tabnotes{Per-capita cigarette sales in packs. Synthetic California is constructed with the exact donor weights of Table~\ref{tab:donor-weights}. Proposition 99 takes effect in 1989. The vertical line at 1988 marks the last pre-treatment year.}
\end{figure}

\begin{table}[t]
\centering
\caption{Predictor weights for synthetic California: diagonal class}
\label{tab:predictor_weights}
\begin{tabular}{lc}
\hline
Predictor & Weight \\
\hline
Log per-capita state income          & 0.0000 \\
Share of population aged 15-24      & 0.3333 \\
Retail cigarette price               & 0.3333 \\
Per-capita beer consumption          & 0.0000 \\
Cigarette sales per capita (1988)    & 0.0000 \\
Cigarette sales per capita (1980)    & 0.0000 \\
Cigarette sales per capita (1975)    & 0.3333 \\
\hline
\end{tabular}
\tabnotes{Log-determinant maximizer over the three retained predictors, on standardized predictors. The match on these predictors is exact, so the maximizer is uniform (Section~\ref{sec:selection}).}
\end{table}

\begin{table}[t]
\centering
\caption{Attainable faces of the donor hull, by dimension}
\label{tab:faces}

\begin{tabular}{cccc}
\hline
Dimensions & \#Attainable Faces & \#Visible Faces & \#Faces \\
\hline
0 & 0 & 25 & 34 \\
1 & 6 & 183 &  416\\
2 & 29 & 620 & 2271 \\
3 & 59 & 1129 & 6235 \\
4 & 69 & 1143 & 8983 \\
5 & 49 & 607 & 6489\\
6 & 14 & 132 & 1854 \\
\hline
Total & 226 & 3839 & 26282 \\
\hline
\end{tabular}
\tabnotes{Number of faces of the convex hull of the 38 donor predictor vectors ($K = 7$) that contain attainable projections under positive diagonal $V$ in the California application.}
\end{table}

The predictor weights in Table~\ref{tab:predictor_weights} illustrate the distinction between forced and incidental sparsity drawn in Section~\ref{sec:selection}. Four of the seven predictors receive exactly zero weight. The feasibility conditions~\eqref{eq:diag_lp}, evaluated at the optimal donor weights $\bm{w}^*$, show that, within the diagonal class, this sparsity is forced. Over the set of diagonal rationalizers of $\bm{w}^*$, the maximum of $d_k$ is zero for log income, beer consumption, and the 1980 and 1988 lagged outcomes, so every diagonal matrix that rationalizes $\bm{w}^*$ is supported on the share of population aged 15-24, the retail cigarette price, and the 1975 lagged outcome. That is, $\mathcal{F}(\bm{w}^*) \cap \mathcal{V}_+ = \varnothing$, and $\bm{w}^*$ is attainable only as a limit of interior solutions. 

Table~\ref{tab:predictor_weights} reports the log-determinant maximizer over the three retained predictors (Section~\ref{sec:selection}). The match is exact on the three predictors, so every diagonal matrix supported on them is consistent with $\bm{w}^*$. At the reported $V$, the inner problem need not return $\bm w^*$, because other solutions exist (Section~\ref{sec:boundary}). By Proposition~\ref{prop:visible_facets}, however, positive definite rationalizers exist, and at any of them the inner optimum is unique, so the matrix determines $\bm{w}^*$.
Table~\ref{tab:vstar} reports two representatives of the full positive definite rationalizing cone $\mathcal{F}(\bm{w}^*) \cap \{\tr(V) = 1\}$ on standardized predictors, the trace-normalized diagonal of the log-determinant maximizer $V^*$ and that of the analytic center of Proposition~\ref{prop:extcenter}. The solution is positive definite (smallest eigenvalue $3.7 \times 10^{-3}$), and the solution of the inner problem at $V^*$ reproduces $\bm{w}^*$ up to numerical precision. As Proposition~\ref{prop:dense_V} guarantees, all seven predictors receive strictly positive shares. That is, once the metric is allowed to reward linear combinations of predictors, no predictor direction needs to be discarded. 

\begin{table}[t]
\centering
\caption{Predictor weights for synthetic California: full positive definite class}
\label{tab:vstar}
\begin{tabular}{lcc}
\hline
Predictor & Log-det maximizer & Analytic center \\
\hline
Log per-capita state income          & 0.006 & 0.030 \\
Share of population aged 15-24      & 0.143 & 0.026 \\
Retail cigarette price               & 0.144 & 0.036 \\
Per-capita beer consumption          & 0.145 & 0.034 \\
Cigarette sales per capita (1988)    & 0.224 & 0.563 \\
Cigarette sales per capita (1980)    & 0.147 & 0.031 \\
Cigarette sales per capita (1975)    & 0.191 & 0.279 \\
\hline
\end{tabular}
\tabnotes{Two representatives of the full positive definite cone of rationalizers of the donor weights of Table~\ref{tab:donor-weights}. The first column reports the trace-normalized diagonal $V^*_{kk}/\tr(V^*)$ of the log-determinant maximizer. The second column reports $V^*_{kk}/\tr(V^*)$ for the analytic center. Predictors are standardized.}
\end{table}

Like \citet{Abadie2010California}, we include in the validation sample the
three pre-treatment years whose outcomes also serve as lagged predictors. As a
check, we recompute the estimators with the validation loss restricted to the
sixteen pre-treatment years not used as predictors. The leakage-free design selects the same five donors with nearly
identical weights. The three excluded years are strongly serially correlated
with the remaining sixteen, so any donor combination that fits the sixteen
fits the three almost automatically, and the two validation criteria select
nearly the same result. 

%------------------------------------------------------------------
\section{A Leakage-Free Validation Design}\label{sec:temporal_split}
%------------------------------------------------------------------

The California application highlights a tension in the standard validation design: lagged outcomes often serve as predictors while the validation sample consists of pre-treatment outcomes, so the same observations can appear on both sides of the bi-level problem. Predictors that determine the outcome path in the validation sample have a mechanical advantage. Although the California estimates in Section~\ref{sec:numerical} are robust to this overlap, a design that removes the overlap is a cleaner alternative.

Following the cross-validation approach of \citet{Abadie2015Germany}, we partition the pre-treatment span into a \emph{training window} and a later \emph{validation window}. All predictors, including lagged outcomes, are measured in the training window, and the validation outcomes $Z$ are drawn from the validation window. By construction, no coordinate of $Z$ appears in $X$. As a result, a lagged outcome can earn validation fit only by predicting \emph{future} outcomes on the same footing as any other predictor. The geometric machinery of Sections~\ref{sec:geometric}-\ref{sec:numerical} applies without change.

We illustrate the leakage-free design with the German reunification
application of \citet{Abadie2015Germany}, which estimates the effect of the
1990 reunification on the West German economy. West Germany is the treated
unit, with sixteen OECD countries as donors. The training window is
1971-1980, with six predictors: per-capita GDP, trade openness, the inflation
rate, and the industry share averaged over the window, schooling in 1970 and
1975, and the investment rate over the 1970s. The validation outcomes are
per-capita GDP in 1981-1990, so the predictor and validation windows are
disjoint.

Table~\ref{tab:germany_weights} reports the optimal donor weights next to those of \citet{Abadie2015Germany}. The exact enumeration selects five donors, four of which coincide with the donors that receive positive weight in the published cross-validated specification, with Austria the dominant donor in both solutions. Relative to the weights in \citet{Abadie2015Germany}, the exact solution shifts weight from Japan, the United States, and the Netherlands toward Australia, Austria, and Switzerland. 

\begin{table}[t]
\centering
\caption{Donor weights for synthetic West Germany}
\label{tab:germany_weights}
\begin{tabular}{lcc}
\hline
Donor & \citet{Abadie2015Germany} & Exact (this article) \\
\hline
Austria      & 0.42 & 0.507 \\
Switzerland  & 0.11 & 0.166 \\
Japan        & 0.16 & 0.146 \\
USA          & 0.22 & 0.135 \\
Australia    & 0.00 & 0.045 \\
Netherlands  & 0.09 & 0.000 \\
\hline
\end{tabular}
\tabnotes{Temporal-split design: predictors measured in the 1971-1980 training window, validation outcomes per-capita GDP in 1981-1990 (PPP-adjusted current U.S.~dollars). The first column reports the weights of \citet{Abadie2015Germany}. The second column reports the exact global solution. Donors with zero weights in both solutions are omitted.}
\end{table}

Figure~\ref{fig:exact-germany} compares the result with \citet{Abadie2015Germany}. The left panel plots per-capita GDP for West Germany against the exact synthetic control and the synthetic control in \citet{Abadie2015Germany}. The two are nearly indistinguishable before 1990, with the exact solution fitting the validation window slightly better. The post-1990 gaps tell the same substantive story, a large and growing negative effect of reunification on West German per-capita GDP, with the exact synthetic control implying a somewhat smaller average effect over 1991-2003. Appendix~\ref{app:pd} compares this solution with the optimum over the full positive definite class. The full positive definite class for $V$ cuts the validation loss in half and changes the donor portfolio, but yields a similar treatment effect.

\begin{figure}[t]
    \centering
    \caption{West Germany and two synthetic controls}
    \label{fig:exact-germany}
    \includegraphics[width=0.95\linewidth]{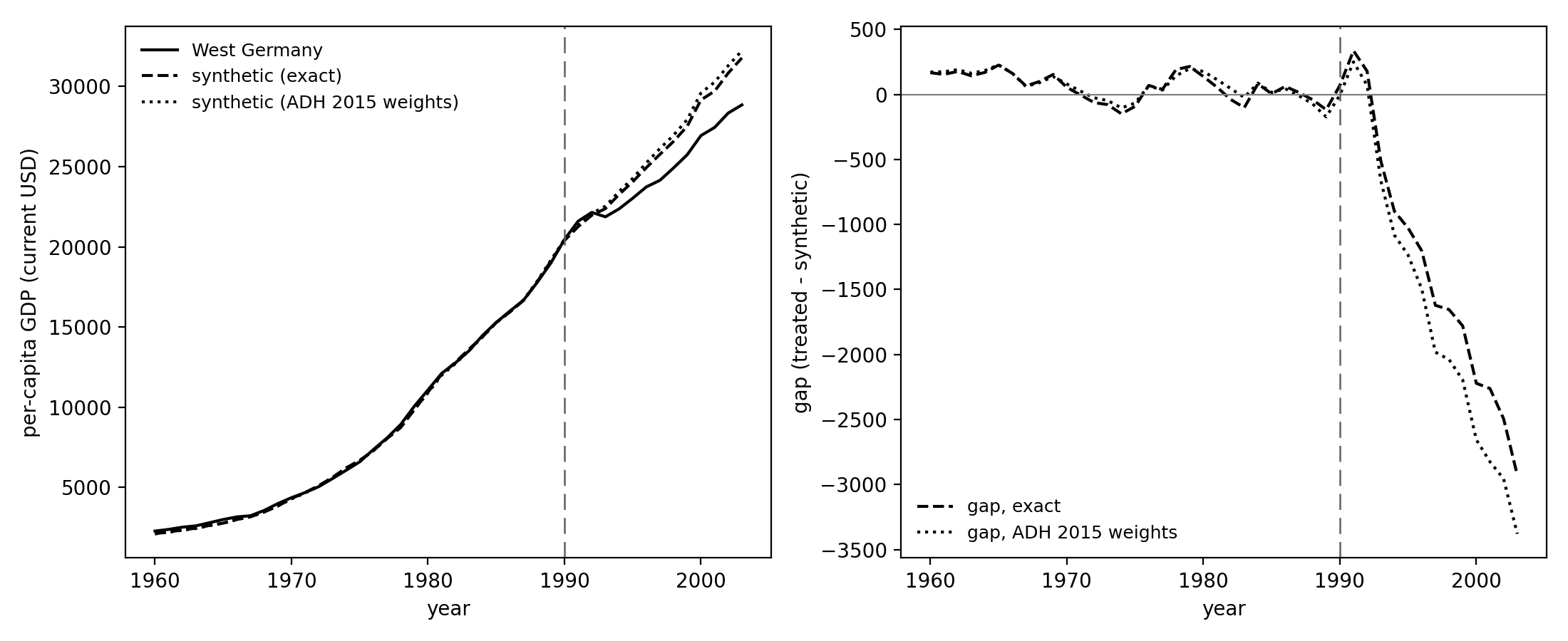}
    \tabnotes{Left panel: per-capita GDP in PPP-adjusted current U.S.~dollars for West Germany, for the exact synthetic control of the temporal-split design, and for the synthetic unit implied by the donor weights of \citet{Abadie2015Germany}; the weights are those of Table~\ref{tab:germany_weights}. Right panel: gaps between West Germany and each synthetic control. The vertical line at 1990 marks the German reunification.}
\end{figure}

Table~\ref{tab:vstar_ger} reports predictor weights for both candidate classes on standardized predictors. The first column is the analytic center of the diagonal rationalizers; the remaining two report the trace-normalized diagonals of the log-determinant maximizer (the shares $V^*_{kk}/\tr(V^*)$ of Section~\ref{sec:selection}) and of the analytic center over the full positive definite class. Every entry is strictly positive, but the diagonal and full classes allocate weight differently. The diagonal class concentrates on the lagged GDP average, with the investment rate second, while the full class spreads the weight more evenly, led by schooling and per-capita GDP.

\begin{table}[htbp]
\centering
\caption{Predictor weights for synthetic West Germany: diagonal and full positive definite classes}
\label{tab:vstar_ger}
\begin{tabular}{lccc}
\hline
 & & \multicolumn{2}{c}{Full positive definite} \\
\cline{3-4}
Predictor & Diagonal & Log-det maximizer & Analytic center \\
\hline
Per-capita GDP (1971-80)    & 0.443 & 0.205 & 0.176 \\
Trade openness (1971-80)    & 0.134 & 0.173 & 0.102 \\
Inflation rate (1971-80)    & 0.071 & 0.150 & 0.069 \\
Industry share (1971-80)    & 0.002 & 0.084 & 0.125 \\
Schooling (1970, 1975)      & 0.107 & 0.221 & 0.462 \\
Investment rate (1970s)     & 0.243 & 0.167 & 0.067 \\
\hline
\end{tabular}
\tabnotes{Standardized predictors. The first column is the analytic center of the diagonal matrices that rationalize the donor weights of Table~\ref{tab:germany_weights}. The remaining columns report the trace-normalized diagonals $V^*_{kk}/\tr(V^*)$ of the log-determinant maximizer and of the analytic center.}
\end{table}

%------------------------------------------------------------------
\section{Conclusion}\label{sec:conclusion}
%------------------------------------------------------------------

This article recasts synthetic control estimation as a geometric problem that involves projections onto the visible faces of the donor convex hull. We characterize the attainable set of donor weights under diagonal and under general positive definite predictor weight matrices, and replace the bi-level optimization with a single-level minimization of the validation loss over that set. A finite enumeration over faces solves the single-level problem exactly, in place of a nested optimization that is fragile in practice. When the treated unit lies inside the donor hull, the same approach reduces the choice among exact matches to a search over finitely many candidate simplices. The article also characterizes the set of predictor weight matrices that rationalize any given donor weight vector. Two applications, the California tobacco control program and German reunification under a leakage-free validation design, show that the exact solutions validate the previously published results while adding a global optimality guarantee.

%------------------------------------------------------------------
\clearpage
\appendix
%------------------------------------------------------------------

\numberwithin{equation}{section}
\numberwithin{theorem}{section}
\numberwithin{definition}{section}
\numberwithin{lemma}{section}
\numberwithin{proposition}{section}
\numberwithin{corollary}{section}
\numberwithin{remark}{section}
\numberwithin{assumption}{section}
\numberwithin{table}{section}
\numberwithin{figure}{section}

\section*{Appendix}

\section{Algorithms}
\label{app:algorithms}

This appendix turns the sign-compatibility criterion of Section~\ref{sec:diag_application} into an algorithm that enumerates the attainable set.

Section~\ref{sec:app_facewise} characterizes the closure of the attainable set as a union of sign-compatible regions. Algorithm~\ref{alg:vfacets}, presented in Section~\ref{sec:app_visible}, enumerates the facets of the donor hull visible from $X_1$. Algorithm~\ref{alg:pieces}, presented in Section~\ref{sec:app_pieces}, identifies the allowed sign patterns~\eqref{eq:ray_sign_test} for a given face. Finally, Section~\ref{sec:app_bnb} describes the search over faces that solves the validation-loss minimization problem~\eqref{eq:single_level_W0}.

\subsection{The Attainable Set for a Given Face}\label{sec:app_facewise}

We identify a face of $\mathcal{P}$ with the set of indices of its vertices, $\sigma \subseteq \{2, \dots, J+1\}$. Under Assumption~\ref{ass:gp}(ii) every proper face is a simplex, so $\sigma$ determines the face $\conv(\{X_j : j \in \sigma\})$ and every boundary point of $\mathcal{P}$ lies in the relative interior of exactly one face. Let
\[
F_\sigma=\conv\{X_j:j\in\sigma\},\qquad
 \Delta_\sigma=\{\bm w\in\Delta^J:w_j=0\text{ for }j\notin\sigma\}.
\]
Thus $\Delta_\sigma$ contains the weights supported on the face. Choose a reference $a\in\sigma$. Define
\begin{equation}\label{eq:cone_direct}
\begin{split}
 N(\sigma)=\{\bm c\in\R^K:\;&(X_j-X_a)'\bm c=0
       \quad(j\in\sigma\setminus\{a\}),\\
 &(X_a-X_i)'\bm c\geq0\quad(i\notin\sigma)\}.
\end{split}
\end{equation}
Let $Y\in F_\sigma$. Because $\mathcal P$ is the convex hull of the donors, $\bm c\in N_{\mathcal P}(Y)$ if and only if $\bm c'X_i\leq\bm c'Y$ for every donor $i$. Suppose that $\bm c\in N(\sigma)$. Because $Y$ is a convex combination of the donors in $\sigma$, $\bm c'X_j=\bm c'Y$ for each $j\in\sigma$, and $\bm c'X_i\leq\bm c'Y$ for every donor outside $\sigma$. As a result, $\bm c\in N_{\mathcal P}(Y)$. Suppose that $Y\in\relint(F_\sigma)$, so $Y=\sum_{j\in\sigma}w_jX_j$ with every $w_j>0$. Let $\bm c\in N_{\mathcal P}(Y)$. Then $0=\sum_{j\in\sigma}w_j\,\bm c'(X_j-Y)$ is a sum of nonpositive terms with positive weights, so $\bm c'X_j=\bm c'Y$ for every $j\in\sigma$, and $\bm c\in N(\sigma)$. Therefore, $N(\sigma)\subseteq N_{\mathcal P}(Y)$ for every $Y\in F_\sigma$, and $N_{\mathcal P}(Y)=N(\sigma)$ when $Y\in\relint(F_\sigma)$.

\begin{lemma}\label{lem:exact-face}
Let $Y\in\relint(F_\sigma)$. Then $Y\in X_0\mathcal W(\mathcal V_+)$ if and only if some $\bm c\in N(\sigma)$ has exactly the same coordinate signs, including zeros, as $X_1-Y$.
\end{lemma}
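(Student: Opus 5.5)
The argument combines the master condition \eqref{eq:master} with the identification $N_{\mathcal P}(Y)=N(\sigma)$ for $Y\in\relint(F_\sigma)$ established just above. By definition, $Y\in X_0\mathcal W(\mathcal V_+)$ means $Y=X_0\bm w(V)$ for some $V=\Diag(\bm d)$ with $\bm d\in\Delta^K_+$. By \eqref{eq:master}, this holds if and only if $\Diag(\bm d)(X_1-Y)\in N_{\mathcal P}(Y)=N(\sigma)$. So the lemma reduces to a statement about diagonal rescalings of the fixed vector $\bm u:=X_1-Y$.

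For necessity, suppose $Y=X_0\bm w(\Diag(\bm d))$ with every $d_k>0$. Set $\bm c=\Diag(\bm d)\bm u$, so that $c_k=d_k u_k$. Because $d_k>0$, we have $\sgn(c_k)=\sgn(u_k)$ for every $k$, including the coordinates where $u_k=0$. By \eqref{eq:master}, $\bm c\in N(\sigma)$, which gives the required vector.

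For sufficiency, take $\bm c\in N(\sigma)$ with $\sgn(\bm c)=\sgn(\bm u)$. Define $d_k=c_k/u_k>0$ when $u_k\neq 0$, and $d_k=1$ when $u_k=0$; in the latter case $c_k=0$ as well. Then $\Diag(\bm d)\bm u=\bm c$. Since $N(\sigma)$ is a cone, we can rescale $\bm d$ to sum to one, which gives $\bm d\in\Delta^K_+$ while keeping $\Diag(\bm d)\bm u\in N(\sigma)$. Applying \eqref{eq:master} in the converse direction, $Y$ is the $V$-weighted projection of $X_1$ with $V=\Diag(\bm d)\succ 0$.

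It remains to pass from projection points to weights. Under Assumption~\ref{ass:gp}(ii), $\mathcal W(V)$ is a singleton and $Y$ has a unique barycentric representation on the simplex $F_\sigma$, so $Y=X_0\bm w(V)$ and $Y\in X_0\mathcal W(\mathcal V_+)$.

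The only delicate points are the zero coordinates, which is why the lemma requires exact sign agreement including zeros, and the positivity of $\bm c$'s scaling, which is handled by the cone property. One degenerate case also needs checking: $\bm u=0$ is impossible because $X_1\notin\mathcal P$, so the normalization of $\bm d$ is always well defined.
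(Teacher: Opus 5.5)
Your proposal is correct and follows the paper's proof essentially step for step. Necessity uses $\bm c=\Diag(\bm d)(X_1-Y)\in N_{\mathcal P}(Y)=N(\sigma)$ via \eqref{eq:master}, and sufficiency sets $d_k=c_k/(X_{1k}-Y_k)$ (or $d_k=1$ on zero coordinates) and normalizes to land in $\mathcal V_+$. Your closing check that $X_1-Y\neq 0$ is harmless but not needed: every $d_k$ is positive by construction, so the normalization is well defined regardless.
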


\begin{proof}
Suppose $Y = X_0\bm w(V)$ for some $V=\Diag(\bm d)$ with $\bm d>0$. By~\eqref{eq:master}, $\bm c=\Diag(\bm d)(X_1-Y)$ lies in $N_{\mathcal P}(Y)=N(\sigma)$, and because every $d_k$ is positive, $c_k$ has the sign of $X_{1k}-Y_k$ for each $k$, zeros included. Conversely, let $\bm c\in N(\sigma)$ have the same signs as $X_1-Y$, zeros included. Set $d_k=c_k/(X_{1k}-Y_k)$ when $X_{1k}\neq Y_k$ and $d_k=1$ otherwise. Every $d_k$ is positive, and $\Diag(\bm d)(X_1-Y)=\bm c$. Because $Y\in\relint(F_\sigma)$, $\bm c\in N(\sigma)=N_{\mathcal P}(Y)$, so by~\eqref{eq:master} $Y$ is the projection of $X_1$ under $\Diag(\bm d)$. The normalized matrix $\Diag(\bm d)/\sum_k d_k$ belongs to $\mathcal V_+$ and has the same projection, so $Y\in X_0\mathcal W(\mathcal V_+)$.
\end{proof}

The closure of the attainable set $X_0\mathcal W(\mathcal V_+)$ can be described face by face using weak sign inequalities. For a given face $F_\sigma$, define the allowed sign patterns by
\begin{equation}\label{eq:ray_sign_test}
 \mathcal S(\sigma)=\{\bm s\in\{-1,+1\}^K:
    \exists\,\bm c\in N(\sigma)\setminus\{0\}
    \text{ with }s_kc_k\geq0\text{ for every }k\}.
\end{equation}
For each $\bm s\in\mathcal S(\sigma)$, define its \emph{sign-compatible region} and corresponding donor weights by
\begin{align}
 R_{\bm s}(\sigma)&=\{Y\in F_\sigma:s_k(X_{1k}-Y_k)\geq0
                       \text{ for every }k\},\label{eq:ray_region}\\
 W_{\bm s}(\sigma)&=\{\bm w\in\Delta_\sigma:X_0\bm w\in R_{\bm s}(\sigma)\}.\label{eq:ray_weight_region}
\end{align}
Both sets are closed.

\begin{proposition}\label{prop:master}
Suppose Assumption~\ref{ass:gp} holds. Let $\Sigma$ denote the set of index sets $\sigma$ of the faces $F_\sigma$ contained in a facet visible from $X_1$. Then,
\begin{enumerate}
\item[(i)] $X_0\overline{\mathcal W(\mathcal V_+)}
 =\bigcup_{\sigma\in\Sigma}\bigcup_{\bm s\in\mathcal S(\sigma)}R_{\bm s}(\sigma)$.
\item[(ii)] $\overline{\mathcal W(\mathcal V_+)}
 =\bigcup_{\sigma\in\Sigma}\bigcup_{\bm s\in\mathcal S(\sigma)}W_{\bm s}(\sigma)$.
\item[(iii)] The value of~\eqref{eq:single_level_W0} is the least of the attained minima
\begin{equation}\label{eq:piece_qp}
 \min_{\bm w\in W_{\bm s}(\sigma)}\norm{Z_1-Z_0\bm w}^2,
 \quad W_{\bm s}(\sigma)\ne\varnothing,
\end{equation}
 over $\sigma\in\Sigma$ and $\bm s\in\mathcal S(\sigma)$.
\end{enumerate}
\end{proposition}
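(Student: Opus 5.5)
The plan is to prove (i) by two inclusions, and then obtain (ii) and (iii) from (i) using uniqueness of barycentric coordinates.

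\emph{Inclusion of the closure in the union.} The union on the right of (i) is a finite union of closed sets, hence closed. It therefore suffices to show $X_0\mathcal W(\mathcal V_+)\subseteq\bigcup_{\sigma\in\Sigma}\bigcup_{\bm s\in\mathcal S(\sigma)}R_{\bm s}(\sigma)$. Take $Y=X_0\bm w(V)$ with $V\in\mathcal V_+$.
\begin{itemize}
\item By Lemma~\ref{lem:visible_projection}, $Y$ lies on a visible facet. Under Assumption~\ref{ass:gp}(ii), $Y$ lies in $\relint(F_\sigma)$ for a unique face $\sigma$, and this face is contained in that facet, so $\sigma\in\Sigma$.
\item By Lemma~\ref{lem:exact-face}, some $\bm c\in N(\sigma)$ has the same signs as $X_1-Y$. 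Since $X_1\notin\mathcal P$, $X_1\neq Y$, so $\bm c\neq0$.
\item Define $\bm s$ by $s_k=\sgn(c_k)$ when $c_k\neq0$ and $s_k=+1$ otherwise. Then $s_kc_k\ge0$, so $\bm s\in\mathcal S(\sigma)$.
\item Because $X_{1k}-Y_k$ has the sign of $c_k$, we get $s_k(X_{1k}-Y_k)\ge0$, so $Y\in R_{\bm s}(\sigma)$.
\end{itemize}

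\emph{Reverse inclusion.} This is the main obstacle. Fix $\sigma\in\Sigma$, $\bm s\in\mathcal S(\sigma)$ and $Y\in R_{\bm s}(\sigma)$. We must show $Y$ is a limit of attainable points. Fix a nonzero $\bm c\in N(\sigma)$ with $s_kc_k\ge0$. The approach is to perturb both $Y$ and $\bm c$ so that the sign test of Lemma~\ref{lem:exact-face} holds exactly with strict signs.

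\textbf{Step 1: perturb $\bm c$.} Seek $\bm c$ with every $c_k\neq0$ and $\sgn(c_k)=s_k$.
\begin{itemize}
\item Inside the cone $N(\sigma)$, the vectors with nonzero coordinates are dense unless $N(\sigma)$ lies in a coordinate hyperplane.
\item If $N(\sigma)$ does lie in a coordinate hyperplane, use the genericity in Assumption~\ref{ass:gp}(ii) applied to subsets $S$ of predictors, or move to a subface $\sigma'\subseteq\sigma$. A subface has a larger normal cone $N(\sigma')\supseteq N(\sigma)$.
\item Expect Assumption~\ref{ass:gp}(ii) to exclude the degenerate case where $N(\sigma)$ is forced onto a coordinate hyperplane. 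If it does not, pass to a smaller face and argue by induction on $\dim F_\sigma$, which is where I expect the technical work to be.
\end{itemize}

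\textbf{Step 2: perturb $Y$.} Seek $Y'\in\relint(F_\sigma)$ with $\sgn(X_{1k}-Y'_k)=s_k$ for all $k$. If $Y'$ lies in the relative interior of a proper subface $\sigma'$ instead, the same construction works there, since $N(\sigma)\subseteq N(\sigma')$.
\begin{itemize}
\item The region $R_{\bm s}(\sigma)$ is a polytope cut from $F_\sigma$ by weak inequalities.
\item Its points with strict inequalities are dense in it when nonempty. Check this using the affine independence of donors restricted to predictor subsets, which keeps the hyperplanes $Y_k=X_{1k}$ in general position with respect to $F_\sigma$.
\end{itemize}

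\textbf{Step 3: conclude.} Once both perturbations are in hand, Lemma~\ref{lem:exact-face} applies with all signs nonzero. Hence $Y'\in X_0\mathcal W(\mathcal V_+)$, and letting $Y'\to Y$ gives $Y\in X_0\overline{\mathcal W(\mathcal V_+)}$.

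\emph{Parts (ii) and (iii).} For (ii), use that $\bm w\mapsto X_0\bm w$ restricted to weights supported on a face is a homeomorphism onto that face, by affine independence. Every boundary point has unique weights, and the attainable weights are exactly $\{\bm w\in\Delta^J: X_0\bm w\in\text{attainable set}\}\cap\bigcup_\sigma\Delta_\sigma$. Closures therefore transfer from (i) to (ii). For (iii), (ii) expresses the feasible set as a finite union of compact polytopes $W_{\bm s}(\sigma)$. Minimizing the continuous validation loss over each nonempty piece attains its minimum, and the smallest of these minima is the value of~\eqref{eq:single_level_W0}.
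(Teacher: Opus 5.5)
Your forward inclusion, and the passage from (i) to (ii) and (iii) via the unique weights of boundary points, are fine. The gap is in the reverse inclusion. The decisive failure is Step~2. The strict-sign points of $R_{\bm s}(\sigma)$ need not be dense in it, because Assumption~\ref{ass:gp} imposes no genericity on $X_1$ relative to the donors. Your appeal to general position of the hyperplanes $Y_k=X_{1k}$ therefore has nothing to rest on.

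Here is a counterexample. Take $K=2$, donors $X_a=(0,1)$, $X_b=(1,0)$, $X_c=(-1,1.5)$, $X_d=(-0.5,-1)$, and $X_1=(0,2)$. Assumption~\ref{ass:gp} holds. The set $\sigma=\{a,b\}$ is a visible facet with $N(\sigma)=\cone\{(1,1)\}$ and $\mathcal S(\sigma)=\{(+1,+1)\}$, and $R_{(+1,+1)}(\sigma)=\{X_a\}$ with $X_1-X_a=(0,1)$. No point of $F_\sigma$, subfaces included, has a residual with strict signs $(+1,+1)$. Nor is $X_a$ itself attainable, since $N_{\mathcal P}(X_a)=\cone\{(1,1),(0.5,1)\}$ contains no vector of sign $(0,+1)$. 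So $X_a$ is a limit of attainable points only along the other facet $[X_a,X_c]$, and no perturbation of $Y$ inside $F_\sigma$ can close the argument.

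The missing idea, which is how the paper proceeds, is not to perturb $Y$ at all. Let $\widetilde{\bm c}\in N(\sigma)$ be the strictly $\bm s$-signed vector from Step~1 and $\bm r=X_1-Y$. Put $d_k=\widetilde c_k/r_k>0$ where $r_k\neq0$, and weight $t$ where $r_k=0$. Since $\widetilde{\bm c}\in N(\sigma)\subseteq N_{\mathcal P}(Y)$, the point $Y$ is the unique minimizer of $\sum_{k:r_k\neq0}d_k(X_{1k}-z_k)^2$ over $\{z\in\mathcal P: z_k=X_{1k}\text{ whenever }r_k=0\}$. Hence the projections $Y_t$ under these positive diagonal metrics converge to $Y$ as $t\to\infty$, wherever on $\mathcal P$ they land. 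In the example, $Y_t=(-u,1+u/2)$ with $u=1/(2t+1/2)$, which lies on $[X_a,X_c]$ for large $t$.

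Step~1 is also unproved, and the criterion you propose is the wrong one. You need a vector of $N(\sigma)$ in the \emph{open} orthant of $\bm s$. Density of nonzero-coordinate vectors in $N(\sigma)$ says nothing about the signs a perturbation of $\bm c$ acquires where $c_k=0$. For instance, $\cone\{(1,1,-1),(1,-1,1)\}$ lies in no coordinate hyperplane, yet it meets the closed positive orthant only along $(1,0,0)$. Moving to the larger cone of a subface is also not legitimate for points of $\relint(F_\sigma)$.

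What works is to apply Assumption~\ref{ass:gp}(ii) with $S=\supp(\bm c)$. The donors of the face $F$ exposed by $\bm c$ lie on a hyperplane in these coordinates, so there are at most $|S|$ of them and they are affinely independent there. You can then set $h_k=s_k$ wherever $c_k=0$ and solve for the remaining coordinates of $\bm h$ so that $\bm h'(X_j-X_a)=0$ for every donor $X_j$ of $F$. For small $\varepsilon>0$, the vector $\bm c+\varepsilon\bm h$ stays in $N(\sigma)$ and has signs exactly $\bm s$.
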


\begin{proof}
Fix $\sigma\in\Sigma$ and $\bm s\in\mathcal S(\sigma)$. Some nonzero $\bm c\in N(\sigma)$ satisfies $s_kc_k\geq0$ for every $k$. The proof has three steps. First, we replace $\bm c$ by $\widetilde{\bm c}\in N(\sigma)$ whose coordinates are all nonzero with the same signs as $\bm s$. Second, we use $\widetilde{\bm c}$ to show that every $\bm w\in W_{\bm s}(\sigma)$ is attained by a positive diagonal matrix $V$ or is a limit of weights attained by positive diagonal matrices, so that $W_{\bm s}(\sigma)\subseteq\overline{\mathcal W(\mathcal V_+)}$. Third, we show that every weight vector in $\mathcal W(\mathcal V_+)$ lies in some  $W_{\bm s}(\sigma)$ with $\sigma\in\Sigma$, and that the union of these finitely many closed regions is closed, so that it also contains $\overline{\mathcal W(\mathcal V_+)}$.

\emph{First step:} Let $F$ be the face of $\mathcal P$ supported by $\bm c$, that is, the intersection of $\mathcal P$ with the supporting hyperplane with normal $\bm c$. By~\eqref{eq:cone_direct}, $F$ contains the donors in $\sigma$, and possibly other donors $X_i$ with $i\notin\sigma$ and $\bm c'X_i=\bm c'X_a$. The idea is to perturb $\bm c$ to $\widetilde{\bm c}=\bm c+\varepsilon\bm h$, with $\bm h$ chosen so that $\widetilde{\bm c}'X_j$ remains constant on the donors in $F$, while $\widetilde c_k$ has the sign $s_k$ at every coordinate where $c_k=0$. 

Let $Q=\{k:c_k\neq 0\}$ be the set of coordinates where $\bm c$ is nonzero and let $p=|Q|$. All donors $X_j$ in $F$ attain the same value of $\bm c'X_j$. Because the coordinates of $\bm c$ outside $Q$ are equal to zero, the donors in $F$, restricted to the coordinates in $Q$, lie on a hyperplane of $\R^p$. By Assumption~\ref{ass:gp}(ii), in the coordinates of $Q$, $F$ has at most $p$ donors and they are affinely independent.

Let $E$ be the matrix whose rows are the differences $(X_j-X_a)'$ over the donors $X_j$ in $F$ other than the anchor $X_a$ of~\eqref{eq:cone_direct}. Let $E_Q$ and $E_Z$ be the submatrices of $E$ formed by the columns in $Q$ and in its complement $Z$. Because the donors in $F$ are affinely independent in the coordinates in $Q$, the rows of $E_Q$ are linearly independent. For $\bm h\in\R^K$, write $\bm h_Q$ and $\bm h_Z$ for its coordinates in $Q$ and in $Z$, so that $E\bm h=E_Q\bm h_Q+E_Z\bm h_Z$. 
We choose $\bm h$ so that coordinates outside $Q$ have the same signs as in $\bm s$ and its coordinates in $Q$ are such that $E\bm h=0$, which keeps $(\bm c+\varepsilon\bm h)'X_j$ constant on the donors of $F$ for any $\varepsilon$.
In particular, we set $h_k=s_k$ for $k\in Z$ and solve
\[
E_Q\bm h_Q=-E_Z\bm h_Z
\]
for $\bm h_Q$, which is possible because the rows of $E_Q$ are linearly independent.
Choose $\varepsilon>0$ small enough so that $(\bm c+\varepsilon\bm h)'X_i<(\bm c+\varepsilon\bm h)'X_a$ for every donor $X_i$ outside $F$ and $c_k+\varepsilon h_k$ has the sign of $c_k$ for every $k\in Q$. Since $F$ contains the donors of $\sigma$, the construction guarantees $\widetilde{\bm c}=\bm c+\varepsilon\bm h\in N(\sigma)$. Moreover, $s_k\widetilde c_k>0$ for every $k$, because $\widetilde c_k=\varepsilon s_k$ for $k\in Z$, while $\widetilde c_k$ has the sign $s_k$ of $c_k$ for $k\in Q$.

\emph{Second step:} We now show that $W_{\bm s}(\sigma)\subseteq\overline{\mathcal W(\mathcal V_+)}$. Take $\bm w\in W_{\bm s}(\sigma)$, let $Y=X_0\bm w$, and let $\bm r=X_1-Y$. Because $\widetilde{\bm c}\in N(\sigma)\subseteq N_{\mathcal P}(Y)$, $\widetilde{\bm c}'(z-Y)\leq0$ for every $z\in\mathcal P$. For each coordinate $k$ with $r_k\neq0$, set $d_k=\widetilde c_k/r_k$, which is positive because $\widetilde c_k$ and $r_k$ both have the sign $s_k$. If no $r_k$ is zero, the positive diagonal matrix $V=\Diag(d_1,\ldots,d_K)$ satisfies $V(X_1-Y)=\widetilde{\bm c}\in N_{\mathcal P}(Y)$, so $Y=X_0\bm w(V)$ by~\eqref{eq:master} and $\bm w=\bm w(V)$. Otherwise, we show that $Y$ is a limit of projections of $X_1$ under strictly positive diagonal metrics. Let
\[
g(z)=\sum_{k:r_k\ne0}d_k(X_{1k}-z_k)^2,
\quad h(z)=\sum_{k:r_k=0}(X_{1k}-z_k)^2.
\]
That is, $h(z)=0$ exactly when $z$ agrees with $X_1$ on the coordinates where $Y$ does, and $g(z)$ is the weighted squared distance between $z$ and $X_1$ on the remaining coordinates. We first show that $Y$ is the point of $\mathcal P$ that minimizes $g$ among those with $h=0$. Then, we show that projections under metrics that put increasing weight on $h$ converge to $Y$.

For $z\in\mathcal P$ with $h(z)=0$, let $\bm\delta=z-Y$. If $r_k=0$, then $\delta_k=0$, because $h(z)=h(Y)=0$. If $r_k\neq0$, then $d_kr_k=\widetilde c_k$. Hence $\sum_{k:r_k\ne0}\widetilde c_k\delta_k=\sum_{k}\widetilde c_k\delta_k=\widetilde{\bm c}'(z-Y)$, and
\[
\begin{aligned}
g(z)-g(Y)&=\sum_{k:r_k\ne0}d_k\bigl[(r_k-\delta_k)^2-r_k^2\bigr]
=-2\sum_{k:r_k\ne0}d_kr_k\delta_k+\sum_{k:r_k\ne0}d_k\delta_k^2\\
&=-2\sum_{k:r_k\ne0}\widetilde c_k\delta_k+\sum_{k:r_k\ne0}d_k\delta_k^2
=-2\,\widetilde{\bm c}'(z-Y)+\sum_{k:r_k\ne0}d_k\delta_k^2
\geq\sum_{k:r_k\ne0}d_k\delta_k^2,
\end{aligned}
\]
where the last step uses $\widetilde{\bm c}'(z-Y)\leq0$.
If $z\neq Y$, some $\delta_k$ is nonzero with $r_k\neq0$. Hence $Y$ uniquely minimizes $g$ subject to $h=0$.

We next introduce a metric that puts weight $t$ on the coordinates where the residual is zero and lets $t\to\infty$, which forces the projection to match those coordinates exactly in the limit.
For $t>0$, let $V_t$ be the diagonal metric with entries $d_k$ where $r_k\ne0$ and $t$ where $r_k=0$. Rescaling a metric does not change the projection, so $V_t/\tr(V_t)\in\mathcal V_+$ gives the same projection as $V_t$. Because $V_t$ is diagonal, for every $z\in\R^K$,
\begin{align*}
\norm{X_1-z}_{V_t}^2&=(X_1-z)'V_t(X_1-z)\\&=\sum_{k:r_k\ne0}d_k(X_{1k}-z_k)^2+t\sum_{k:r_k=0}(X_{1k}-z_k)^2\\&=g(z)+th(z).
\end{align*}
Let $Y_t=X_0\bm w(V_t)$ be the projection of $X_1$ under $V_t$. Since $Y_t$ minimizes $g(z)+th(z)$ over $z\in\mathcal P$, $Y\in\mathcal P$, and $h(Y)=0$, we obtain
\[
g(Y_t)+t h(Y_t)\leq g(Y)+t h(Y)=g(Y).
\]
Both $g$ and $h$ are nonnegative, so $h(Y_t)\leq g(Y)/t$ and $g(Y_t)\leq g(Y)$. Any limit point $\bar Y$ of $Y_t$ as $t\to\infty$, which exists because $\mathcal P$ is compact, satisfies $h(\bar Y)=0$ and $g(\bar Y)\leq g(Y)$, so $\bar Y=Y$, because $Y$ is the unique minimizer of $g$ over $\{z\in\mathcal P: h(z)=0\}$, and $Y_t\to Y$. Because $\Delta^J$ is compact and $Y$ has a unique donor-weight representation, every limit point of $\bm w(V_t)$ equals $\bm w$, so $\bm w(V_t)\to\bm w$. Hence $\bm w\in\overline{\mathcal W(\mathcal V_+)}$.

\emph{Third step:} 
Let $V=\Diag(\bm d)$ be a positive diagonal metric, let $Y=X_0\bm w(V)$, and let $F_\sigma$ be the face with $Y\in\relint(F_\sigma)$. By~\eqref{eq:master}, $\bm c=V(X_1-Y)$ is a nonzero vector in $N_{\mathcal P}(Y)=N(\sigma)$. 
Choose $\bm s\in\{-1,+1\}^K$ by setting $s_k=\sgn(c_k)$ when $c_k\ne0$ and choosing either $+1$ or $-1$ when $c_k=0$. Then $\bm s\in\mathcal S(\sigma)$, and $Y$ satisfies~\eqref{eq:ray_region} because $X_1-Y=V^{-1}\bm c$ has the signs of $\bm c$. Lemma~\ref{lem:visible_projection}, proved in Appendix~\ref{app:geom_proofs}, implies that $F_\sigma$ is in a visible facet, so $\sigma\in\Sigma$. Thus the union $\bigcup_{\sigma\in\Sigma}\bigcup_{\bm s\in\mathcal S(\sigma)}W_{\bm s}(\sigma)$ contains $\mathcal W(\mathcal V_+)$. The union is finite, so it is closed and contains $\overline{\mathcal W(\mathcal V_+)}$. This proves (ii), and (i) and (iii) follow directly.
\end{proof}

\subsection{Enumerating the Visible Facets}\label{sec:app_visible}

By Lemma~\ref{lem:visible_projection}, every attainable projection lies on a visible facet. This section describes an algorithm that enumerates the visible facets of the donor hull. For a facet $F$ of $\mathcal{P}$, write its supporting hyperplane as $\bm a_F'Y+b_F=0$, oriented so that $\bm a_F'X_j+b_F\leq0$ for every donor. By Lemma~\ref{lem:visibility_facet}, it is visible if and only if $\bm a_F'X_1+b_F>0$. 

Algorithm \ref{alg:vfacets} relies on a classical fact from convex geometry. The facets visible from an exterior point form a connected subgraph of the \emph{ridge-adjacency graph} of $\mathcal{P}$, the graph whose nodes are the facets of $\mathcal{P}$, with two facets joined when they share a ridge. 

\begin{algorithm}[t]
\caption{\textsc{VisibleFacets}($X_0$, $X_1$): Visible Facet Enumeration}
\label{alg:vfacets}
\begin{algorithmic}[1]
\Require donor predictors $X_0$, treated predictors $X_1\notin\mathcal P$
\Ensure the set $\mathcal F_{\mathrm{vis}}$ of facets visible from $X_1$
\State $p \gets J^{-1}\sum_{j=2}^{J+1}X_j$
\State $F_0 \gets$ a visible facet containing the exit point of $[p,X_1]$
\State $\mathcal F_{\mathrm{vis}}\gets\{F_0\}$; \ $Q\gets[F_0]$
\While{$Q\neq\varnothing$}
 \State $F\gets\mathrm{pop}(Q)$
 \ForAll{ridges $\rho\subset F$}
  \State $G\gets$ neighbor facet across $\rho$
  \If{$G\notin\mathcal F_{\mathrm{vis}}$ and $\bm a_G'X_1+b_G>0$}
   \State $\mathcal F_{\mathrm{vis}}\gets\mathcal F_{\mathrm{vis}}\cup\{G\}$; \ $\mathrm{push}(Q,G)$
  \EndIf
 \EndFor
\EndWhile
\State \Return $\mathcal F_{\mathrm{vis}}$
\end{algorithmic}
\algnotes{Curly braces denote a set and square brackets an ordered list. $\mathrm{pop}(Q)$ removes and returns the first facet; $\mathrm{push}(Q,G)$ appends $G$.}
\end{algorithm}

\begin{lemma}\label{lem:vis-conn}
For a full-dimensional polytope and an exterior point $X_1$, any two visible facets can be joined by a sequence of visible facets in which successive facets share a ridge.
\end{lemma}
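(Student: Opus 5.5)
We prove the statement with a generic projection (a polar/Gale-type argument), which reduces the claim to the connectivity of a sphere with a cell decomposition. Write the facets as $F_i$ with outward normals $\bm a_i$ and offsets $b_i$, so that $\mathcal P=\{Y:\bm a_i'Y+b_i\le 0\}$. Let $\mathcal I_{\mathrm{vis}}=\{i:\bm a_i'X_1+b_i>0\}$ be the visible facets, as in Lemma~\ref{lem:visibility_facet}. Fix two visible facets $F$ and $G$, and pick points $Y_F\in\relint(F)$ and $Y_G\in\relint(G)$. The plan is to move continuously from $Y_F$ to $Y_G$ along $\partial\mathcal P$, staying inside the union of the visible facets and avoiding all faces of dimension at most $K-3$. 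The sequence of facets the path crosses then gives the required chain, since consecutive facets are entered only through relative interiors of ridges.

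To build the path, first consider the central projection from $X_1$. Every ray from $X_1$ that meets $\mathcal P$ enters it at a point of the union $U$ of the visible facets. That entry point is visible by Definition~\ref{def:visibility}, and by Lemma~\ref{lem:visible_projection}-style reasoning (or directly from Lemma~\ref{lem:visibility_facet}) it lies on a facet with $\bm a_i'X_1+b_i>0$. Conversely, each point of $U$ is the entry point of exactly one such ray. Hence the map $\phi:U\to S$, which sends $Y$ to the direction $(Y-X_1)/\norm{Y-X_1}$, is a homeomorphism onto the set $S$ of directions of rays from $X_1$ that hit $\mathcal P$. This uses injectivity: if two visible points lay on one ray, the nearer one would sit on the open segment from $X_1$ to the farther one, contradicting the farther point's visibility.

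The set $S$ is the radial image of the convex body $\mathcal P$. It is therefore a geodesically convex, full-dimensional subset of the unit sphere $\mathbb S^{K-1}$, contained in an open hemisphere because $X_1\notin\mathcal P$. Such a set is homeomorphic to a closed $(K-1)$-ball, so $U$ is a topological $(K-1)$-ball. In particular, $U$ is connected, and its interior relative to $\partial\mathcal P$ remains path connected after removing any finite union of sets of dimension at most $K-3$. This follows because removing codimension-$\ge 2$ subsets does not disconnect a connected manifold (here the interior of a $(K-1)$-ball).

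The main obstacle is to make the path argument rigorous while avoiding heavy topology. My first attempt would be explicit. Take the segment from $\phi(Y_F)$ to $\phi(Y_G)$ inside the convex spherical set $S$, and perturb its endpoints within the open sets $\phi(\relint F)$ and $\phi(\relint G)$ so that the segment is in general position. Concretely, the geodesic arc spans a 2-plane through $X_1$ in $\R^K$. A generic such 2-plane, translated through $X_1$, misses all faces of $\mathcal P$ of dimension at most $K-3$, since each such face has codimension at least $3$ and the affine 3-spaces containing the plane's extra directions form a measure-zero set. Pulling the arc back by $\phi^{-1}$ then gives a path in $U$ that passes from facet to facet only through relative interiors of ridges, each shared by two visible facets. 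Reading off the facets in order yields the chain. This completes the argument and also justifies the breadth-first search in Algorithm~\ref{alg:vfacets}.
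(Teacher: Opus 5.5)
Your argument is correct, but it takes a different route from the paper. The paper's proof is only a citation of Ziegler's Theorem 8.12, the Bruggesser--Mani line-shelling result. Along a generic line from an interior point of $\mathcal P$ through $X_1$, the facets visible from $X_1$ form an initial segment of a shelling of $\partial\mathcal P$. In a shelling, each facet meets the union of its predecessors in a union of ridges, so the visible facets are ridge-connected; the theorem also gives the stronger fact that they form a shellable $(K-1)$-ball.

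You instead slice by a generic 2-plane $\Pi$ through $X_1$, a point of $\relint F$, and a point of $\relint G$. Then $\mathcal P\cap\Pi$ is a convex polygon whose edges are the sets $\Pi\cap H$ for facets $H$ and whose vertices lie in relative interiors of ridges. The stretch of its boundary seen from $X_1$ between the two points gives an explicit chain. This is elementary and constructive, and it proves directly what the paper outsources to shellability.

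The cost is a genericity argument, and the write-up should be tightened in two places:
\begin{enumerate}
\item Your sentence about ``affine 3-spaces'' is not a valid justification as written. The correct count: for a face $C$ with $\dim C\le K-3$, the subspace $\mathrm{span}(C-X_1)$ has dimension at most $K-2$, and $X_1\notin C$. So $X_1+D$ meets $C$ only if the 2-dimensional direction space $D$ meets that subspace nontrivially, which a generic $D$ does not. Because $(u,v)\mapsto\mathrm{span}(u,v)$ is an open map, generic $u\in\phi(\relint F)$ and $v\in\phi(\relint G)$ suffice.
\item The claim that each crossed ridge is ``shared by two visible facets'' needs its one-line reason. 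Within $\Pi$, the edge $\Pi\cap H$ is visible from $X_1$ exactly when $\bm a_H'X_1+b_H>0$; equivalently, a point of $\relint H$ lies in $U$ only if $H$ is visible. Moreover, consecutive edges $\Pi\cap H_1$ and $\Pi\cap H_2$ meet at a point in the relative interior of some ridge $R$, which forces $H_1\cap H_2=R$.
\end{enumerate}

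Your paragraph showing that $U$ is a topological ball, and the remark about removing codimension-two sets, are not used by the explicit slicing argument and can be dropped.
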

\begin{proof}
 The result of the lemma is implied by Theorem 8.12 in \citet{Ziegler1995}. 
\end{proof}

\begin{proposition}\label{prop:vfacets}
Algorithm~\ref{alg:vfacets} terminates after finitely many steps and returns exactly the visible facets.
\end{proposition}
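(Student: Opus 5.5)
Three things need checking: that Algorithm~\ref{alg:vfacets} terminates, that every facet it returns is visible (soundness), and that every visible facet is returned (completeness). The plan is to read the algorithm as a breadth-first search on the ridge-adjacency graph restricted to visible facets, started from a visible seed, and then invoke Lemma~\ref{lem:vis-conn}.

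\emph{Termination.} $\mathcal P$ has finitely many facets, since each facet is the convex hull of a subset of the $J$ donors. A facet is pushed onto $Q$ only when it is added to $\mathcal F_{\mathrm{vis}}$, and only if it is not already there. So each facet enters $Q$ at most once. Each pop then processes finitely many ridges: by Assumption~\ref{ass:gp}(ii) a facet is a $(K-1)$-simplex with $K$ ridges. The loop therefore ends after at most as many iterations as there are facets.

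\emph{Well-definedness of the seed and soundness.} Let $p$ be the donor centroid. Because the hull is full-dimensional (Assumption~\ref{ass:gp}(ii)--(iii)), $p\in\mathrm{Int}(\mathcal P)$, while $X_1\notin\mathcal P$ by Assumption~\ref{ass:gp}(i). The segment $[p,X_1]$ therefore has a unique exit point $Y_0\in\partial\mathcal P$, with $Y_0=p+t_0(X_1-p)$ and $t_0\in(0,1)$. Take any facet $F_0\ni Y_0$, with $\bm a'Y_0+b=0$. Since $p$ is interior, $\bm a'p+b<0$. The function $t\mapsto \bm a'(p+t(X_1-p))+b$ is affine, negative at $0$ and zero at $t_0<1$, so it is positive at $t=1$. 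Hence $\bm a'X_1+b>0$, and $F_0$ is visible by Lemma~\ref{lem:visibility_facet}. Every later addition $G$ passes the test $\bm a_G'X_1+b_G>0$, so by the same lemma every returned facet is visible.

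\emph{Completeness.} This is the step I expect to carry the real content, and it rests on Lemma~\ref{lem:vis-conn}. Take any visible facet $G$. By that lemma there is a chain $F_0=G_0,G_1,\dots,G_m=G$ of visible facets in which successive facets share a ridge. I argue by induction on $i$ that each $G_i$ eventually enters $\mathcal F_{\mathrm{vis}}$. Whenever a facet enters $\mathcal F_{\mathrm{vis}}$ it is also pushed onto $Q$, and since the loop runs until $Q$ is empty, it is eventually popped. When $G_i$ is popped, the algorithm scans every ridge of $G_i$, including the ridge it shares with $G_{i+1}$. Each ridge of a full-dimensional polytope lies in exactly two facets, so the neighbor across that ridge is exactly $G_{i+1}$. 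Because $G_{i+1}$ is visible, it passes the test and is added if it is not already present. Hence $G\in\mathcal F_{\mathrm{vis}}$. Combined with soundness, the algorithm returns exactly the visible facets.

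The only delicate points are that the seed exists and is visible, and that the ridge-neighbor is well defined. Both follow from full-dimensionality, so they are handled as above.
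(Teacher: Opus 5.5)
Your proof is correct and follows the paper's own argument. Soundness comes from the visibility test via Lemma~\ref{lem:visibility_facet}, termination from each facet being added at most once, and completeness from Lemma~\ref{lem:vis-conn} together with the scan of every ridge of every added facet. You also supply details the paper takes for granted: that the centroid $p$ is interior, so the exit point exists and every facet through it is visible, and that the neighbor across a shared ridge is unique.
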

\begin{proof}
The starting facet and every added facet are visible. Each facet is added at most once. The walk therefore terminates. By Lemma~\ref{lem:vis-conn}, every visible facet is connected to the start through visible neighbors, and the algorithm examines every ridge of every facet it adds. It consequently reaches every visible facet.
\end{proof}

\subsection{Enumerating Sign Patterns}\label{sec:app_pieces}
This section presents an algorithm that returns $\mathcal S(\sigma)$ for a face indexed by $\sigma$. Let $m$ be the number of donors in $\sigma$ and let $a\in\sigma$ be the anchor in~\eqref{eq:cone_direct}. By~\eqref{eq:ray_sign_test}, a sign pattern $\bm s$ is in $\mathcal S(\sigma)$ when some nonzero $\bm c\in N(\sigma)$ has $s_kc_k\geq0$ for every $k$. Algorithm~\ref{alg:pieces} produces a list of nonzero vectors in $N(\sigma)$ and, for each vector $\bm c$ in the list, it records every sign pattern $\bm s$ with $s_kc_k\geq0$ for all $k$. Every $\bm c\in N(\sigma)$ satisfies the $m-1$ equations $\bm c'X_j=\bm c'X_a$ for $j\in\sigma\setminus\{a\}$. The algorithm adds $K-m$ equations, each of the form $(X_a-X_i)'\bm c=0$ for a donor $i$ outside $\sigma$ or $c_k=0$ for a coordinate $k$, and solves the resulting $K-1$ equations for a direction $\bm c_T$. It keeps $\bm c_T$ or $-\bm c_T$ when the vector satisfies $(X_a-X_i)'\bm c\geq0$ for every donor $i$ outside $\sigma$.

\begin{algorithm}[t]
\caption{\textsc{SignPatterns}($\sigma$, $X_0$): Sign Enumeration}
\label{alg:pieces}
\begin{algorithmic}[1]
\Require face $\sigma$, donor predictors $X_0$
\Ensure the set $\mathcal S(\sigma)$ of allowed sign patterns in~\eqref{eq:ray_sign_test}
\State Choose anchor $a\in\sigma$; \ $m\gets|\sigma|$; \ $\mathcal S(\sigma)\gets\varnothing$
\State $E\gets[(X_j-X_a)']_{j\in\sigma\setminus\{a\}}$
\State $A\gets[(X_a-X_i)']_{i\in\{2,\ldots,J+1\}\setminus\sigma}$
\State $M\gets[A;I_K]$
\ForAll{$(K-m)$-row submatrices $T$ of $M$}
 \State Solve $E\bm c_T=0$, $T\bm c_T=0$, $\norm{\bm c_T}=1$
 \ForAll{$\bm v\in\{+\bm c_T,-\bm c_T\}$}
  \If{$A\bm v\geq0$}
   \State $\mathcal S(\sigma)\gets\mathcal S(\sigma)\cup\{\bm s\in\{-1,+1\}^K:s_kv_k\geq0\ \forall k\}$
  \EndIf
 \EndFor
\EndFor
\State \Return $\mathcal S(\sigma)$
\end{algorithmic}
\algnotes{Assumption~\ref{ass:gp}(ii) gives $\operatorname{rank}[E;T]=K-1$ for every selection. If $m=K$, $T$ has no rows and $E\bm c_T=0$ defines the facet-normal line. The rows of $A$ in $T$ can be restricted to the vertices of the facets that contain  $F_\sigma$, because only those can satisfy $(X_a-X_i)'\bm c=0$ for some nonzero $\bm c\in N(\sigma)$. Likewise, the rows of $I_K$, which bound the orthants, can be restricted to the coordinates $k$ such that $c_k=0$ for some nonzero $\bm c\in N(\sigma)$.}
\end{algorithm}

\begin{lemma}\label{lem:ray_complete}
Under Assumption~\ref{ass:gp}(ii), Algorithm~\ref{alg:pieces} returns exactly $\mathcal S(\sigma)$ in~\eqref{eq:ray_sign_test}.
\end{lemma}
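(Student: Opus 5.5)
Two inclusions need proving: every pattern the algorithm records lies in $\mathcal S(\sigma)$ (soundness), and every pattern in $\mathcal S(\sigma)$ is recorded (completeness). Soundness is immediate from the construction. Any vector $\bm v=\pm\bm c_T$ kept by the algorithm satisfies $E\bm v=0$, which gives the equalities in~\eqref{eq:cone_direct}, and $A\bm v\geq0$, which gives the inequalities. It is nonzero because $\norm{\bm c_T}=1$. Hence $\bm v\in N(\sigma)\setminus\{0\}$, and every $\bm s$ recorded from $\bm v$ satisfies $s_kv_k\geq0$. So all recorded patterns belong to $\mathcal S(\sigma)$.

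For completeness, fix $\bm s\in\mathcal S(\sigma)$ and consider the polyhedral cone
\[
C_{\bm s}=N(\sigma)\cap\{\bm c: s_kc_k\geq0 \text{ for all } k\}.
\]
By definition of $\mathcal S(\sigma)$ this cone contains a nonzero vector. It is also pointed, because it lies inside a closed orthant and so contains no line. A pointed polyhedral cone that is not $\{0\}$ is generated by its extreme rays, so $C_{\bm s}$ has at least one extreme ray $\bm v$. The plan is to show that every such extreme ray is, up to sign, one of the vectors $\bm c_T$ produced in the algorithm's loop and then kept by the test $A\bm v\geq0$. If so, $\bm v$ lies in the orthant of $\bm s$, i.e.\ $s_kv_k\geq0$ for all $k$, and the algorithm records $\bm s$ when it processes $\bm v$.

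The key step is identifying the extreme rays. $C_{\bm s}$ is defined inside $\R^K$ by the $m-1$ equalities $E\bm c=0$ together with the inequalities given by the rows of $A$ and the rows $\pm e_k'$. A nonzero vector $\bm v$ spans an extreme ray exactly when the constraints active at $\bm v$ have rank $K-1$. The active constraints consist of $E$, the rows of $A$ with $(X_a-X_i)'\bm v=0$, and the coordinates with $v_k=0$; each of the latter two is a row of $M=[A;I_K]$. Assumption~\ref{ass:gp}(ii) gives that $E$ has full row rank $m-1$, since the donors in $\sigma$ are affinely independent. We can therefore extend the rows of $E$ by $K-m$ active rows of $M$ so that the stacked matrix $[E;T]$ has rank $K-1$. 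Its null space is one-dimensional and contains $\bm v$, so $\bm v$ is a positive multiple of $\bm c_T$ or of $-\bm c_T$. Since $\bm v\in N(\sigma)$, that signed vector passes the test $A\bm v\geq0$ and is kept.

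The main obstacle is the rank bookkeeping. We must confirm that some $(K-m)$-row subset $T$ of the active rows of $M$ actually reaches rank $K-1$ together with $E$; this is a standard basis-extension argument, but it needs care when active rows are redundant. We must also reconcile this with the algorithm's note that $\operatorname{rank}[E;T]=K-1$ for every selection. If, contrary to that note, some selection $T$ has rank below $K-1$, that $T$ does not affect correctness: completeness needs only the particular $T$ built from the active rows, and soundness holds for any nonzero null vector the solver returns. Finally, when $m=K$ the set $T$ is empty and $E$ alone defines the facet-normal line, so the same argument applies directly.
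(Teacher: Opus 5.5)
Your proposal is correct and follows the paper's proof. Soundness holds because any kept $\bm v$ satisfies $E\bm v=0$ and $A\bm v\geq 0$. Completeness comes from an extreme ray of the pointed cone $N(\sigma)\cap\{\bm c: s_kc_k\geq0\}$, where $E$ supplies $m-1$ of the $K-1$ independent active equations and $K-m$ active rows of $M$ form a selection $T$ that the algorithm enumerates.

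The one difference is that the paper first proves $\operatorname{rank}[E;T]=K-1$ for every selection, by applying Assumption~\ref{ass:gp}(ii) on the $K-q$ coordinates not zeroed by the $q$ identity rows of $T$. So your hedge about rank-deficient selections is unnecessary, though your basis-extension argument does not depend on that fact.
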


\begin{proof}
For a sign pattern $\bm s$, let $O_{\bm s}=\{\bm c\in\R^K:s_kc_k\geq0\text{ for all }k\}$ be its closed orthant. By~\eqref{eq:ray_sign_test}, $\bm s\in\mathcal S(\sigma)$ when $N(\sigma)\cap O_{\bm s}$ contains a nonzero vector. The orthant is bounded by the coordinate hyperplanes $c_k=0$, which are the rows of $I_K$ in the algorithm.

We first show that the $m-1$ equations $E\bm c=0$ and the $K-m$ equations $T\bm c=0$ are linearly independent, so that their solutions form a line through the origin and the unit vector $\bm c_T$ on it is determined up to sign. The $q$ rows of $T$ from $I_K$ involve only the coordinates they set to zero. On the other $K-q$ coordinates, the remaining $K-q-1$ equations have coefficient vectors $X_j-X_a$ restricted to those coordinates, which are linearly independent by Assumption~\ref{ass:gp}(ii), so, together with the $q$ coordinate equations, the $K-1$ equations are linearly independent. The rest of the proof has two steps. First, we show that every sign pattern the algorithm returns is in $\mathcal S(\sigma)$. Second, we show that the algorithm returns every sign pattern in $\mathcal S(\sigma)$.

\emph{First step:} Each direction $\bm v$ kept by the algorithm satisfies $E\bm v=0$ and $A\bm v\geq0$, so $\bm v$ is a nonzero vector in $N(\sigma)$ by~\eqref{eq:cone_direct}. Every pattern $\bm s$ added for $\bm v$ satisfies $s_kv_k\geq0$ for all $k$, so $\bm s\in\mathcal S(\sigma)$ by~\eqref{eq:ray_sign_test}.

\emph{Second step:} The coordinate hyperplanes $c_k=0$ divide $N(\sigma)$ into pieces, one per orthant. The piece in the orthant of $\bm s$ is $C=\{\bm c\in N(\sigma):s_kc_k\geq0\text{ for every }k\}$, and by~\eqref{eq:ray_sign_test}, $\bm s\in\mathcal S(\sigma)$ means that $C$ contains a nonzero vector. Because $C$ lies in an orthant, if it contains a nonzero vector it has an extreme ray, that is, a direction $\bm c$ at which $K-1$ linearly independent equations among $E\bm c=0$ and $M\bm c=0$ hold. The equations $E\bm c=0$ account for $m-1$ of them, so the ray is fixed by $K-m$ rows of $M$. These rows form a selection $T$ that the algorithm enumerates. One of $\bm c_T=\pm\bm c/\norm{\bm c}$ is the unit vector on the extreme ray. For $\bm s\in\mathcal S(\sigma)$, this ray lies in $N(\sigma)$ and in the orthant of $\bm s$. Therefore, the orientation of $\bm c_T$ on the ray passes the test $A\bm v\geq0$, and $\bm s$ is added.
\end{proof}

\subsection{Search over Faces}\label{sec:app_bnb}
Let $L(\bm w)=\norm{Z_1-Z_0\bm w}^2$ be the validation loss. By Proposition~\ref{prop:master}(iii), the minimum of $L$ over $\overline{\mathcal W(\mathcal V_+)}$ is the least of the minima of $L$ over the $W_{\bm s}(\sigma)$, where $\sigma$ ranges over the faces of the visible facets returned by Algorithm~\ref{alg:vfacets} and $\bm s$ over the sign patterns returned by Algorithm~\ref{alg:pieces}. Our implementation optimizes over the faces in a branch-and-bound order. Every face spanned by vertices in $\sigma$ is generated by weights in $\Delta_\sigma$. Therefore, the minimum of $L$ over $\Delta_\sigma$ is a lower bound for the minimum over the weights of any of these faces. When this minimum is not below the best loss found so far, neither $\sigma$ nor its subfaces can improve on it, and the search discards them all.

\section{Proofs for Section~\ref{sec:geometric}}
\label{app:geom_proofs}

\begin{proof}[Proof of Proposition~\ref{prop:general_diag}]
By Berge's maximum theorem, the correspondence $V\mapsto\mathcal W(V)$ is upper hemicontinuous on $\mathcal V_\Delta$. Let $V_n\in\mathcal V_+$. Because $\mathcal V_\Delta$ is compact, $V_n\to V\in\mathcal V_\Delta$ along a subsequence, and by upper hemicontinuity every limit point of $\bm w(V_n)$ along it lies in $\mathcal W(V)$. Therefore $X_0\overline{\mathcal W(\mathcal V_+)}\subseteq X_0\mathcal W(\mathcal V_\Delta)$.

It remains to show that $X_0\mathcal{W}(\mathcal{V}_{\Delta}) \subseteq X_{0}\overline{\mathcal{W}(\mathcal{V}_{+})} \cup X_{0}\mathcal{W}(\mathcal{V}_{1})$. Fix $V \in \mathcal{V}_{\Delta} \setminus \mathcal{V}_{+}$, let $S = \{k : d_k > 0\}$, and let $P_S$ denote the projection onto the coordinates in $S$. Consider the following two cases.

\emph{Case 1: $\mathcal{W}(V)$ is a singleton.} Take any sequence $V_n \in \mathcal{V}_+$ with $V_n \to V$. By upper hemicontinuity, every limit point of $\bm{w}(V_n)$ lies in $\mathcal{W}(V)$, and $\Delta^J$ is compact, so $\bm{w}(V_n)\to\bm{w}(V)$. Hence $\bm{w}(V) \in \overline{\mathcal{W}(\mathcal{V}_{+})}$ and $X_0\bm{w}(V) \in X_{0}\overline{\mathcal{W}(\mathcal{V}_{+})}$.

\emph{Case 2: $\mathcal{W}(V)$ is not a singleton.} Let $\mathcal{P}_S=\conv(\{P_S X_j:j=2,\ldots,J+1\})$. Since $d_k>0$ on $S$, the inner loss is strictly convex in the fitted coordinates $y=P_S X_0\bm{w}$. Thus every optimal weight vector represents the same point $y^*\in\mathcal{P}_S$.

Suppose $P_S X_1\notin\mathcal{P}_S$. Then $y^*$ lies on the boundary of $\mathcal P_S$. Choose a facet $G$ containing $y^*$, with supporting inequality $\bm{a}'y\leq\beta$ and equality on $G$. Every $\bm{w}\in\mathcal{W}(V)$ satisfies
\[
 0=\beta-\bm{a}'y^*
 =\sum_{j=2}^{J+1}w_j\bigl(\beta-\bm{a}'P_S X_j\bigr).
\]
All summands are nonnegative, so $w_j=0$ whenever $P_S X_j\notin G$. By Assumption~\ref{ass:gp}(ii), the donor projections on $G$ are affinely independent. Their weights representing $y^*$ are therefore unique, contradicting non-uniqueness of $\mathcal{W}(V)$.

We must then conclude that $P_S X_1\in\mathcal{P}_S$, the minimum loss is zero, and every $\bm{w}\in\mathcal{W}(V)$ matches all coordinates in $S$ exactly. Choose any $k\in S$ and set $V'=\Diag(\bm{e}_k)\in\mathcal{V}_1$, where $\bm{e}_k$ is the $k$-th standard basis vector. These weights also attain zero loss under $V'$, so $\mathcal{W}(V)\subseteq\mathcal{W}(V')$ and therefore $X_0\mathcal{W}(V)\subseteq X_0\mathcal{W}(\mathcal{V}_1)$.

In both cases, $X_0\mathcal{W}(V)\subseteq X_{0}\overline{\mathcal{W}(\mathcal{V}_{+})}\cup X_{0}\mathcal{W}(\mathcal{V}_{1})$, and the same inclusion holds for $V\in\mathcal{V}_+$. 
\end{proof}

\begin{proof}[Proof of Lemma~\ref{lem:visibility_facet}]
Every $Z\in\mathcal P$ satisfies $\bm a'Z+b\leq0$, with equality on $F$. Suppose $\bm a'X_1+b>0$. For $Y\in F$ and $\lambda\in(0,1)$,
\[
\bm a'[\lambda X_1+(1-\lambda)Y]+b=\lambda(\bm a'X_1+b)>0,
\]
so no point of the open segment from $X_1$ to $Y$ lies in $\mathcal P$, and $F$ is visible.

Suppose now that $\bm a'X_1+b\leq0$, so that $X_1$ lies on the inner side of the hyperplane of $F$, or on it. Take $Y$ in the relative interior of $F$. A short step from $Y$ toward $X_1$, to $(1-\lambda)Y+\lambda X_1$ with small $\lambda>0$, does not cross the hyperplane of $F$, since $\bm a'[(1-\lambda)Y+\lambda X_1]+b=\lambda(\bm a'X_1+b)\leq0$, and it crosses no other facet. Therefore, $(1-\lambda)Y+\lambda X_1\in\mathcal P$, so $Y$ is not visible.
\end{proof}

\begin{proof}[Proof of Lemma~\ref{lem:visible_projection}]
    Consider first the Euclidean case $V = I_K$. Let $Y = X_0\bm{w}(I_K)$ and $\bm r = X_1 - Y$, so that $\bm r \neq 0$ by Assumption~\ref{ass:gp}(i). By Lemma~\ref{lem:projection_normal}, $\bm r \in N_{\mathcal{P}}(Y)$, and by \eqref{eq:normal_cone_face} there exist coefficients $\alpha_i \geq 0$ such that $\bm r = \sum_{i \in I} \alpha_i \bm{a}_i$, where $I$ indexes the facets of $\mathcal{P}$ containing $Y$. Then
    \[
        0 < \norm{\bm r}^2 = \sum_{i \in I} \alpha_i \, \bm{a}_i'\bm r,
    \]
    so $\bm{a}_i'\bm r > 0$ for at least one $i \in I$. For that facet, let $\{Z : \bm{a}_i'Z + b_i = 0\}$ be its supporting hyperplane. Because $\bm{a}_i'Y + b_i = 0$,
    \[
        \bm{a}_i'X_1 + b_i = \bm{a}_i'(Y + \bm r) + b_i = \bm{a}_i'\bm r > 0,
    \]
    so by Lemma~\ref{lem:visibility_facet} the facet is visible from $X_1$, and it contains $Y$. 

The case of general $V\in\mathbb S^K_{++}$ follows by the change of coordinates $\widetilde X=V^{1/2}X$ and the fact that linear bijections preserve visibility.
\end{proof}

\begin{proof}[Proof of Proposition~\ref{prop:visible_facets}]
    The inclusion $X_0\mathcal{W}(\mathbb{S}^K_{++}) \subseteq \bigcup\{F : F \text{ is visible}\}$ follows from Lemma \ref{lem:visible_projection}.
    For the reverse inclusion, fix a visible facet $F$ with supporting hyperplane $\{Z : \bm{a}'Z + b = 0\}$, where $\bm{a}$ is the outward normal, and let $Y \in F$ be an arbitrary point of the facet. Since $F$ contains $Y$, $\bm{a} \in N_{\mathcal{P}}(Y)$. Let $\bm r = X_1 - Y$. Visibility of $F$ and Lemma~\ref{lem:visibility_facet} give
    \[
        \bm{a}'\bm r = \bm{a}'X_1 - \bm{a}'Y = \bm{a}'X_1 + b > 0.
    \]
    By the criterion~\eqref{eq:master}, it suffices to construct $V \in \mathbb{S}^K_{++}$ with $V\bm r = \bm{a}$. Take
    \begin{equation}\label{eq:V_construction}
        V = \frac{\bm{a}\bm{a}'}{\bm{a}'\bm r} + \left(I - \frac{\bm r\bm r'}{\norm{\bm r}^2}\right).
    \end{equation}
    The matrix $V$ is symmetric, and since $(I - \bm r\bm r'/\norm{\bm r}^2)\,\bm r = 0$,
    \[
        V\bm r = \frac{\bm{a}(\bm{a}'\bm r)}{\bm{a}'\bm r} = \bm{a}.
    \]
    To prove that $V$ is positive definite, note that for any $x \neq 0$,
    \[
        x'Vx = \frac{(\bm{a}'x)^2}{\bm{a}'\bm r} +  x'\left(I - \frac{\bm r\bm r'}{\norm{\bm r}^2}\right)x \geq 0,
    \]
     where the second term is non-negative because $I - \bm r\bm r'/\norm{\bm r}^2$ is symmetric and idempotent, hence positive semidefinite, and it vanishes only if $x$ is a multiple of $\bm r$. Equality therefore requires $\bm{a}'x = 0$ and $x$ proportional to $\bm r$, which together contradict $\bm{a}'\bm r > 0$. Hence $V \in \mathbb{S}^K_{++}$ and $V(X_1 - Y) = \bm{a} \in N_{\mathcal{P}}(Y)$, so $Y = X_0\bm{w}(V) \in X_0\mathcal{W}(\mathbb{S}^K_{++})$. Since $Y$ was an arbitrary point of an arbitrary visible facet, every point of a visible facet is $X_0\bm{w}(V)$ for some $V \in \mathbb{S}^K_{++}$. Because boundary points of $\mathcal{P}$ have a unique donor-weight representation under Assumption~\ref{ass:gp}(ii), every $\bm{w} \in \Delta^J$ with $X_0\bm{w}$ on a visible facet belongs to $\mathcal{W}(\mathbb{S}^K_{++})$. Finally, $\mathcal{W}(\mathbb{S}^K_{++})$ is closed because $\bm w\mapsto X_0\bm w$ is continuous and the union of the visible facets is a finite union of closed sets.
\end{proof}

\begin{proof}[Proof of Proposition~\ref{prop:dense_V}]
    Since $\bm{w}^* \in \mathcal{W}(\mathbb{S}^K_{++})$, $\bm{w}(V_0) = \bm{w}^*$ for some $V_0 \in \mathbb{S}^K_{++}$, so $V_0 \in \mathcal{F}(\bm{w}^*)$. The constraints defining $\mathcal{F}(\bm{w}^*)$ are homogeneous in $V$, so $V_0/\tr(V_0)$ is feasible for~\eqref{eq:V_logdet} with $\log\det(V_0/\tr(V_0)) > -\infty$. The feasible set of~\eqref{eq:V_logdet} is compact, being a closed subset of $\{V \in \mathbb{S}^K_+ : \tr(V) = 1\}$, and $\log\det$ is upper semicontinuous on $\mathbb{S}^K_+$, so a maximizer exists and attains a value of at least $\log\det(V_0/\tr(V_0))$. Since $\log\det(V) = -\infty$ for any singular $V$, every maximizer is positive definite. Strict concavity of the objective function gives uniqueness. 
\end{proof}

\begin{lemma}\label{lem:strict_slack}
Under Assumption \ref{ass:gp}, for every $\bm{w}^* \in \mathcal{W}(\mathbb{S}^K_{++})$, there exists $V \in \mathcal{F}(\bm{w}^*) \cap \mathbb{S}^K_{++}$ with $s_j(V) > 0$ for every $j \notin \supp(\bm{w}^*)$. 
\end{lemma}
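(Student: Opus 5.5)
Given $\bm w^*\in\mathcal W(\mathbb S^K_{++})$, set $Y=X_0\bm w^*$ and $\bm r=X_1-Y\neq 0$, and let $\sigma=\supp(\bm w^*)$. By Assumption~\ref{ass:gp}(ii), $Y\in\relint(F_\sigma)$. By Proposition~\ref{prop:visible_facets}, $F_\sigma$ lies in some facet $G$ that is visible from $X_1$. I will construct a positive definite $V$ with $V\bm r=\bm c$, where $\bm c$ is a vector in $N(\sigma)$ chosen to satisfy every inactive inequality strictly and to have $\bm c'\bm r>0$. The construction mimics the proof of Proposition~\ref{prop:visible_facets}:
\[
V=\frac{\bm c\bm c'}{\bm c'\bm r}+\Bigl(I-\frac{\bm r\bm r'}{\norm{\bm r}^2}\Bigr).
\]
That proof shows $V\succ0$ and $V\bm r=\bm c$ whenever $\bm c'\bm r>0$. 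For this $V$ we get $X_0\bm w^*-X_1=-\bm r$, so $(X_j-X_{j_0})'V(X_0\bm w^*-X_1)=(X_{j_0}-X_j)'\bm c$. The active equalities \eqref{eq:kkt_active_diff} then follow from $\bm c\in N(\sigma)$, and $s_j(V)=(X_{j_0}-X_j)'\bm c$ for $j\notin\sigma$. It therefore suffices to find $\bm c$ satisfying
\[
(X_j-X_{j_0})'\bm c=0\ (j\in\sigma),\qquad (X_{j_0}-X_i)'\bm c>0\ (i\notin\sigma),\qquad \bm c'\bm r>0.
\]

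The natural candidate lies in the relative interior of the normal cone $N(\sigma)=N_{\mathcal P}(Y)$. By \eqref{eq:normal_cone_face}, $N(\sigma)$ is generated by the outward normals $\bm a_i$ of the facets containing $F_\sigma$. Take $\bm c=\bm a_G+\eta\sum_{i\in I(Y)}\bm a_i$ with $\eta>0$ small, where $\bm a_G$ is the normal of the visible facet $G$. Then $\bm c'\bm r=\bm a_G'\bm r+\eta\sum_i\bm a_i'\bm r$, and $\bm a_G'\bm r=\bm a_G'X_1+b_G>0$ by Lemma~\ref{lem:visibility_facet}, so $\bm c'\bm r>0$ once $\eta$ is small enough.

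For the strict inequality at an inactive donor $i\notin\sigma$, the supporting hyperplane $\{Z:\bm c'Z=\bm c'Y\}$ must meet $\mathcal P$ only in $F_\sigma$. If some donor $X_i$ with $i\notin\sigma$ satisfied $\bm a_l'X_i=\bm a_l'Y$ for every facet $l\supseteq F_\sigma$, then $X_i$ would lie in the intersection of those facets, and that intersection is exactly $F_\sigma$. By Assumption~\ref{ass:gp}(ii), the simplex $F_\sigma$ contains no donor other than its vertices, which is a contradiction. Hence for each such $i$ some $\bm a_l$ gives the strict inequality $\bm a_l'(Y-X_i)>0$, while the other generators give $\geq 0$. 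Summing over all generators with positive coefficients yields $(X_{j_0}-X_i)'\bm c>0$, using $\bm c'Y=\bm c'X_{j_0}$.

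I expect the main obstacle to be justifying that the intersection of the facets containing $F_\sigma$ is exactly $F_\sigma$. This is the standard polytope fact that every face is the intersection of the facets containing it. I would cite it, or argue it directly: a point $Y\in\relint(F_\sigma)$ has normal cone $N(\sigma)$, and the face exposed by a relative-interior vector of $N(\sigma)$ is precisely $F_\sigma$. If that cone argument is shaky, the fallback is the facet-normal decomposition above combined with the simplicial structure given by Assumption~\ref{ass:gp}(ii).
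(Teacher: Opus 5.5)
Your proof is correct and follows essentially the same route as the paper: you push a normal with positive inner product with $\bm r$ into the relative interior of $N_{\mathcal P}(Y)$, set $V=\bm c\bm c'/(\bm c'\bm r)+(I-\bm r\bm r'/\norm{\bm r}^2)$, and read off $s_j(V)=(X_{j_0}-X_j)'\bm c>0$. The differences are minor: you start the perturbation from the visible-facet normal $\bm a_G$ instead of $V_1\bm r$ for an existing rationalizer $V_1$, and you justify the strict inactive inequalities explicitly (each face is the intersection of the facets containing it, combined with Assumption~\ref{ass:gp}(ii)), a step the paper asserts briefly from relative interiority.
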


\begin{proof}[Proof of Lemma~\ref{lem:strict_slack}]
Let $Y = X_0\bm{w}^*$ and $\bm r = X_1 - Y$, so $\bm r \neq 0$ by Assumption~\ref{ass:gp}(i). We construct $V$ by choosing a normal $\bm{\nu}$ in the relative interior of $N_{\mathcal{P}}(Y)$ with $\bm{\nu}'\bm r>0$ and solving $V\bm r=\bm{\nu}$ with the $V$ in~\eqref{eq:V_construction}. The slack of the $j$-th inequality is then $-\bm{\nu}'(X_j-Y)$, which we show to be positive for every $j\notin\supp(\bm w^*)$ because $\bm{\nu}$ lies in the relative interior of the cone.

Because $\bm{w}^* \in \mathcal{W}(\mathbb{S}^K_{++})$, there is $V_1 \in \mathbb{S}^K_{++}$ with $\bm{w}(V_1)=\bm{w}^*$. By~\eqref{eq:master}, $\bm{c}_1 = V_1\bm r \in N_{\mathcal{P}}(Y)$, and $\bm{c}_1'\bm r = \bm r'V_1\bm r > 0$ because $V_1$ is positive definite. If $\bm c_1$ is in the relative interior of $N_{\mathcal{P}}(Y)$, take $\bm\nu = \bm c_1$. Otherwise, let $\bm{\nu}_0$ be any vector in the relative interior of $N_{\mathcal{P}}(Y)$ and set $\bm{\nu} = (1-\epsilon)\bm{c}_1 + \epsilon\bm{\nu}_0$ with $\epsilon \in (0,1]$. Then $\bm{\nu}$ lies in the relative interior of $N_{\mathcal{P}}(Y)$, and $\bm{\nu}'\bm r = (1-\epsilon)\,\bm r'V_1\bm r + \epsilon\,\bm{\nu}_0'\bm r > 0$ for $\epsilon$ small enough.

The matrix $V$ now comes from~\eqref{eq:V_construction} with $\bm{\nu}$ replacing $\bm{a}$, so $V \in \mathbb{S}^K_{++}$ and $V\bm r = \bm{\nu}$. Since $\bm{\nu} \in N_{\mathcal{P}}(Y)$, $Y = X_0\bm{w}(V)$ by~\eqref{eq:master}, so $V \in \mathcal{F}(\bm{w}^*)$.

Let $\sigma=\supp(\bm w^*)$ and take $j_0$ as the anchor in~\eqref{eq:cone_direct}. Since $Y=X_0\bm w^*$ has positive weights exactly on $\sigma$, $N_{\mathcal P}(Y)=N(\sigma)$, as shown in Appendix~\ref{sec:app_facewise}. For an inactive donor $j$, the slack $s_j(V)=(X_{j_0}-X_j)'\bm\nu$ is the left side of the inequality in~\eqref{eq:cone_direct} for the donor $j$. The normal $\bm{\nu}$ is in the relative interior of $N(\sigma)$, and by Assumption~\ref{ass:gp}(ii) no donor outside $\sigma$ lies in the affine span of the donors in $\sigma$, so $\bm{\nu}$ satisfies these inequalities strictly. As a result, $s_j(V)>0$ for every $j\notin\sigma$.
\end{proof}

\begin{proof}[Proof of Proposition~\ref{prop:extcenter}]
Let $\bm r = X_1 - X_0\bm{w}^*$, and let $\mathcal{F}_{>} = \{V \succ 0 : \tr(V) = 1, \ (X_j - X_{j_0})'V\bm r = 0 \ \forall j \in \supp(\bm{w}^*), \ s_j(V) > 0 \ \forall j \notin \supp(\bm{w}^*)\}$ be the feasible set with strict inequalities. Let $\Phi(V) = \log\det V + \sum_{j \notin \supp(\bm{w}^*)} \log s_j(V)$.

By Lemma~\ref{lem:strict_slack}, there is $V_0\in\mathcal F(\bm w^*)\cap\mathbb S^K_{++}$ with $s_j(V_0)>0$ for every $j\notin\supp(\bm w^*)$. The constraints that define $\mathcal F(\bm w^*)$ and the slacks $s_j(V)$ are homogeneous in $V$. We choose the scale of this matrix so $\tr(V_0)=1$. Because $V_0\in \mathcal F_{>}$, it follows that $\Phi(V_0)>-\infty$.

Any $V\in\mathcal{F}_{>}$ is symmetric and positive definite with trace equal to one, so its eigenvalues are positive and sum to one. Hence $\mathcal F_{>}$ is bounded and $\log\det V\leq0$ on it. Each slack is linear in $V$, so it is bounded on $\mathcal F_{>}$. Therefore $\Phi^*=\sup_{V\in\mathcal{F}_{>}}\Phi(V)$ is finite.

Choose a sequence $V_n\in\mathcal{F}_{>}$ with $\Phi(V_n)\to\Phi^*$. Boundedness of $\mathcal F_{>}$ implies that there is a convergent subsequence of $V_n$ with some limit $\bar V$. The limit $\bar V$ is positive semidefinite with $\tr(\bar V)=1$, and satisfies $(X_j-X_{j_0})'\bar V\bm r=0$ for every $j\in\supp(\bm w^*)$ and $s_j(\bar V)\geq0$ for every $j\notin\supp(\bm w^*)$ (by continuity). If $\bar V$ were singular or some $s_j(\bar V)$ were zero, then $\log\det V_n$ or $\log s_j(V_n)$ would tend to $-\infty$ while the other terms of $\Phi$ stay bounded above, so $\Phi(V_n)\to-\infty$, contradicting $\Phi^*\geq\Phi(V_0)>-\infty$. It follows that $\bar V\in\mathcal F_{>}$ attains the maximum.

The set $\mathcal F_{>}$ is convex, because its constraints are linear and the positive definite matrices form a convex set. Because $\Phi$ is strictly concave, the maximizer of $\Phi$ over $\mathcal F_{>}$ is unique.
\end{proof}

\section[The Interior Case: Details and Proofs]{The Interior Case: Details and Proofs}
\label{app:geom}

This appendix proves interior-case results. Let
\[
\cF=\{\bm w\in\Delta^J:X_0\bm w=X_1\}
\]
be the set of exact matches. Throughout, we use the quadratic lift
$\ell_V(x)=\norm{x}_V^2=x'Vx$ and the pairwise penalty
$g_V(\bm w)=\sum_j w_j\norm{X_1-X_j}_V^2$ from~\eqref{eq:penalized}.
The lift is linear in $V$ and, for $V\succ0$, strictly convex in $x$.

Our starting point is the vanishing-penalty selection principle of
\citet[Section~2.2]{abadie2021penalized}. 
The following lemma states this principle in our notation, with a short proof for completeness.

\begin{lemma}\label{lem:selection_principle}
Let $V\succ0$ and suppose $\cF\neq\varnothing$. Every accumulation point of $\bm w(\lambda,V)$ as $\lambda\downarrow 0$ minimizes $g_V$ over $\cF$. If this minimizer is unique, then $\bm w(\lambda,V)$ converges to it.
\end{lemma}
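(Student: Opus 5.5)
The plan is a standard vanishing-penalty argument: compare the penalized objective at $\bm w(\lambda,V)$ with its value at any exact match, and pass to the limit. Write $f(\bm w)=\norm{X_1-X_0\bm w}_V^2$ and $\Phi_\lambda(\bm w)=f(\bm w)+\lambda g_V(\bm w)$. Both $f$ and $g_V$ are continuous and nonnegative on the compact simplex $\Delta^J$. In addition, $f$ vanishes exactly on $\cF$ because $V\succ0$. Fix any $\bar{\bm w}\in\cF$, which is possible since $\cF\neq\varnothing$. Optimality of $\bm w_\lambda=\bm w(\lambda,V)$ gives
\[
f(\bm w_\lambda)+\lambda g_V(\bm w_\lambda)\le \lambda g_V(\bar{\bm w}).
\]

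Two consequences follow from this inequality. First, $f(\bm w_\lambda)\le\lambda g_V(\bar{\bm w})\to0$. Second, since $f\ge0$, dividing by $\lambda>0$ gives $g_V(\bm w_\lambda)\le g_V(\bar{\bm w})$.

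Now take an accumulation point $\bm w_0$ along a sequence $\lambda_n\downarrow0$. Continuity of $f$ gives $f(\bm w_0)=0$, so $\bm w_0\in\cF$. Continuity of $g_V$ gives $g_V(\bm w_0)\le g_V(\bar{\bm w})$. Because $\bar{\bm w}\in\cF$ was arbitrary, $\bm w_0$ minimizes $g_V$ over $\cF$. The minimum over $\cF$ is attained because $\cF$ is a nonempty compact polytope and $g_V$ is linear, so the conclusion is meaningful.

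For the convergence claim, suppose the minimizer $\bm w^\dagger$ of $g_V$ over $\cF$ is unique. By compactness of $\Delta^J$, every sequence $\lambda_n\downarrow0$ has a subsequence along which $\bm w_{\lambda_n}$ converges. By the first part, each such subsequential limit equals $\bm w^\dagger$. A sequence in a compact set all of whose convergent subsequences share the limit $\bm w^\dagger$ converges to $\bm w^\dagger$. Hence $\bm w(\lambda,V)\to\bm w^\dagger$ as $\lambda\downarrow0$.

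One subtlety needs handling. The minimizer $\bm w(\lambda,V)$ need not be unique, because $\Phi_\lambda$ is convex but only strictly convex in $X_0\bm w$. The argument above applies to any selection $\bm w(\lambda,V)$ from the argmin set, so uniqueness for fixed $\lambda$ is not required. No other step is a real obstacle.
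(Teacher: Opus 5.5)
Your proof is correct and follows essentially the same argument as the paper: compare the penalized objective at $\bm w(\lambda,V)$ with its value at an arbitrary exact match to get both $\norm{X_1-X_0\bm w(\lambda,V)}_V^2\to0$ and $g_V(\bm w(\lambda,V))\le g_V(\bar{\bm w})$, then pass to accumulation points using compactness of $\Delta^J$ and continuity. Your added remarks, that any selection from a possibly non-singleton argmin works and that convergence follows from the subsequence argument, make explicit what the paper leaves implicit.
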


\begin{proof}
Let $\bm w_\lambda=\bm w(\lambda,V)$ and fix any $\bm w\in\cF$. Because $\Delta^J$ is compact, accumulation points of $\bm w_\lambda$ always exist. Because $\bm w_\lambda$ minimizes the objective in~\eqref{eq:penalized} and $X_0\bm w=X_1$,
\[
\norm{X_1-X_0\bm w_\lambda}_V^2+\lambda g_V(\bm w_\lambda)\leq\lambda g_V(\bm w).
\]
The term $\|X_1-X_0\bm w_\lambda\|^2_V$ on the left-hand side is nonnegative and bounded by $\lambda g_V(\bm w)$, which goes to zero as $\lambda\downarrow 0$. As a result, every accumulation point $\bar{\bm w}$ of $\bm w_\lambda$ satisfies $X_0\bar{\bm w}=X_1$.  The second term gives $g_V(\bm w_\lambda)\leq g_V(\bm w)$. By continuity, $g_V(\bar{\bm w})\leq g_V(\bm w)$. Since $\bm w\in\cF$ was arbitrary, $\bar{\bm w}$ minimizes $g_V$ over $\cF$. If the minimizer is unique, $\bar{\bm w}$ is the only accumulation point of $\bm w_\lambda$, and $\bm w_\lambda$ converges to it.
\end{proof}

We next characterize the metrics for which the
exact-match weights $\bm w_A$ supported on an admissible cell
$A\subseteq\cD$, $|A|=K+1$, uniquely minimize $g_V$ over $\cF$. Recall from
Section~\ref{sec:int_cells} that
\[
\begin{aligned}
H_{A,V}(x)&=\sum_{i\in A}\beta_i(x)\ell_V(X_i),\\
x&=\sum_{i\in A}\beta_i(x)X_i,\quad
\sum_{i\in A}\beta_i(x)=1,
\end{aligned}
\]
where the affine coordinates $\beta_i(x)$ are unique by
Assumption~\ref{ass:gp_int}(iii). Thus $H_{A,V}(x)$ is the height above $x$ of the hyperplane that goes through the lifted donors $(X_i,\ell_V(X_i))$, $i\in A$. For each donor $j\in\cD$, 
\[
q_j^A(V)=\ell_V(X_j)-H_{A,V}(X_j)=\inner{Q_j^A}{V}
\]
 is zero on $A$ and positive, zero, or negative for other donors depending on whether their lifted points lie above, on, or below the hyperplane $H_{A,V}$. The matrix $V$ belongs to $\cV(A)$ when $q_j^A(V)>0$ for every donor outside $A$. 

\begin{lemma}
\label{lem:selection}
Let Assumption~\ref{ass:gp_int} hold, $A$ be admissible, and $V$ belong to $\cV(A)$. Then $\bm{w}_A$ is the unique minimizer of $g_V$ over $\cF$, and $\bm{w}(\lambda, V) \to \bm{w}_A$ as $\lambda\downarrow 0$. Conversely, if $q_j^A(V)<0$ for some $j\notin A$, then $\bm{w}_A$ does not minimize $g_V$ over $\cF$, and if $q_j^A(V)=0$ for some $j\notin A$, then $\bm{w}_A$ is not the unique minimizer.
\end{lemma}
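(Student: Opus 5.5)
The plan is to write $g_V$ on $\cF$ as an affine function of the weights and compare any exact match with $\bm w_A$ through the affine functions $q_j^A$. By \eqref{eq:int_height}, for $\bm w\in\cF$ we have $g_V(\bm w)=\sum_j w_j\ell_V(X_j)-\ell_V(X_1)$. Since $H_{A,V}$ is affine and $X_0\bm w=X_1$ with $\sum_j w_j=1$, we get $\sum_j w_jH_{A,V}(X_j)=H_{A,V}(X_1)$. For $i\in A$, $H_{A,V}(X_i)=\ell_V(X_i)$. Hence, for every $\bm w\in\cF$,
\[
g_V(\bm w)=H_{A,V}(X_1)-\ell_V(X_1)+\sum_{j\notin A}w_j\,q_j^A(V).
\]
At $\bm w=\bm w_A$, which is supported on $A$, the last sum vanishes. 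This gives the identity
\[
g_V(\bm w)-g_V(\bm w_A)=\sum_{j\notin A}w_j\,q_j^A(V),\qquad \bm w\in\cF,
\]
which drives everything else.

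\emph{Forward direction.} If $V\in\cV(A)$, every $q_j^A(V)>0$ for $j\notin A$, so $g_V(\bm w)\ge g_V(\bm w_A)$. Equality forces $w_j=0$ for all $j\notin A$. Such a $\bm w$ is supported on $A$ and satisfies $X_0\bm w=X_1$. By Assumption~\ref{ass:gp_int}(iii) the barycentric coordinates of $X_1$ in $\sigma_A$ are unique, so $\bm w=\bm w_A$. Uniqueness of the minimizer together with Lemma~\ref{lem:selection_principle} then gives $\bm w(\lambda,V)\to\bm w_A$.

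\emph{Converse.} Fix $j\notin A$ with $q_j^A(V)\le 0$. We need a feasible exact match that puts weight on $j$ while leaving the rest of the sum zero. Assumption~\ref{ass:gp_int}(ii) gives that $X_1$ lies in no facet of $\sigma_A$ (those facets span affine sets of at most $K$ donors). Hence every coordinate of $\bm w_A$ is strictly positive. Write $X_j=\sum_{i\in A}\beta_i^jX_i$ and, for small $t>0$, set
\[
\bm w_t=\bm w_A+t\Bigl(\bm e_j-\sum_{i\in A}\beta_i^j\bm e_i\Bigr).
\]
Then $X_0\bm w_t=X_1$ and the coordinates still sum to one, while positivity of $\bm w_A$ on $A$ keeps $\bm w_t\ge0$ for $t$ small. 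So $\bm w_t\in\cF$, and by the identity $g_V(\bm w_t)-g_V(\bm w_A)=t\,q_j^A(V)$. If $q_j^A(V)<0$, $\bm w_A$ is not a minimizer. If $q_j^A(V)=0$, then $\bm w_t\neq\bm w_A$ attains the same value, so $\bm w_A$ is not the unique minimizer. (If $\bm w_A$ is not a minimizer at all, the uniqueness claim holds trivially.)

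The step needing the most care is the converse: it depends on strict positivity of $\bm w_A$, supplied by Assumption~\ref{ass:gp_int}(ii), so that the perturbation direction stays inside $\Delta^J$. The rest is the algebraic identity above, which uses only affineness of $H_{A,V}$ and the exact-match constraint.
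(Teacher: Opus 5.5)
Your proposal is correct and follows the paper's proof essentially step for step. It uses the same decomposition $\sum_j w_j\ell_V(X_j)=H_{A,V}(X_1)+\sum_{j\notin A}w_jq_j^A(V)$ on $\cF$, which you write directly as a difference in $g_V$; the same uniqueness argument from affine independence of the donors in $A$, with Lemma~\ref{lem:selection_principle} giving convergence; and the same perturbation $\bm w_A+t(\bm e_j-\sum_{i\in A}\beta_i^j\bm e_i)$ for the converse, kept feasible by the strict positivity of $\bm w_A$ that Assumption~\ref{ass:gp_int}(ii) provides.
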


\begin{proof}
By~\eqref{eq:int_height}, $g_V$ and $\sum_j w_j\ell_V(X_j)$ differ only by the constant $\ell_V(X_1)$ on $\cF$, so minimizing $g_V$ over $\cF$ is the same as minimizing the lifted height $\sum_j w_j\ell_V(X_j)$ over $\cF$. By the definition of $q_j^A(V)$, $\ell_V(X_j)=H_{A,V}(X_j)+q_j^A(V)$ for every donor, and $q_j^A(V)=0$ for donors in $A$. Because $H_{A,V}$ is affine, for any $\bm w\in\cF$,
\[
\sum_j w_j\ell_V(X_j)=H_{A,V}(X_1)+\sum_{j\notin A}w_jq_j^A(V).
\]
If $V\in\cV(A)$, then $q_j^A(V)>0$ for every donor $j\notin A$. Since all donor weights are nonnegative, the sum $\sum_{j\notin A}w_jq_j^A(V)$ is nonnegative and equals zero if and only if $w_j=0$ for every $j\notin A$. Therefore, $\sum_j w_j\ell_V(X_j)\geq H_{A,V}(X_1)$ for every $\bm w\in\cF$, with equality only if $w_j=0$ for every $j\notin A$. Since $\bm w_A$ is the only exact match supported on $A$, it is the unique minimizer of $\sum_j w_j\ell_V(X_j)$ over $\cF$ and, therefore, the unique minimizer of $g_V$. Convergence of $\bm w(\lambda,V)$ to $\bm w_A$ then follows from Lemma~\ref{lem:selection_principle}.

Conversely, suppose $q_j^A(V)\leq0$ for some $j\notin A$. We construct another exact match by giving donor $j$ a small positive weight and adjusting the weights on $A$. Since $A$ is admissible and $X_1$ lies in the interior of its simplex by Assumption~\ref{ass:gp_int}(ii), $(\bm w_A)_i > 0$ for every $i \in A$. We can therefore choose $t\in(0,1)$ small enough that all the following weights are nonnegative:
\[
\widetilde w_k=
\begin{cases}
(\bm w_A)_k-t\beta_k(X_j), & k\in A,\\
t, & k=j,\\
0, & k\notin A\cup\{j\}.
\end{cases}
\]
Then,
\[
\begin{aligned}
\sum_k\widetilde w_k
&=1-t\sum_{i\in A}\beta_i(X_j)+t=1,\\
X_0\widetilde{\bm w}
&=X_1-t\sum_{i\in A}\beta_i(X_j)X_i+tX_j=X_1.
\end{aligned}
\]
As a result, $\widetilde{\bm w}\in\cF$, and it differs from $\bm w_A$ because $\widetilde w_j=t>0$. Since both are exact matches,
\[
\begin{aligned}
g_V(\widetilde{\bm w})-g_V(\bm w_A)
&=\sum_k (\widetilde w_k -(\bm w_A)_k)\ell_V(X_k)\\
&=t\left[\ell_V(X_j)-\sum_{i\in A}\beta_i(X_j)\ell_V(X_i)\right]\\
&=tq_j^A(V)\leq0.
\end{aligned}
\]
Hence $g_V(\widetilde{\bm w})<g_V(\bm w_A)$ when $q_j^A(V)<0$, and $g_V(\widetilde{\bm w})=g_V(\bm w_A)$ when $q_j^A(V)=0$.
\end{proof}

Condition~\eqref{eq:VA} is the Delaunay empty-circumsphere criterion for $A$ in the coordinates $\widetilde X_j=V^{1/2}X_j$.

The next result uses Lemma~\ref{lem:selection} to show that, outside a set of metrics of measure zero, the limit $\bm w(0^+,V)$ \citep[i.e., the pure synthetic control of][]{abadie2021penalized} exists and is unique. For a subset $B\subseteq\cD$ with $|B|=K+2$, define 
\[
\varphi_B(V)=\det\begin{pmatrix}1 & X_j' & \ell_V(X_j)\end{pmatrix}_{j\in B}.
\]
The function $\varphi_B(V)$ is linear in $V$, and equal to zero if and only if the lifted points of $B$ are affinely dependent, that is, when the transformed donors $\widetilde X_j$, $j\in B$, lie on a common sphere.

\begin{proposition}
\label{prop:coverage}
Let Assumption~\ref{ass:gp_int} hold and 
\[
\cN = \bigcup_{B \subseteq \cD, \ |B| = K+2} \left\{ V \in \mathbb{S}^K_{++} : \varphi_B(V) = 0 \right\}.
\]
The set $\cN$ has measure zero in $\mathbb{S}^K_{++}$, and $\{\bm d>0 : \Diag(\bm d)\in\cN\}$ has measure zero in $\R^K_{++}$. For every $V \in \mathbb{S}^K_{++} \setminus \cN$ the program~\eqref{eq:int_limit} has a unique solution, equal to $\bm w_A$ for an admissible cell $A$ with $V \in \cV(A)$, and $\bm{w}(\lambda, V) \to \bm{w}_A$ as $\lambda \downarrow 0$.
\end{proposition}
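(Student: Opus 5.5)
The proof has two parts: a measure-zero statement and a uniqueness/selection statement.

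For the measure-zero part, observe that $\varphi_B$ is a linear function of $V$. Expanding the determinant along its last column gives $\varphi_B(V)=\sum_{j\in B}\pm\,\ell_V(X_j)\,M_j$, where each $M_j$ is a constant minor. Since $\ell_V(X_j)=\inner{X_jX_j'}{V}$, this can be written as $\varphi_B(V)=\inner{R_B}{V}$ for a fixed symmetric matrix $R_B$. By Assumption~\ref{ass:gp_int}(iv), for each $B$ there is a positive definite $V$ with $\varphi_B(V)\neq0$, so $R_B\neq0$. The zero set of $\varphi_B$ is then a proper hyperplane in the space of symmetric matrices, and its intersection with the open cone $\mathbb{S}^K_{++}$ has Lebesgue measure zero. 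A finite union of such sets is still null.

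The diagonal class is handled the same way. Here $\bm d\mapsto\varphi_B(\Diag(\bm d))=\diag(R_B)'\bm d$ is linear, and Assumption~\ref{ass:gp_int}(iv) supplies a strictly positive diagonal $V$ at which it is nonzero. So $\diag(R_B)\neq0$, the zero set is a hyperplane in $\R^K$, and it is null in $\R^K_{++}$.

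For the uniqueness part, fix $V\notin\cN$. The problem~\eqref{eq:int_limit} is a linear program over the polytope $\cF$, which is nonempty and compact by Assumption~\ref{ass:gp_int}(i), so a minimizer exists. I would pick a minimizer that is a vertex $\bm w$ of $\cF$. A vertex of $\{\bm w\ge0,\ \mathbf 1'\bm w=1,\ X_0\bm w=X_1\}$ has support whose columns $(1,X_j')'$ are linearly independent, so the support has at most $K+1$ elements. By Assumption~\ref{ass:gp_int}(ii) it has at least $K+1$. Hence the support is a set $A$ with $|A|=K+1$, $X_1\in\sigma_A$, and $\bm w=\bm w_A$, so $A$ is admissible.

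Next apply the converse part of Lemma~\ref{lem:selection}. Since $\bm w_A$ is a minimizer, $q_j^A(V)\ge0$ for every $j\notin A$. If $q_j^A(V)=0$ for some $j$, the lifted points of $B=A\cup\{j\}$ lie on the hyperplane $H_{A,V}$, so they are affinely dependent and $\varphi_B(V)=0$. That would put $V$ in $\cN$, a contradiction. Therefore $q_j^A(V)>0$ for all $j\notin A$, which means $V\in\cV(A)$. The forward part of Lemma~\ref{lem:selection} then gives that $\bm w_A$ is the unique minimizer and that $\bm w(\lambda,V)\to\bm w_A$ as $\lambda\downarrow0$.

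The step needing the most care is the equivalence "$q_j^A(V)=0$ if and only if the lifts of $A\cup\{j\}$ are affinely dependent," used to reach $\varphi_B(V)=0$. Because the lifts of $A$ are affinely independent (their projections are, by Assumption~\ref{ass:gp_int}(iii)), adding a point creates an affine dependence exactly when that point lies on the hyperplane they span. Here that hyperplane is the graph of $H_{A,V}$, so this is immediate once stated.

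A second point to check is that the vertex's support columns are linearly independent. This is standard: a vertex of a polytope in standard form is a basic feasible solution. It also ensures that $\bm w_A$ is well defined and strictly positive on $A$, as Lemma~\ref{lem:selection} requires.
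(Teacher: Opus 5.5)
Your proposal is correct and follows the paper's proof. The measure-zero part uses the same linearity argument with Assumption~\ref{ass:gp_int}(iv), and the selection part combines the converse of Lemma~\ref{lem:selection} with $V\notin\cN$ to force $q_j^A(V)>0$, exactly as the paper does. The only variation is how you get a minimizer supported on exactly $K+1$ donors: you take a vertex (basic feasible solution) of the polytope $\cF$, whereas the paper takes the lowest point of the lifted donor hull above $X_1$ and applies Carath\'eodory's theorem on the boundary face containing it; both then use parts (ii)--(iii) in the same way.
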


\begin{proof}
By Assumption~\ref{ass:gp_int}(iv), for each $B$, $\varphi_B$ is nonzero at some positive definite $V$ and at some positive diagonal $V$. Because $\varphi_B$ is linear in $V$, the equation $\varphi_B(V)=0$ is a single linear equation in the entries of $V$ with some nonzero coefficient. Its solutions determine one entry of $V$ as a function of the others, so they form a set of measure zero. Restricted to diagonal $V=\Diag(\bm d)$, the same holds for a single linear equation in $\bm d$. The set $\cN$ is the union of these solution sets over the finitely many $B$, and so is its diagonal counterpart, and a finite union of measure-zero sets has measure zero.

Fix $V\notin\cN$. For any $\bm w\in\cF$, the point $\sum_j w_j\,(X_j,\ell_V(X_j))$ of the lifted donor hull lies directly above $X_1$, at height $\sum_j w_j\ell_V(X_j)$. By~\eqref{eq:int_height}, $\bm w$ minimizes $g_V$ over $\cF$ when this point is the lowest point of the lifted donor hull above $X_1$. The lowest point lies on the boundary of the hull, hence on a face of dimension at most $K$, so by Carath\'eodory's theorem it is a convex combination of at most $K+1$ lifted donors. Let $\bm w^*\in\cF$ be its weights, which minimize $g_V$. By Assumption~\ref{ass:gp_int}(ii) and (iii), $|\supp(\bm w^*)|=K+1$ and $A=\supp(\bm w^*)$ is affinely independent, so $\bm w^*=\bm w_A$ with all weights positive, and $A$ is admissible.

We next show that $V\in\cV(A)$. Take $j\notin A$. If $q_j^A(V)<0$, then $\bm w_A$ does not minimize $g_V$ by Lemma~\ref{lem:selection}. If $q_j^A(V)=0$, then $\ell_V(X_j)=H_{A,V}(X_j)$, so the lifted points of $A\cup\{j\}$ lie on the graph of $H_{A,V}$ and are affinely dependent, which gives $\varphi_{A\cup\{j\}}(V)=0$ and $V\in\cN$. Hence $q_j^A(V)>0$ for every $j\notin A$, that is, $V\in\cV(A)$. By Lemma~\ref{lem:selection}, $\bm w_A$ is the unique minimizer of $g_V$ over $\cF$ and $\bm w(\lambda,V)\to\bm w_A$ as $\lambda\downarrow 0$.
\end{proof}

\begin{proof}[Proof of Proposition~\ref{prop:enumeration}]
First, notice that if $V\in\mathcal{C}\setminus\cN$, then by Proposition~\ref{prop:coverage}, $\bm w(0^+,V)=\bm w_A$ for an admissible $A$ with $V\in\cV(A)$. As a result, $A\in\mathcal{A}_{\mathcal{C}}$. In addition, if $A\in\mathcal{A}_{\mathcal{C}}$, then $\cV(A)\cap\mathcal{C}$ is a nonempty open subset of $\mathcal{C}$ and $\cN$ has measure zero in $\mathcal{C}$, so $(\cV(A)\cap\mathcal{C})\setminus\cN$ is not empty, and for $V$ in this set $\bm w(0^+,V)=\bm w_A$ by Lemma~\ref{lem:selection}. Hence $\bm w(0^+,V)$ takes exactly the finitely many values $\bm w_A$, $A\in\mathcal{A}_{\mathcal{C}}$, and the smallest validation loss among them is the global minimum.
\end{proof}

\section{The Full Positive Definite Class in the Applications}
\label{app:pd}

This appendix reports results under the full positive definite class for the two applications of Sections~\ref{sec:numerical} and~\ref{sec:temporal_split}. By Proposition~\ref{prop:visible_facets}, the attainable synthetic controls under $\mathbb{S}^K_{++}$ are the closed visible facets. There are 132 visible facets in the California application, and 39 in the West Germany application. Tables~\ref{tab:pd_ca} and~\ref{tab:pd_ger} compare the resulting donor weights, validation losses, and estimated effects with the diagonal-class solutions of the main text. Figures~\ref{fig:pd_ca} and~\ref{fig:pd_ger} display the corresponding synthetic controls.

For the California example, the full class improves the validation loss from $58.5$ to $54.7$. The selected donors stay unchanged with some changes in the weights. The estimated effect of Proposition 99 is essentially unchanged. 

For the West Germany application, the validation loss falls by more than half, and the selected donors change. Austria's weight falls from $0.51$ to $0.19$, the United States and Australia drop out, and the Netherlands, Norway, and Belgium enter the mix. The estimated effect, however, remains substantially unchanged.

\begin{table}[t]
\centering
\caption{Donor weights for synthetic California: diagonal and full positive definite classes}
\label{tab:pd_ca}
\begin{tabular}{lcc}
\hline
State & Diagonal class & Full positive definite class \\
\hline
Utah        & 0.335 & 0.319 \\
Nevada      & 0.236 & 0.248 \\
Montana     & 0.202 & 0.302 \\
Colorado    & 0.160 & 0.028 \\
Connecticut & 0.068 & 0.102 \\
\hline
Validation loss (1970-1988) & 58.46 & 54.70 \\
Mean post-treatment gap, 1989-2000 (packs) & $-19.0$ & $-19.8$ \\
\hline
\end{tabular}
\tabnotes{The first column reports the exact global solution over the closure of the attainable set of the positive diagonal class (Table~\ref{tab:donor-weights}). The second column reports the exact global solution over the attainable set of the full positive definite class. Both optima are supported on the same five states. The remaining 33 donors receive zero weight in both columns and are omitted.}
\end{table}

\begin{figure}[t]
    \centering
    \caption{California under the diagonal and full positive definite classes}
    \label{fig:pd_ca}
    \includegraphics[width=0.95\linewidth]{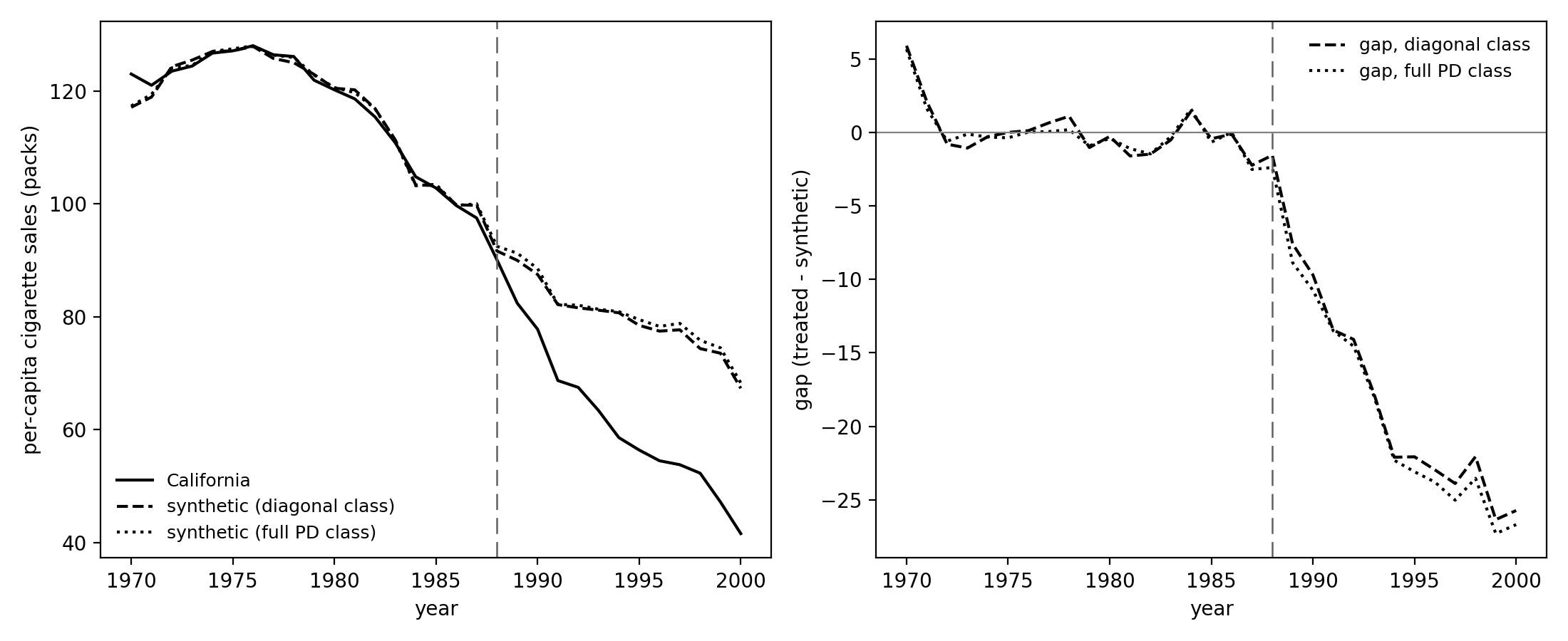}
    \tabnotes{Left panel: per-capita cigarette sales in packs for California, and for the exact synthetic controls of the diagonal and full positive definite classes. Right panel: gaps between California and each synthetic control. The vertical line at 1988 marks the last pre-treatment year.}
\end{figure}

\begin{table}[t]
\centering
\caption{Donor weights for synthetic West Germany: diagonal and full positive definite classes}
\label{tab:pd_ger}
\begin{tabular}{lcc}
\hline
Donor & Diagonal class & Full positive definite class \\
\hline
Austria      & 0.507 & 0.191 \\
Switzerland  & 0.166 & 0.301 \\
Japan        & 0.146 & 0.206 \\
USA          & 0.135 & 0.000 \\
Australia    & 0.045 & 0.000 \\
Netherlands  & 0.000 & 0.133 \\
Norway       & 0.000 & 0.161 \\
Belgium      & 0.000 & 0.009 \\
\hline
Validation loss (1981-1990) & 45{,}804 & 20{,}287 \\
Mean post-treatment gap, 1991-2003 (USD) & $-1{,}391$ & $-1{,}582$ \\
\hline
\end{tabular}
\tabnotes{Temporal-split design of Section~\ref{sec:temporal_split}. The first column reports the exact global solution over the closure of the attainable set of the positive diagonal class (Table~\ref{tab:germany_weights}). The second column reports the exact global solution over the attainable set of the full positive definite class. Donors receiving zero weight in both columns are omitted.}
\end{table}

\begin{figure}[t]
    \centering
    \caption{West Germany under the diagonal and full positive definite classes}
    \label{fig:pd_ger}
    \includegraphics[width=0.95\linewidth]{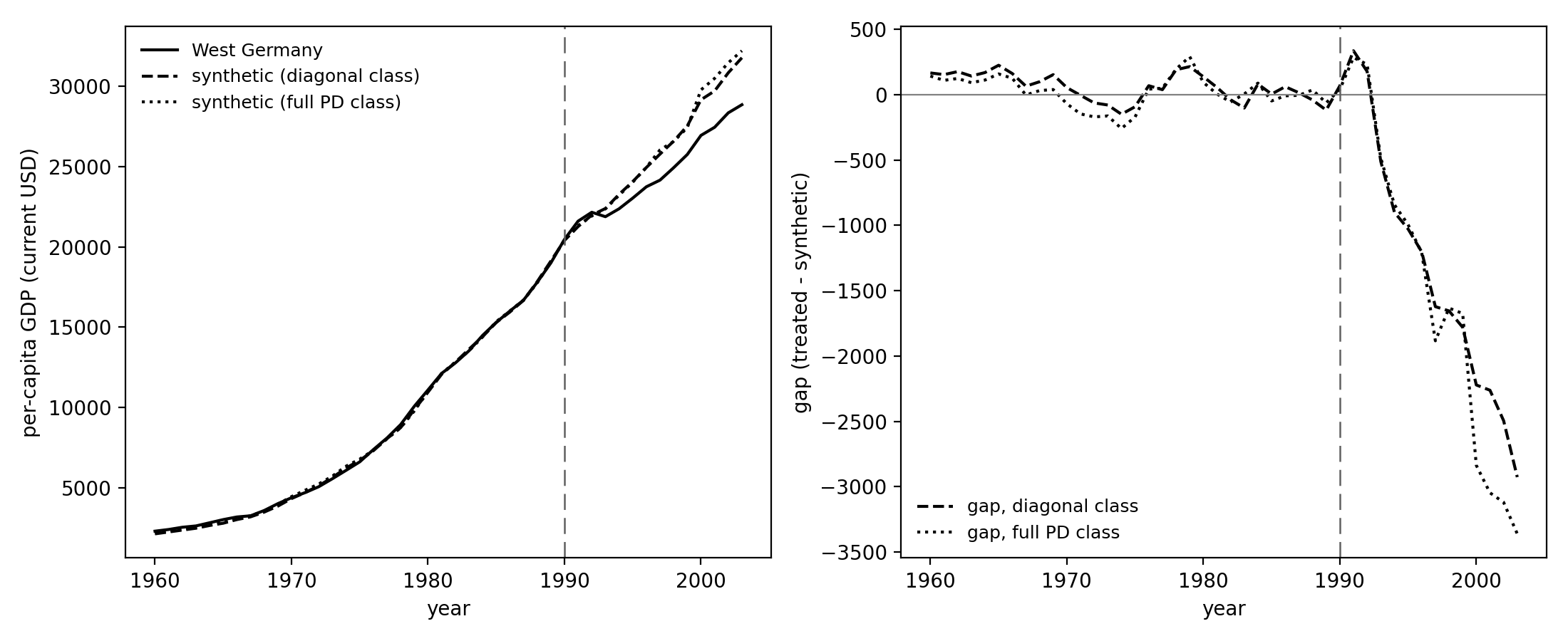}
    \tabnotes{Left panel: per-capita GDP in PPP-adjusted current U.S. dollars for West Germany, and for the exact synthetic controls of the diagonal and full positive definite classes. Right panel: gaps between West Germany and each synthetic control. The vertical line at 1990 marks the German reunification.}
\end{figure}

\makeatletter\setlength{\@fptop}{0pt}\setlength{\@fpsep}{\floatsep}\makeatother
\clearpage
\bibliographystyle{econ}
\bibliography{Bibliography}
\addcontentsline{toc}{section}{Reference}
\end{document}